\newcommand{\st}[1]{\sout{#1}}
\title{
Identifying Causal Effects using Instrumental Time Series: Nuisance IV and Correcting for the Past}
\author{Nikolaj Thams \email thams@math.ku.dk \AND Rikke Søndergaard \email rsn@math.ku.dk \AND Sebastian Weichwald \email sweichwald@math.ku.dk \AND Jonas Peters \email jonas.peters@math.ku.dk \AND 
        \addr{Department of Mathematical Sciences} \\ \addr{University of Copenhagen} \\ \addr{Denmark}}
\date{\today}
\crefname{assumption}{}{}%
\Crefname{assumption}{}{}%
\newtheorem*{assumption*}{Assumption} %
\newcommand{\creflastconjunction}{, and~}
\setlist[enumerate,1]{label={\roman*)}}
\newlist{assumpenum}{enumerate}{1} %
\setlist[assumpenum]{label={(A\arabic*)},ref=(A\arabic*)}
\newlist{alt-assumpenum}{enumerate}{1} %
\setlist[alt-assumpenum]{label={(A\arabic*')},ref=(A\arabic*')}
\newlist{alt-alt-assumpenum}{enumerate}{1} %
\setlist[alt-alt-assumpenum]{label={(A\arabic**)},ref=(A\arabic**)}
\crefname{requirement}{requirement}{requirements}
\Crefname{requirement}{Requirement}{Requirements}
\newlist{civenum}{enumerate}{1} %
\setlist[civenum]{label={(CIV\arabic*)},ref=(CIV\arabic*),leftmargin=\widthof{(CIV3)}+\labelsep}
\newlist{alt-civenum}{enumerate}{1} %
\setlist[alt-civenum]{label={(CIV\arabic*')},ref=(CIV\arabic*'),leftmargin=\widthof{(CIV3)}+\labelsep}
\newcommand{\R}{\mathbb{R}}
\newcommand{\N}{\mathbb{N}}
\newcommand{\Z}{\mathbb{Z}}
\newcommand{\X}{{\mathbf X}} %
\renewcommand{\S}{{\mathbf S}} %
\newcommand{\cX}{\mathcal{X}}
\newcommand{\cY}{\mathcal{Y}}
\newcommand{\cI}{\mathcal{I}}
\newcommand{\cB}{\mathcal{B}}
\newcommand{\cZ}{\mathcal{Z}}
\newcommand{\bX}{\mathbf{X}}
\newcommand{\bY}{\mathbf{Y}}
\newcommand{\bI}{\mathbf{I}}
\newcommand{\bB}{\mathbf{B}}
\newcommand{\bZ}{\mathbf{Z}}
\newcommand{\br}{\mathbf{r}}
\newcommand{\G}{\mathcal{G}}
\newcommand{\Gfull}{\mathcal{G}_{\text{full}}}
\DeclareMathOperator{\CIV}{CIV}
\DeclareMathOperator{\IV}{IV}
\DeclareMathOperator{\NIV}{NIV}
\newcommand{\iv}[2]{\IV_{#1\rightarrow #2}}
\newcommand{\civ}[2]{\CIV_{#1\rightarrow #2}}
\newcommand{\niv}[2]{\NIV_{#1\rightarrow #2}}
\newcommand{\ind}[1]{\mathbbm{1}_{#1}} %
\newcommand{\E}{\mathbb{E}}
\newcommand{\mat}[1]{\begin{pmatrix} #1 \end{pmatrix}}
\newcommand{\convP}{\stackrel{P}{\longrightarrow}}
\renewcommand{\vec}[1]{\left[ #1 \right]}
\newcommand{\vecin}[1]{[ #1 ]}
\newcommand{\cor}[1]{\langle #1 \rangle}
\newcommand{\indep}{\mbox{${}\perp\mkern-11mu\perp{}$}}
\newcommand{\notindep}{\centernot{\indep}\hspace{-2pt}}
\renewcommand{\epsilon}{\varepsilon}
\DeclareMathOperator*{\cov}{\operatorname{cov}}
\DeclareMathOperator*{\var}{\operatorname{var}}
\DeclareMathOperator*{\argmin}{{\arg\min}}
\DeclareMathOperator*{\rank}{\operatorname{rank}}
\DeclareMathOperator*{\diag}{{\operatorname{diag}}}
\DeclareMathOperator{\HAC}{HAC}
\DeclareMathOperator{\ID}{Id}
\definecolor{MI_red}{HTML}{FC4F2C}
\definecolor{MXY_blue}{HTML}{135BB3}
\definecolor{highlight_green}{HTML}{429D6F}
\definecolor{highlight_yellow}{HTML}{E9B444}
\newcommand{\PA}{\operatorname{PA}}
\newcommand{\AN}{\operatorname{AN}}
\newcommand{\ND}[1]{\operatorname{ND}(#1)}
\newcommand{\DE}[1]{\operatorname{DE}(#1)}
\newcommand{\mylabel}[2]{#2\def\@currentlabel{#2}\label{#1}}
\begin{document}

\maketitle
\begin{abstract}
    Instrumental variable (IV) regression relies on instruments to infer causal effects from observational data with unobserved confounding. We consider IV regression in time series models, such as vector auto-regressive (VAR) processes. Direct applications of i.i.d.\ techniques are generally inconsistent as they do not correctly adjust for dependencies in the past. In this paper, we {outline the difficulties that arise due to time structure and} propose methodology for constructing identifying equations that can be used for consistent parametric estimation of causal effects in time series data. 
    One method uses extra nuisance covariates to obtain identifiability (an idea that can be of interest even in the i.i.d.\ case).
    We further propose a graph marginalization framework that allows us to apply nuisance {IV} and other IV methods in a principled way to time series. Our methods make use of a version of the global Markov property, which we prove holds for VAR{(p)}  processes. For VAR(1) processes, we prove identifiability conditions that relate to Jordan forms and are different from the well-known rank conditions in the i.i.d.\ case (they do not require as many instruments as covariates, for example). We provide methods, prove their consistency, and show how the inferred causal effect can be used for distribution generalization. Simulation experiments corroborate our theoretical results. We provide ready-to-use Python code.
\end{abstract}
\begin{keywords}
    causality, time series, instrumental variables, VAR processes, Markov property, distribution generalization 
\end{keywords}

\section{Introduction}\label{sec:intro}
Predicting a response variable $Y$ from observations of covariates $X$ may be insufficient to answer a scientific question at hand. 
Instead, we may wish to model how the response variable $Y$ reacts to an intervention on $X$.
Such modeling requires causal knowledge.
For example, for i.i.d.\ data from a linear model 
$Y := \beta X + g(H, \epsilon^Y)$, it is well-known that an ordinary least squares (OLS) regression of Y on X generally yields a biased estimator of the linear causal effect $\beta$ from $X$ on $Y$ when an unobserved {variable} $H$ confounds $X$ and $Y$.
Instead, we may obtain unbiased estimates of $\beta$ by utilising instrumental variables (IVs) $I$ that correlate with the covariates $X$, are independent of $H$, and affect $Y$ only indirectly through $X$. 
IV regression, pioneered by \citet{wright1928tariff} and \citet{reiersol1945confluence}, is well-established in econometrics \citep{angrist1996identification,staiger1997instrumental,angrist2001instrumental}, statistics 
\citep{bowden1990instrumental} and epidemiology \citep{hernan2006instruments,didelez2010assumptions}. 
One approach for IV estimation in the linear i.i.d.\ model is the two-stage least squares (TSLS) estimator \citep{angrist1995two}, 
which first estimates the effect from $I$ to $X$ (stage 1) and then regresses $Y$ on the fitted values from the first regression (stage 2). 
Another formulation, used by \citet{hansen1982large}, is the generalized method of moments (GMM), which uses the independence of the residual $Y - \beta X = g(H, \epsilon^Y)$ from the instrument $I$:
One can estimate $\beta$ by selecting $\hat{\beta}$ such that the empirical correlation between $Y - \hat{\beta} X$ and $I$ is minimized. If the dimension of $I$ is greater than or equal to the dimension of $X$, these estimators are consistent \citep[e.g.,][]{hall2005generalized}.

In more recent approaches, causality and directed acyclic graph (DAG) representations have proved fruitful for studying instrumental variables for i.i.d.\ data \citep{pearl2009causality, hernan2006instruments,didelez2010assumptions}. \citet{brito2002generalized} proposed `generalized IV', a graphical framework that 
enlarges the class of graphical models, in which IV methods can be used to identify causal effects. Similarly, `conditional IV' \citep{pearl2009causality} relaxes the assumptions of IV by considering a conditional moment equation \citep[see also][]{henckel2021graphical}. 
Moreover, \citet{kang2016instrumental} demonstrate that identification and consistent estimation are possible when at least half of the instruments are valid, even without knowing which ones are invalid.

In many real-world applications (see \citet{weigend2018time} for examples from various fields), the data are sampled not independently but rather as a time series that exhibits memory effects, with past values affecting present ones.
For example, 
both price and demand on an electricity market are confounded by several factors, which makes estimation of price 
elasticity of demand (that is, the response in the demand to changes in price) 
difficult to identify.
In this example, wind power penetration of the market may act as instrumental variable, since it affects the energy supply and in turn quantity purchased at that price \citep{neamtu2016wind, Hirth2022very}; 
it can be justified as a valid instrument since the daily amount of generated wind power depends on external weather conditions, but not other aspects of the market supply or demand.
This is similar to how yield per acre acted as 
`curve shifter'
in the seminal work of \citet{wright1928tariff}
for estimating the price elasticity of demand for flaxseed.

Using IV methods in time series data poses a number of challenges {that are not present when considering IV methods for i.i.d.\ data}. For example, memory effects in the observed processes $X$, $Y$ and $I$ can obfuscate the assumption that $I$ only affects present values of $Y$ through the present value of $X$, because $I$ and $Y$ are confounded by common ancestors in the memory of the process.
{Later in this introduction, we show with a concrete example how these challenges arise in linear models.}
Additionally, memory effects, or serially correlated errors, in the confounder process $H$ can make identification of the dependence on past states of the process difficult; 
for such settings, \citet{fair1970estimation} proposes a search-based method for a subclass of first order vector auto-regressive (VAR) processes. If one is provided with identifying equations with serially correlated errors (such as the ones proposed in this paper), \citet{newey1987simple} construct confidence intervals by using heteroskedasticity and auto-correlation consistent (HAC) estimators to estimate long-run covariance matrices. 

{
To overcome the challenges for IV methods in time series data,
we establish a link between graphical models and IV methods for time series, which we then exploit to construct estimators and prove consistency. To help build this connection, we require that the global Markov property \citep{lauritzen1996graphical} holds in VAR($p$) processes. 
We prove this statement as \cref{thm:gmp}.
To the best of our knowledge this result 
is more general than existing results \citep[e.g.,][]{dahlhaus2003, lauritzen1996graphical}; its proof requires technical arguments taking into account that the graphs contain infinitely many nodes.}

{Causal inference on time series data has been considered before. E.g., causal inference can then be done using the principle of 
Granger causality \citep[e.g.,][]{Wiener1956,Granger1969, Granger1980,dahlhaus2003, Eichler2007, didelez2010assumptions} 
but such methods usually fail when some of the variables are unobserved \citep[e.g.,][Chapter~10.3]{peters2017elements}. 
It is also possible to extend independence-based methods
\citep{spirtes2000causation, pearl2009causality} 
to time series 
\citep{DemiralpHoover,Chu2008, Entner2010, moneta11}
but as in the i.i.d.\ case, these methods cannot exploit identifiability that stems from IV conditions. 
Adjustment formulas (and its modifications such as the front-door criterion) have been extended to time series with known graph structure \citep{eichlerdidelez} but the specific form of confounding 
assumed in our setting
does not allow for consistent estimation of the causal effect using these techniques -- again, this is similar to the i.i.d.\ case.
{\citet{michael2020instrumental} use instrumental variables for longitudinal data; different from our work they consider binary instruments and primarily consider a setting where time series are observed multiple times (for example each corresponding to a patient), though the marginal structural mean model \citep{robins1997causal} that they employ also allows  for estimation within a single observation of a time-series. 
Their target of inference differs to ours in that \citet{michael2020instrumental} consider interventions on the whole treatment time series. 
The structural assumptions (such as independence of instrument and confounder) underlying their method are similar to those required by the estimators we develop below, though generally not identical.

Recently, \citet{mogensen2022instrumental} uses integrated covariances to conduct instrumental variable estimation of `normalized parameters', and shows that these parameters have causal interpretations in both discrete time series and continuous time Hawkes processes. 
}

Throughout this work we consider a joint process $S := \vec{I_t^\top, X_t^\top, H_t^\top, Y_t^\top}^\top_{t\in\Z}$ where $H$ is latent. 
{As a motivating example, let $S$ be a linear VAR($1$) process (though we generally consider VAR($p$) processes), with dependencies represented in \cref{fig:sub3}, and assume that we observe a finite subsample $S_1, \ldots, S_T$ from a single instantiation of this process.
Our goal is to estimate the coefficient $\beta$ with which $X_{t-1}$ linearly enters into $Y_t$. 
When $S$ is fully observed, estimators that are consistent and asymptotically normal for the standard form parameters of the VAR($1$) process $S$ exist \citep[e.g.,][]{hamilton1994time}. Yet, in our setting, $H$, the confounding variable between $X$ and $Y$, is unobserved, and such estimators are then generally not consistent.
In this work, we seek to overcome this hidden confounding by using $I_{t-2}$ as an instrument. 
Estimation in the time series setting is complicated by memory effects not present in the i.i.d. setting:
For example, the instrument $I_{t-2}$ is correlated to $Y_t$ not only through the path $I_{t-2} \rightarrow X_{t-1} \rightarrow Y_t$ but also through an infinite number of paths in the past, due to common ancestors $I_{t-j}, j \geq 3$ in the instrument process $I$. This correlation violates the assumption that the instrument $I_{t-2}$ only correlates with $Y_t$ through $X_{t-1}$. Similarly, there are also observed confounders of $X_{t-1}$ and $Y_t$, such as $X_{t-2}$ which is a common ancestor of the two. 
}
\begin{figure}[t]
    \centering
    \resizebox{\textwidth}{!}{%
\usetikzlibrary{math} %
\begin{tikzpicture}[>=latex,font=\sffamily]
    \tikzstyle{every path} = [thin, ->];

    \foreach \i in {-1, ..., 4}{
        \foreach \P [count = \j] in {I, filler, X, H, Y}{
            \ifthenelse{\j=2}{}{
            \node (\P\i) at (15 - 3*\i, 4-3/4*\j) {\ifthenelse{\i=-1 \OR \i=4}{}{
            $\P_{\ifthenelse{\i=0}{t}{t-\i}}$
            }}};
        }
    }

    \foreach \i [count = \j from 0] in {-1, ..., 3}{

        \foreach \P in {I, X, Y}{
            \path (\P\j) edge[opacity={(\j==4||\j==0) ? 0.7 : 1}] (\P\i);
        }
        \path[gray] (H\j) edge[opacity={(\j==4||\j==0) ? 0.7 : 0.9}, dashed] (H\i);

        \path (I\j) edge[opacity={(\j==4||\j==0) ? 0.7 : 1}] (X\i);
        \path[highlight_green] (X\j) edge[line width=1.15pt,opacity={(\j==4||\j==0) ? 0.7 : 1}] node[above]{$\beta$} (Y\i);
        \path[gray] (H\j) edge[opacity={(\j==4||\j==0) ? 0.7 : 0.9}, dashed] (X\i);
        \path[gray] (H\j) edge[opacity={(\j==4||\j==0) ? 0.7 : 0.9}, dashed] (Y\i);
        \path (Y\j) edge[opacity={(\j==4||\j==0) ? 0.7 : 1}] (X\i);
    }
    
\end{tikzpicture}
     }%
    \caption{{Finite excerpt of a} full time graph (formally defined in \cref{sec:graph_representation}) of a {VAR($1$)} process $S$ satisfying \cref{assump:iv} below. 
    {We assume that each time point is observed only once.}
    {Our methodology} estimate{s} the causal effect $\beta$ (highlighted in green) of $X_{t-1}$ on $Y_t$, where $X = [X_t]_{t\in\Z}$ and $Y=[Y_t]_{t\in\Z}$ are subprocesses of $S$, that are confounded by a latent process $H = [H_t]_{t\in\Z}$ {(so that, in general, standard regression yields an inconsistent estimator)}. 
    Motivated by instrumental variables, one may aim to exploit the subprocess $I = [I_t]_{t\in\Z}$ that is independent of $H$ and only acts on $Y$ through $X$. However, simply using $I_{t-2}$ as an instrument is generally inconsistent; the same holds when adding $X_{t-2}$ and $Y_{t-1}$ as a conditioning set, for example. This paper develops a graphical framework giving rise to several estimators {which consistently estimate $\beta$.} 
    }
    \label{fig:sub3}
\end{figure}

{
To obtain valid identifying IV methods for VAR($p$) processes $S$, such as the one shown in \cref{fig:sub3}, 
we establish a general graph marginalization technique
allowing us to read off the relevant separation statements from the reduced graphs.
}
Based on these results, we outline two solutions {that} identify {total} causal effects in {the considered} time series. The first solution ({which is based on} `conditional IV' or `CIV') identifies $\beta$ using IV conditioned on
{some subset of past states of the time series.} {In the example above, we will see that, e.g., $I_{t-2}\indep Y_t-\beta X_{t-1} | \{I_{t-3}, X_{t-2}, Y_{t-1}\}$ is an identifying equation.}

The second solution is based on
{a modification of IV that} can be used not only for time series but also for i.i.d.\ data. 
{It adds nuisance treatment variables to the target causal effect and thereby allows for stronger identifiability results. This is a straightforward idea but we are not aware that this has been discussed explicitly, so we suggest to call it `nuisance IV' (or `NIV').}
Applied to the time series setting, nuisance IV yields a consistent estimator for the target of inference $\beta$ by 
{including nuisance regressors, such as $Y_{t-1}$, that also affect $Y_t$.
In the example above, it yields the identifying equation $\{I_{t-2}, \ldots, I_{t-m-1}\} \indep Y_t - \beta X_{t-1} - \kappa Y_{t-1}$ for some $m$. As for the first solution, we will detail the reasoning behind these equations and conditions for identifiability.} 

Similar to the i.i.d.\ case, these two approaches 
induce identifying moment equations that are satisfied by the {total} causal effect $\beta$. Rank conditions guarantee that their solution is unique, allowing us to identify the causal effect. 
Unlike in the i.i.d.\ case, however, the standard conditions are not easily interpretable in the time series setting. 
We therefore develop sufficient and necessary conditions on the parameters of the data-generating process that provide insight on when identifiability holds. 
Our results imply that identifiability with nuisance IV depends on geometric multiplicities of eigenvalues of the parameter matrix in the VAR{($1$)} process, and we show that if parameters are drawn at random from a continuous distribution, the causal effect $\beta$ of $X_{t-1}$ on $Y_t$ is almost surely identifiable.
In particular, it is possible to identify the causal effect even if the instrument $I$ is univariate and the regressor $X$ is multivariate.
For both of the approaches (conditional IV and nuisance IV), we propose estimators and prove that, in case of identifiability, these estimators consistently estimate the causal effect.

Finally, we apply our findings to the task of distribution generalization \citep[e.g.,][]{Christiansen2020DG,jakobsen2020distributional,
meinshausen2018causality,rothenhausler2018anchor}.
In many systems, the causal effects are of value in themselves because they contribute to the understanding of the system but it also serves a purpose when predicting $Y_{t+1}$ under an intervention on $X_t$. 
In a linear setting, the OLS estimator has the smallest expected mean squared error (MSE) among all linear predictors when predicting new test data from the observational distribution. However, as is known for the i.i.d.\ setting \citep[e.g.,][]{Rojas2016}, causal estimators can have better worst-case predictive performance when {there may be} interventions on the covariates. Similarly, we show that in time series, under arbitrary interventions on $X_t$, our IV estimators are worst-case prediction optimal for $Y_{t+1}$.

Our work is structured as follows.
\cref{sec:setup} introduces the model and the assumptions considered in this paper; we review graphical representations of time series models and prove that the global Markov property holds for VAR($p$) processes.
In \cref{sec:niv}, we review theory on conditional instrumental variables and introduce the concept of nuisance IV. 
Our main results for instrumental variable regression for time series are presented in \cref{sec:iv-for-time-series}. We propose two approaches to overcome confounding from past values yielding identifying equations for the {total} causal effect: the first one is based on CIV and the second one uses NIV. For the latter, we characterize identifiability of the causal parameter in terms of parameters of the data-generating process. 
We also discuss how to use the causal effect to perform optimal prediction of $Y_{t+1}$ under interventions on $X_t$. In \cref{sec:simulations} we empirically evaluate our method. All proofs are provided in \cref{sec:all_proofs}.
Code can be found at~\url{https://github.com/nikolajthams/its-time}.

\section{Causal Time Series Models with Confounding}\label{sec:setup}
\subsection{Definitions and Notation}\label{sec:def_notation}
We consider multivariate time series $X\coloneqq [X_t]_{t\in\Z}$ and $I \coloneqq [I_t]_{t\in\Z}$, 
a univariate process $Y\coloneqq [Y_t]_{t\in\Z}$,
and an unobserved multivariate process $H \coloneqq [H_t]_{t\in \Z}$. Let $d_X$ be the dimensionality of $X_t$, that is $X_t\in \R^{d_X}$, and similarly for $d_I$, $d_Y$ and $d_H$, with $d_Y = 1$. Let $S \coloneqq [S_t]_{t\in\Z} = [I_t^\top, H_t^\top, X_t^\top, Y_t^\top]^\top_{t\in\Z}$, $S_t \in \R^{d}$, with $d := d_X + d_Y + d_I + d_H$.

{In general,} our results are presented for VAR($p$) processes \citep[e.g.,][]{Brockwell1991}, {though we apply our theory to the particular example of a VAR($1$) process in \cref{se:meth1,se:meth2,sec:prediction_intervention}}. Many of the results hold more generally {than in VAR($p$) processes, which we discuss in} \cref{sec:relaxing-varp}.
Given a (known) $p \in \N$, we say that a (weakly) stationary process $S$ is a VAR($p$) process if the following assumptions hold: 
\begin{assumpenum}
\addtocounter{assumpenumi}{-1} %
\item\label{assump:varp-woutGaussian}  There are coefficient matrices $A_1, \ldots, A_p \in \R^{d\times d}$ 
    such that for all $t \in \Z$:
    \begin{equation}
            S_t = A_1 S_{t-1} + \ldots + A_p S_{t-p} + \epsilon_t, \label{eq:varp}
    \end{equation}
    where $A_1, \ldots, A_p$ are such that\footnote{A VAR($p$) process satisfying this condition is sometimes referred to as a causal VAR($p$) process \citep{Brockwell1991}; \citet{peters2013causal} discuss a relation between this causality property and the independent noise assumption in SCMs.}
    $\mathrm{det}(I_d\lambda^p-A_1\lambda^{p-1}-A_2\lambda^{p-2}-...-A_p)=0$ implies $|\lambda|<1$ and 
    the $\epsilon_t$ constitute an i.i.d.\ process with finite second moments.
    \item\label{assump:varp}  \ref{assump:varp-woutGaussian} is satisfied and, in addition $\epsilon_t \sim \mathcal{N}(0, \Gamma)$, where $\Gamma$ is a diagonal matrix. 
\end{assumpenum}
(That is, unless stated otherwise, we assume that the VAR processes in this paper are Gaussian.) We use the notation $\alpha_{X, I}^1$ to refer to the submatrix of $A_1$ with rows corresponding to $X$ and columns corresponding to $I$ (see \cref{fig:sub1} for an example), and similarly $\alpha_{I,I}^2$ etc.
We use superscripts to denote individual components of $\varepsilon$, e.g., $\varepsilon^Y$.
{
We consider $Y_t$ as the \emph{response variable}, and aim to estimate the total causal effect (TCE) of covariates on $Y_t$. 
For  $1\leq i, j\leq d$ and $l\in\N$, the TCE of $S^i_{t-l}$ on $S^j_t$ is defined as
\begin{equation*}
    \beta\coloneqq\bigg(\sum_{\substack{1 \leq l_1, \ldots, l_m \leq p \\ l_1 + \cdots + l_m = l}} A_{l_1}\cdots A_{l_m}\bigg)_{j,i};
\end{equation*}
see \cref{sec:appendix-TCE} for more details on defining the TCE.
An important example is when we use $X_{t-1}$ as covariates; in this case, the TCE equals $\beta = \alpha^1_{Y,X}$, which is also called the \emph{direct causal effect} of $X_{t-1}$ on $Y_t$. Unless specified otherwise, all causal effects in this paper refer to total causal effects.\footnote{The notions of causal effect and total causal effect are motivated by interpreting the VAR{($p$)} equations as a structural causal model (SCM), which we explain in detail in \cref{sec:interventions}. The interventional
interpretation of an SCM is not required for any results of the paper, except for the ones presented in \cref{sec:prediction_intervention}, where we discuss optimal predictions under interventions. {Considering the model as an interventional model implies that $H$ is the only unobserved process of relevance \citep[see `interventional sufficiency'][Chapter~9]{peters2017elements}.}}
}

Both $H$ and the noise $\epsilon$ are assumed to be unobserved; while the sequence of innovations $\epsilon^Y_t$ is assumed to be i.i.d.\ and independent of $X_t$, $H$ can act as a confounder between $X$ and $Y$ and can have an autoregressive structure. 
Similar to the i.i.d.\ case \citep[e.g.,][]{hernan2006instruments,pearl2009causality,peters2017elements}, the existence of the confounder $H$ implies that we cannot identify $\beta$ by simply regressing $Y_t$ on $X_{t-1}$. 
In \cref{sec:obs_equiv}, we provide an example of two VAR($1$) processes with two different parameter matrices that generate the same distribution over the observed time series. 

We assume that the process has zero mean\footnote{Since we can always subtract empirical means, the assumption of vanishing means does not come with any loss of generality.}, so for instance $\operatorname{cov}(X_t, Y_t) = \E\{X_tY_t^\top\} \in \R^{d_X \times d_Y}$.
We assume that the data are sampled as follows: We obtain a sample over time points $t = 1, \ldots, T$ of $S$ such that for $t=1$, $S_1$ follows the stationary distribution.
We denote the sample with boldface $\mathbf{S} = [\mathbf{S}_t]_{t=1}^T$ such that $\mathbf{S} \in \R^{d \times T}$ and each column $\mathbf{S}_t$ represents the process observation at time $t$.
We let $\hat\E \mathbf{S}_t \coloneqq \frac{1}{T}\sum_{t=1}^T \mathbf{S}_t$
denote the empirical mean of the {process} (here, the index $t$ in $\hat\E \mathbf{S}_t$ does not refer to any specific time point). 
From \cref{assump:varp} it follows \citep[Chapter~10]{hamilton1994time} that {for $T\to\infty$}
\begin{equation}\label{eq:assump_moments}
            \hat{\E} \mathbf{S}_t \stackrel{P}{\longrightarrow}
            \E S_t
            \quad \text{and}\quad
            \hat{\cov}\{\mathbf{S}_{t-j}, \mathbf{S}_t\} \coloneqq \frac{1}{T-j} \sum_{t=j+1}^T\mathbf{S}_{t-j}\mathbf{S}_t^{\top}
            \stackrel{P}{\longrightarrow}
            \cov\{S_{t-j}, S_t\}.
\end{equation}
{
Even though we assume that only a finite time window $\mathbf{S}$ is observed, we assume for simplicity that the data are generated (but not observed) over all time points $t\in\Z$, to ensure that subprocesses of $S$ are correlated in the same way throughout time. The same assumption is common in the literature \citep[e.g.][]{Brockwell1991,hamilton1994time}.

}
Finally, in the case where $S$ is a VAR($1$) process, we will sometimes assume additional structure on the coefficient matrix. 
\begin{assumpenum}[resume]
    \item\label{assump:iv} Assume that $S$ satisfies \cref{assump:varp} for $p=1$ and that $A_1$ has the sparsity structure displayed in \cref{fig:sub1}.
\end{assumpenum}
Under \cref{assump:iv}, none of the other time series components enters the assignment for $I$ {and $I$ itself neither appears in the assignments for $Y$ nor in the one of $H$; the assignment for $Y$ reduces to 
\begin{equation*}
    Y_t = \alpha_{Y,Y} Y_{t-1}
    + \beta X_{t-1}
    + \eta_t
\end{equation*}
with 
$\eta_t := \alpha_{Y,H} H_{t-1}
+ \varepsilon^{Y}_t$. } We refer to $I$ as an \emph{instrumental time series}.
{
As in the i.i.d.\ case, \cref{assump:iv} cannot in general be tested from data and must therefore rest on background knowledge.}

\begin{figure}[t]
\centering
\begin{subfigure}[t]{.47\textwidth}
\centering
\begin{tikzpicture}
    \node at (0,0) {
$A_1 = \; \begin{matrix} \text{{\tiny I}} \\ \text{{\tiny H}} \\ \text{{\tiny X}} \\ \text{{\tiny Y}}\end{matrix}
    \mat{\alpha_{I,I} & 0 & 0 & 0 \\
	    0 & \alpha_{H,H} & 0 & 0 \\
	 \alpha_{X,I} & \alpha_{X,H} & \alpha_{X,X} & \alpha_{X,Y}\\
	 0 & \alpha_{Y,H} & \beta & \alpha_{Y,Y}
	 }$
	 };
    \end{tikzpicture}
    \caption{
    Matrix block structure of $A_1$ assumed in \cref{assump:iv}. 
    }
    \label{fig:sub1}
\end{subfigure}\quad%
\begin{subfigure}[t]{.47\textwidth}
  \centering
  		\begin{tikzpicture}[>=latex,font=\sffamily]
			\node[draw, circle] (I) at (-1.5, 0) {$I$};
		    \node[draw, circle] (X) at (0,0) {$X$};
		    \node[draw, circle, gray, dashed] (H) at (1.5, 1.5) {$H$};
		    \node[draw, circle] (Y) at (3, 0) {$Y$};
		    \path [->] (I) edge node[below]{$\alpha_{X,I}$} (X);
		    \path [->, dashed, gray] (H) edge node[above left, gray]{$\alpha_{X,H}$} (X);		  
		    
		    \path [->, dashed, gray] (H) edge node[above right, gray]{$\alpha_{Y,H}$} (Y);		    
		    \path [->] (X) edge[bend right=10] node[below]{$\beta$} (Y);
		    \path [->] (Y) edge[bend right=10] node[above]{$\alpha_{X,Y}$} (X);
		    \path [->] (X) edge [loop below] node[below]{$\alpha_{X,X}$} (X);
		    \path [->] (Y) edge [loop below] node[below]{$\alpha_{Y,Y}$} (Y);
		    \path [->] (I) edge [loop below] node[below]{$\alpha_{I,I}$} (I);
		    \path [->, dashed, gray] (H) edge [loop left] node[left, gray]{$\alpha_{H,H}$} (H);
		\end{tikzpicture}   \caption{Summary time graph of a VAR($1$) process satisfying \cref{assump:iv}.}
  \label{fig:sub2}
\end{subfigure}%
\caption{The sparsity structure on the parameter matrix $A_1$ assumed in \cref{assump:iv}, and a representation of the graphical structure induced by $A_1$.
Zeros in panel a) correspond to absent edges in panel b).
}
\label{fig:representation_A}
\end{figure}
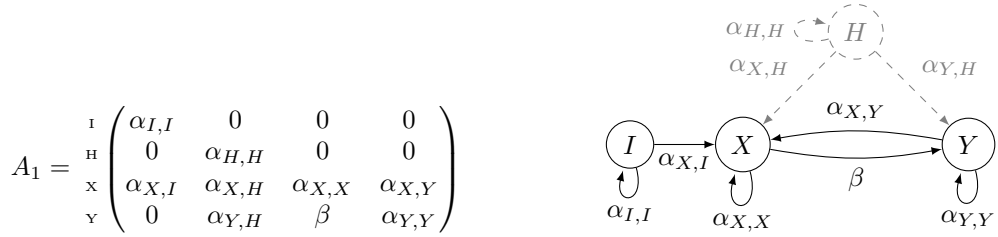

\subsection{Graph Representations of VAR\texorpdfstring{{($p$)}}{} Processes}\label{sec:graph_representation}
\cref{eq:varp} can be represented {by a directed graph. We choose a representation with infinitely many nodes that follows directly from the structural coefficients and is similar or identical to other representations that have been discussed in the literature \citep[see, e.g.,][]{dahlhaus2003, peters2013causal}. The graph} will prove helpful when establishing identifying equations for causal effects and constructing consistent estimators. 
The \emph{full time graph} \citep[e.g.,][]{peters2013causal} is defined as an infinite directed graph with nodes $I_t$, $H_t$, $X_t$, $Y_t$, for any $t \in \Z$.
For $k\in\N$, it contains a directed edge from $(j, t)$ to $(i, t+k)$, $i, j \in \{I, H, X, Y\}$, if $\alpha^k_{ij} \neq 0$.
For a {VAR($1$)} process satisfying \cref{assump:iv} an extract of this graph is shown in \cref{fig:sub3}. We define the full time graph for {higher-order} VAR{($p$)} processes accordingly. 
The \emph{summary time graph} only has a single node per time series component. It contains a directed edge from $i$ to $j$, for some $i, j \in \{I, H, X, Y\}$ if and only if the full time graph contains an edge from $(j, t)$ to $(i, t+k)$ for some $k \in \N$.
For a {VAR($1$)} process satisfying \cref{assump:iv}, such a graph is visualized in \cref{fig:sub2}. {(The difference between the full time graph and a summary time graph is also sometimes referred to as `unrolling' or `unfolding'.)}
Neither of these two graphical representations requires Gaussianity (that is, the constructions only require~\cref{assump:varp-woutGaussian}).
We now introduce some standard graph terminology 
\citep[e.g.,][]{lauritzen1996graphical,koller2009probabilistic,pearl2009causality}.
A \emph{path} $p$, $p = (v_1, e_1, v_2, \ldots, e_{n-1}, v_n)$, is an alternating sequence of distinct vertices $v_i$ and edges $e_i$ such that $v_i$ and $v_{i+1}$ are connected by $e_i$.
We say that $p$ is a \emph{directed path} from $v_1$ to $v_n$ if for every $i$, $e_i$ points from $v_i$ to $v_{i+1}$.
For two nodes $v$ and $u$, we say that $u$ is a \emph{descendant} of $v$ if there exists a directed path from $v$ to $u$ and otherwise $u$ is a \emph{non-descendant} of $v$.
We write $\ND{v}$ and $\DE{v}$ for the sets of non-descendants and descendants of $v$, respectively, using the convention that neither of them contain $v$ itself.
For a path $p$ and any $i \in \{2, \ldots, n-1\}$, we say that $v_i$ is a \emph{collider on $p$} if $(v_{i-1}, e_{i-1}, v_i, e_i, v_{i+1})$ is of the form $v_{i-1}\rightarrow v_i \leftarrow v_{i+1}$ and else $v_i$ is a \emph{non-collider on $p$}. 
We say that the path $p$ is \emph{unblocked, given the set $B$} if for every non-collider $v_i$ in $p$, $v_i \notin B$ and for every collider $v_i$ on $p$, $(v_i \cup \DE{v_i})\cap B \neq \emptyset$. Otherwise, we say that $p$ is \emph{blocked} by $B$. If all paths between distinct vertices $v$ and $u$ are blocked by a set $B$ neither containing $v$ nor $u$, we say that $v$ and $u$ are \emph{$d$-separated} by $B$. 
Similarly, we say that disjoint sets $V$ and $U$ are $d$-separated by $B$ if all nodes $v\in V$ and $u\in U$ are $d$-separated by $B$.

In \cref{sec:iv-civ-var}, we also consider marginalized graphs, which are acylic directed mixed graphs (ADMGs), containing both directed ($\rightarrow$) and bidirected ($\leftrightarrow$) edges. If we define $v$ to be a collider on a path whenever two surrounding edges have arrowheads at $v$ (e.g. $u_1\leftrightarrow v \leftarrow u_2$), and define descendants only with respect to directed edges, $d$-separation also extends to ADMGs. See \citet{richardson2003markov} for details.

\subsubsection{Markov Properties of VAR\texorpdfstring{{($p$)}}{} processes}
The representation described above satisfies a Markov property, which enables us to read off conditional independences from the full time graph. This will be an important tool in our theory (and will be used in many of the proofs), because it enables the use of graphical models to develop IV methodology in time series. 
{
Markov properties of time series have been discussed before
\citep[e.g.,][]{dahlhaus2003, EichlerPTRF} 
but to the best of our knowledge, the statement of Theorem~\ref{thm:gmp}
has not been proved before, so we add a proof in this paper.
For example, some of the results in the above works consider finite graphs while the statement of Theorem~\ref{thm:gmp} 
is about graphs with infinitely many nodes.
For this reason, the result does not formally follow from standard results in, for example, \citet{lauritzen1996graphical}, either.
\begin{restatable}{theorem}{GMP}
\label{thm:gmp}
    Consider $p \in \N$,  a (weakly) stationary time series $S$ satisfying \cref{assump:varp-woutGaussian} for that $p$, and 
    consider disjoint, finite sets
    $A, B, C$ of nodes in the full time graph $\Gfull$. 
    If $A$ and $C$ are $d$-separated given $B$ in $\Gfull$, then $A \indep C | B$.
\end{restatable}
The proof of \cref{thm:gmp} 
is inspired by \citet[][Sec.\ 6]{lauritzen1990independence}.
As all other proofs in this paper, it can be found in~\cref{sec:all_proofs}. 
}

\section{Nuisance Effects in Instrumental Variable Regression}\label{sec:niv}
In this work, we establish two identifying equations for causal effect estimation in time series:
The first one is based on conditional instrumental variables (CIV) and the second one on a generalization that we term nuisance instrumental variables (NIV). 
We regard the idea of NIV as interesting in its own right, as it can be applied in the i.i.d.\ setting, too.
In this section, we therefore first review CIV regression for i.i.d.\ data, and then introduce NIV regression; instrumental variable (IV) regression is a special case of CIV regression, where the conditioning set is empty. 
In \cref{sec:iv-for-time-series}, we extend the CIV and NIV estimators to VAR{($p$)} processes via a reduction of the full time graph to a marginalized graph.

\subsection{Instrumental Variables and Conditional Instrumental Variables}\label{sec:intro-civ}
Consider a linear SCM (see \cref{sec:interventions}) over variables $V$, and let $\cI, \cX, \cB, \{Y\} \subseteq V$ be disjoint collections of variables\footnote{
    Below, the different variables will take different roles (such as instruments or regressors). We use the calligraphic notation $\cI, \cX,$ and $\cB$ to denote collections of observed variables, being used as instruments, regressors, and conditioning sets, respectively. Individual variables are denoted by non-calligraphic letters, such as $I$.
}
from $V$, and let $\G$ be the corresponding DAG. Assume that $\cI, \cX$ and $Y$ have zero mean 
and finite second moment and 
let $\beta$ be the causal coefficient with which $\cX$ enters the structural equation for $Y$, that is,
\begin{equation*}
    Y = \beta \cX + \gamma W + \epsilon^Y,
\end{equation*}
for some variables $W \subseteq V\setminus\cX$;
(some of the entries of $\beta$ can be zero, so not all variables in $\cX$ have to be parents of $Y$).
{In this setup, $\beta$ is both the total and the direct causal effect of $\cX$ on $Y$.}
We consider the following three requirements on $\cI, \cX, \cB$ and $Y$:
\begin{civenum}
    \item \label{assump:civ-d-sep} $\cI$ and $Y$ are $d$-separated given $\cB$ in the graph $\G_{\cX\not\rightarrow Y}$, 
    that is the graph $\G$ where all direct edges from $\cX$ to $Y$ are removed,
    \item \label{assump:civ-descendants} $\cB$ does not contain a descendant of 
    $\cX \cup \{Y\}$ in $\G$, and
    \item \label{assump:civ-relevance} the matrix $\E[\cov(\cX, \cI | \cB)]$ has rank $d_\cX$, that is, full row rank.
\end{civenum}
If \cref{assump:civ-d-sep,assump:civ-descendants} are met, $Y - \beta\cX \indep \cI | \cB$, and in particular $\beta$ satisfies the \emph{CIV moment equation}\footnote{
Here we use the definition $\cov(A, C|B) \coloneqq \E[AC^\top|B] - \E[A|B]\E[C^\top|B] = \cov(A - \E[A|B], C - \E[C|B])$, which even accommodates for nonlinear relationships between the variables and $\cB$.
}
\begin{equation}\label{eq:civ-moment-condition}
    \E[\cov(Y - \beta \cX, \cI|\cB)] = 0.
\end{equation}
If, additionally, \cref{assump:civ-relevance} is met,
$\beta$ is the unique solution to \cref{eq:civ-moment-condition},
\begin{equation*}
    \E[\cov(Y - b \cX, \cI|\cB)] = 0 \implies b = \beta.
\end{equation*}
(Conditional IV with univariate $\cX$ has been discussed in the literature \citep{pearl2009causality, henckel2021graphical,brito2002generalized}.
Since we allow $d_{\cX}>1$, we add a short proof in \cref{app:proofciviid}.)
In this case, we say that $\beta$ is \emph{identified by CIV} or, more precisely, \emph{identified by $\civ{\cX}{Y}(\cI | \cB)$}.
If \cref{assump:civ-relevance,assump:civ-d-sep} are satisfied for $\mathcal{B}=\emptyset$ (\cref{assump:civ-descendants} is trivially satisfied for $\cB = \emptyset$), $\civ{\cX}{Y}(\cI|\emptyset)$ reduces to instrumental variables (IV) regression
\citep{reiersol1945confluence,anderson1949estimation,bowden1990instrumental,angrist1996identification}, which we refer to as $\iv{\cX}{Y}(\cI)$.
We use the term \emph{proper} CIV when $\mathcal{B}\neq \emptyset$.

For a finite sample $\bX$, $\bY$, $\bI$, and $\bB$, we consider an empirical counterpart of \cref{eq:civ-moment-condition} which, 
however, may not have a solution in the overidentified setting, that is when $d_\cI>d_\cX$; 
to overcome this, for any positive definite weight matrix $W$, we define the estimator $\hat b(W)$ as
\begin{equation}\label{eq:civ-argmin}
    \hat b(W) \coloneqq \argmin_b \|\hat\cov(\bY - b\, \bX, \bI|\bB)\|_W^2,
\end{equation} 
where $\|x\|_W^2 := x^\top W x$ and $\hat\cov{(\bY - b\, \bX, \bI|\bB)}$ is the empirical covariance of the residuals after regressing out $\bB$.
We refer to this estimator as $\civ{\X}{\bY}(\bI| \bB)$.
If $\cI, \cX, Y$ and $\cB$ are zero mean random vectors, the minimizer of \cref{eq:civ-argmin} is given by
\begin{align}\label{eq:civ-closed-solution}
	\hat b(W) = \hat\E[r_\bY r_\bI^\top] \, \, W\, \,  \hat\E[r_\bI r_\bX^\top]\bigg(\hat\E[r_\bX r_\bI^\top]\, \, W\, \, \hat\E[r_\bI r_\bX^\top]\bigg)^{-1},
\end{align}
where $r_{\bY} \coloneqq \bY - \hat\E[\bY|\bB]$ are the residuals after regressing $\bY$ on $\bB$, and similarly $r_\bX \coloneqq \bX - \hat\E[\bX|\bB]$ and $r_\bI\coloneqq \bI - \hat\E[\bI|\bB]$. 

Choosing the two-stage least squares (TSLS) weight matrix $W_{\textrm{TSLS}} := \E[r_\bI r_\bI^\top]^{-1}$ corresponds to the procedure where one regresses $r_\bX$ on $r_\bI$ and then returns the regression coefficient of $r_\bY$ on the fitted values $\hat{r}_\bX$. In a linear Gaussian model, the IV estimator $\hat b(W_{\textrm{TSLS}})$ has the lowest asymptotic variance among all positive definite weight matrices $W$ \citep{hall2005generalized} {(see \cref{sec:asymptotic-variance}, where we review the asymptotic theory for IV estimators in i.i.d.\ data)}.

\subsection{Nuisance instrumental variables}\label{sec:nuisance-iv}
\begin{figure}[t]
    \centering
    \begin{subfigure}{0.3\textwidth}
        \begin{tikzpicture}[>=latex, 
                            obs/.style={draw, circle}, 
                            hidden/.style={draw, gray, inner sep=5pt}]
            \node[obs] (I) at (0, 0) {$I$};
            \node[obs] (B) at (1.5, -1) {$B$};
            \node[obs] (X) at (1.5, 0) {$X$};
            \node[obs] (Y) at (3, 0) {$Y$};
            \node[hidden] (H) at (2.25, 1) {$H$};
            \draw [->] (I) -- (X);
            \draw [->] (X) -- (Y);
            \draw [->] (H) -- (X);
            \draw [->] (H) -- (Y);
            \draw [->] (B) -- (I);
            \draw [->] (B) -- (Y);
            \draw [->] (H) to [bend left=20] (B);
        \end{tikzpicture}
        \label{fig:only-civ}
    \end{subfigure}
    \begin{subfigure}{0.3\textwidth}
        \begin{tikzpicture}[>=latex, 
                            obs/.style={draw, circle}, 
                            hidden/.style={draw, gray, inner sep=5pt}]
            \node[obs] (I) at (0, 0) {$I$};
            \node[obs] (Z) at (1.5, -1) {$Z$};
            \node[obs] (X) at (1.5, 0) {$X$};
            \node[obs] (Y) at (3, 0) {$Y$};
            \node[hidden] (H) at (2.25, 1) {$H$};
            \draw [->] (I) -- (X);
            \draw [->] (X) -- (Y);
            \draw [->] (H) -- (X);
            \draw [->] (H) -- (Y);
            \draw [->] (I) -- (Z);
            \draw [->] (Z) -- (Y);
            \draw [->] (H) to [bend left=20] (Z);
        \end{tikzpicture}
        \label{fig:only-iv}
    \end{subfigure}
    \begin{subfigure}{0.3\textwidth}
        \begin{tikzpicture}[>=latex, 
                            obs/.style={draw, circle}, 
                            hidden/.style={draw, gray, inner sep=5pt}]
            \node[obs] (I) at (0, 0) {$I$};
            \node[obs] (B) at (0, -1) {$B$};
            \node[obs] (X) at (1.5, 0) {$X$};
            \node[obs] (Z) at (1.5, -1) {$Z$};
            \node[obs] (Y) at (3, 0) {$Y$};
            \node[hidden] (H) at (2.25, 1) {$H$};
            \draw [->] (I) -- (X);
            \draw [->] (X) -- (Y);
            \draw [->] (H) -- (X);
            \draw [->] (H) -- (Y);
            \draw [->] (B) -- (I);
            \draw [->] (B) -- (Z);
            \draw [->] (B) -- (X);
            \draw [->] (H) to [bend left=20] (Z);
            \draw [->] (Z) to (Y);
        \end{tikzpicture}
        \label{fig:both-works}
    \end{subfigure}
     \caption{
    CIV and (proper) NIV are complementary for identifying causal effects in that they can be used in different settings.
    (\textit{Left}) A graph where the effect $X\rightarrow Y$ {can be} identified by $\civ{X}{Y}(I|B)$ (provided that \cref{assump:civ-relevance} holds) but not by any IV or proper NIV method.
    (\textit{Middle}) A graph where $X\rightarrow Y$ {can be} identified by $\niv{X}{Y}(I, Z)$, but not by any proper CIV method. 
    (\textit{Right}) A graph where $X\rightarrow Y$ {can be} identified by both $\civ{X}{Y}(I|B)$ and $\niv{X}{Y}(\cI, Z)$ with $\cI = \{I, B\}$. 
    }
    \label{fig:3-structures}
\end{figure}
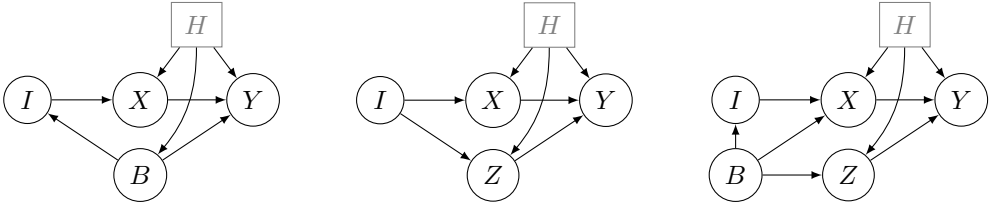
IV estimation is a special case of CIV with the empty set as conditioning set. 
The example in \cref{fig:3-structures} (left) shows a graph where an effect between $X$ and $Y$ cannot be identified using the IV estimator: 
no variable is $d$-separated from $H$, and hence no valid instruments for (unconditional) IV exist. Yet, the causal effect is identified by\footnote{
    In a slight abuse of notation, we sometimes omit parantheses indicating sets and write 
    $\civ{\cX}{Y}(I|\cB)$ instead of 
    $\civ{\cX}{Y}(\cI|\cB)$ if $\cI = \{I\}$, 
    for example.
} 
$\civ{X}{Y}(I| B)$
because $B$ satisfies \cref{assump:civ-relevance,assump:civ-d-sep,assump:civ-descendants}.

In the case shown in \cref{fig:3-structures} (middle),  the effect from $X$ to $Y$ cannot be identified 
by $\iv{X}{Y}(I)$ because of the unblocked path $I \rightarrow Z \rightarrow Y$. 
We cannot use proper CIV, either, because the path $I \rightarrow Z \leftarrow H \rightarrow Y$ is unblocked given $Z$, violating \cref{assump:civ-d-sep}. Nevertheless, we can identify the effect from $X$ to $Y$ by adding an additional regressor variable.
If $d_I \geq d_X + d_Z$, then $\iv{\{X,Z\}}{Y}(I)$ satisfies the assumption for identifying the effect $\{X,Z\} \rightarrow Y$.
In particular, from this we can extract the effect of interest, $X \rightarrow Y$. 

We refer to this approach as \emph{nuisance instrumental variables (NIV)}. More formally, consider collections of variables $\cI, \cX, \cZ$ and a response variable $Y$. We say that $\beta$ satisfies the \emph{NIV moment equation} if there exist $\alpha\in\R^{d_Y \times d_{\cZ}}$ such that
\begin{equation}\label{eq:niv-identifying-equation}
    \cov(Y - \beta \cX - \alpha \cZ, \cI) = 0.
\end{equation}
We say that $\beta$ is \emph{identified by NIV} or, more formally 
\emph{identified by $\niv{\cX}{\cY}(I, Z)$}
 if additionally $\beta$ is the only solution to the moment equation; that is for all $a\in\R^{d_Y \times d_{\cZ}}$ and $b\in\R^{d_Y \times d_{\cX}}$
\begin{equation}
    \cov(Y - b \cX - a \cZ, \cI) = 0 \implies b = \beta.
\end{equation}
We refer to $\cZ$ as a \emph{nuisance regressor}.
If we use both a nuisance regressor $\cZ$ and condition on a variable $\cB$, 
the conditions become
\begin{equation*}
\text{there exists $\alpha$ s.t. }\E[\cov(Y - \beta \cX - \alpha \cZ, \cI|\cB)] = 0 \quad\text{and}\quad \E[\cov(Y - b \cX - a \cZ, \cI|\cB)] = 0 \implies b = \beta,
\end{equation*}
and we write $\niv{\cX}{Y}(\cI,\cZ|\cB)$; this corresponds to extracting the entries relevant for $\cX$ from the output of $\civ{\cX\cup\cZ}{Y}(\cI|\cB)$.\footnote{As such, one could also write 
$\civ{\underline{\cX}\cup\cZ}{Y}(\cI|\cB)$
instead of 
$\niv{\cX}{Y}(\cI,\cZ|\cB)$,
where the underline indicates which causal effect one is interested in.} 
By choosing $\cZ = \emptyset$, NIV extends CIV. When $\cZ \neq \emptyset$, we use the term \emph{proper NIV}.
The following theorem proves that \cref{assump:civ-relevance,assump:civ-d-sep,assump:civ-descendants} are sufficient to establish identifiability of NIV.
\begin{restatable}[Nuisance IV]{theorem}{nuisanceIV}
\label{prop:nuisance-iv}
    Consider a linear SCM (see \cref{sec:interventions}) over variables $V$, and let $\cI$, $\cX$, $\cZ$, $\cB$, $\{Y\} \subseteq V$ be disjoint collections of variables from $V$, and let $\G$ be the corresponding DAG.
    Assume that $\cI, \cX, \cZ$ and $Y$ have zero mean and finite second moment and let $\beta$ and $\alpha$ be the causal coefficients with which $\cX$ and $\cZ$ enter the structural equation for $Y$, respectively (some of the entries of $\beta$ and $\alpha$ can be zero, so not all variables in $\cX$ and $\cZ$ have to be parents of $Y$). 
    Let $\tilde{\cX} := \cX\cup \cZ$. 
    If \cref{assump:civ-relevance,assump:civ-d-sep,assump:civ-descendants} are satisfied in $\G$ for $\cI, \tilde{\cX}, \cB$ and $Y$, the causal effect $\beta$ of $\cX$ on $Y$ is identified by $\niv{\cX}{Y}(\cI, \cZ|\cB)$.
\end{restatable}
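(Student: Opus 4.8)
The plan is to reduce the claim to the CIV identifiability result established above (the statement that, under \cref{assump:civ-d-sep,assump:civ-descendants,assump:civ-relevance}, the target coefficient is the unique solution of \cref{eq:civ-moment-condition}), applied to the augmented regressor collection $\tilde{\cX} = \cX \cup \cZ$. The guiding observation is that the NIV moment equation for the pair $(\cX, \cZ)$ is, after stacking the regressors and their coefficients, \emph{literally} the CIV moment equation for $\tilde{\cX}$; NIV then amounts to $\civ{\tilde{\cX}}{Y}(\cI|\cB)$ followed by reading off the block corresponding to $\cX$.

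First I would write the structural equation for $Y$ as $Y = \beta\cX + \alpha\cZ + \gamma W + \epsilon^Y$ with $W \subseteq V\setminus\tilde{\cX}$, so that the joint causal coefficient of $\tilde{\cX}$ on $Y$ is the horizontally stacked matrix $\tilde{\beta} := [\beta \;\; \alpha]$, whose columns corresponding to $\cX$ form $\beta$ and whose columns corresponding to $\cZ$ form $\alpha$; with $\tilde{\cX} = [\cX^\top, \cZ^\top]^\top$ this gives the identity $\tilde{\beta}\tilde{\cX} = \beta\cX + \alpha\cZ$. Since \cref{assump:civ-d-sep,assump:civ-descendants,assump:civ-relevance} are assumed to hold in $\G$ for the tuple $(\cI, \tilde{\cX}, \cB, Y)$, the earlier CIV result applies verbatim to $\tilde{\cX}$: \cref{assump:civ-d-sep,assump:civ-descendants} give $Y - \tilde{\beta}\tilde{\cX} \indep \cI \mid \cB$ and hence $\E[\cov(Y - \tilde{\beta}\tilde{\cX}, \cI\mid\cB)] = 0$, while \cref{assump:civ-relevance} guarantees that $\tilde{\beta}$ is the \emph{unique} solution of this moment equation.

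For the existence part of NIV identifiability, I would substitute $\tilde{\beta}\tilde{\cX} = \beta\cX + \alpha\cZ$ into the CIV moment equation, so that $\E[\cov(Y - \beta\cX - \alpha\cZ, \cI\mid\cB)] = 0$ holds with the true nuisance coefficient $\alpha$, which is exactly the required existence of an $\alpha$. For uniqueness, I would take any $(b,a)$ with $\E[\cov(Y - b\cX - a\cZ, \cI\mid\cB)] = 0$, set $\tilde{b} := [b \;\; a]$, and use $b\cX + a\cZ = \tilde{b}\tilde{\cX}$ to rewrite this as $\E[\cov(Y - \tilde{b}\tilde{\cX}, \cI\mid\cB)] = 0$; by the CIV uniqueness just invoked, $\tilde{b} = \tilde{\beta}$, and extracting the block corresponding to $\cX$ yields $b = \beta$, which is precisely what $\niv{\cX}{Y}(\cI,\cZ\mid\cB)$ requires.

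The argument is essentially bookkeeping, so I do not anticipate a genuine obstacle. The one point requiring care is the block alignment when stacking $\cX,\cZ$ into $\tilde{\cX}$ and $\beta,\alpha$ into $\tilde{\beta}$, so that the $\cX$-block of the unique CIV solution is exactly $\beta$. It is also worth emphasizing that \cref{assump:civ-relevance} must be imposed for $\tilde{\cX}$, i.e.\ full row rank $d_\cX + d_\cZ$ of $\E[\cov(\tilde{\cX}, \cI\mid\cB)]$; this is strictly stronger than the rank condition for $\cX$ alone and is exactly what removes the indeterminacy between the contributions of $\cX$ and $\cZ$ to $Y$.
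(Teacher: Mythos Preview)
Your proposal is correct and follows precisely the paper's own approach: apply the CIV identifiability result (Proposition~\ref{prop:civiid}) to the augmented regressor $\tilde{\cX} = \cX \cup \cZ$ to identify the stacked coefficient $\tilde{\beta} = [\beta,\alpha]$, then read off the $\cX$-block. The paper's proof is a two-line version of exactly this reduction, so your more detailed bookkeeping (existence and uniqueness of the NIV moment equation solution spelled out separately) is simply an expanded form of the same argument.
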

Even though this is a straight-forward extension of IV regression, we are not aware of any work describing the idea of NIV. 
It will prove useful in the time series setting and even in the i.i.d.\ setting it is a strict generalization of CIV:
there are graphs, such as the one in \cref{fig:3-structures} (middle), where the causal effect $X\rightarrow Y$ is neither identified by IV nor by  CIV.  

For some graphs the effect $X \rightarrow Y$ can be identified by (proper) NIV and (proper) CIV.
For example, in the graph in \cref{fig:3-structures} (right) the effect $X \rightarrow Y$ can be identified both by $\civ{X}{Y}(I| B)$ and by $\niv{X}{Y}(\{I, B\},Z)$.
When estimated from a finite sample, the two resulting estimators are not identical, and, as the following proposition establishes, the two approaches cannot in general be sorted in terms of asymptotic variance.\footnote{
In the i.i.d.\ setting, the asymptotic variances of both NIV and CIV estimators can be described by closed form expressions, see \cref{sec:asymptotic-variance}.
}
\begin{restatable}{proposition}{neitherOptimal}
\label{prop:neither-civ-nor-niv-optimal}
    If an effect can be identified by CIV and by NIV, then the estimators cannot be strictly sorted in terms of asymptotic variance. More specifically, there exist data generating processes, for which CIV has strictly smaller asymptotic variance and others, for which NIV has  strictly smaller asymptotic variance.
\end{restatable}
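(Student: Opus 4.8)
The statement is purely existential, so the plan is to prove it by exhibiting two data-generating processes (DGPs) compatible with the \emph{same} identifying graph in which the two asymptotic variances are strictly ordered in opposite directions. The natural arena is the graph in \cref{fig:3-structures} (right), since this is precisely the setting where both $\civ{X}{Y}(I \mid B)$ and $\niv{X}{Y}(\{I,B\}, Z)$ identify $\beta$. To keep the closed forms tractable I would first specialize to the scalar case $d_X = d_Z = d_I = d_B = d_Y = 1$, so that both moment systems are just-identified; this removes the dependence on the weight matrix $W$ and collapses each estimator's asymptotic variance to a single scalar function of the finitely many edge coefficients and noise variances of the SCM.

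Next, I would import the closed-form asymptotic variance expressions from \cref{sec:asymptotic-variance}. For CIV these are the standard just-identified IV sandwich applied to the residuals of $Y, X, I$ after regressing out $B$: the variance of $\hat\beta$ is the residual variance of $Y - \beta X$ (post-residualization) divided by the squared conditional relevance $\E[\cov(X, I \mid B)]^2$. For NIV one writes the analogous sandwich for the stacked system that estimates $(\beta,\alpha)$ jointly from the moments $\cov(Y - \beta X - \alpha Z, (I,B)) = 0$ and extracts the $\beta$-block of the resulting $2\times 2$ variance matrix. Both variances are continuous rational functions of the model parameters on the open region where \cref{assump:civ-relevance} holds (so that both estimators are well-defined and consistent). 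It then suffices to produce a one-parameter family $\theta(s)$ staying inside this region along which the difference $\Delta(s) := V_{\text{CIV}}(\theta(s)) - V_{\text{NIV}}(\theta(s))$ changes sign; continuity then yields parameter values realizing $\Delta > 0$ and $\Delta < 0$, i.e.\ DGPs in which CIV is strictly worse and strictly better, respectively.

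To engineer the sign flip I would drive each estimator toward a regime in which its \emph{own} effective relevance degenerates while the other's does not. Intuitively, CIV can be hurt by shrinking the direct relevance of $I$ for $X$ that survives conditioning on $B$ (the quantity $\E[\cov(X,I\mid B)]$), whereas NIV can be hurt by making the nuisance coefficient $\alpha$ hard to pin down---e.g.\ by loading large idiosyncratic variance onto $Z$ and weakening how $\{I,B\}$ instruments $Z$---since NIV must estimate $\alpha$ jointly with $\beta$ and pays for this in the $\beta$-block of its variance. The main obstacle is the bookkeeping in the NIV sandwich: because $\beta$ is recovered only after the joint $(\beta,\alpha)$ fit, its variance inherits cross-terms from inverting the stacked Jacobian, and one must verify that the chosen limits keep both relevance matrices---$\E[\cov(X,I\mid B)]$ for CIV and $\E[\cov(\{X,Z\},\{I,B\})]$ for NIV---nonsingular, so that both estimators remain consistent even where they are inefficient. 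With that in hand, supplying one explicit numerical parameter vector on each side of the crossover, with the two scalar variances read off from the formulas of \cref{sec:asymptotic-variance}, makes the existence claim fully self-contained.
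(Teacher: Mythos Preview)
Your proposal is correct and takes essentially the same approach as the paper: work in the graph of \cref{fig:3-structures} (right) with all variables scalar, invoke the closed-form asymptotic variances from \cref{sec:asymptotic-variance}, and exhibit two explicit parameter configurations on which the ordering flips. The paper simply presents two numerical SCMs $(A^I,\Gamma^I)$ and $(A^{II},\Gamma^{II})$ and reports the resulting variances, whereas you additionally sketch a continuity/one-parameter-family heuristic for locating such configurations; this is a helpful search strategy but not logically needed once the two examples are in hand.
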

The idea of NIV can be naturally applied in time series settings, too. In \cref{sec:iv-civ-var} we show that causal effects in VAR($1$) processes as described in \cref{sec:setup} can be estimated both by CIV and NIV. To establish this result, we first develop a marginalization technique of time series graphs, which allows us to apply the above results.

\section{Instrumental Time Series Regression}\label{sec:iv-for-time-series}
\subsection{Time Series Reduction}\label{sec:time-series-reduction}
In \cref{sec:graph_representation}, a VAR{($p$)} process is represented by its full time graph
(see, e.g, \cref{fig:sub3}).
We now show that instruments and conditioning sets for VAR{($p$)} processes can be found by
considering marginalized time graphs, which are obtained by marginalization of the full time graph to a finite set of nodes; 
they resemble latent projections \citep[e.g.,][]{verma1992invariant} but are projections of graphs that are not finite. 
\begin{definition}\label{def:marginalzed_graph}
    Consider a process $S = [S_t]_{t\in\Z}$ satisfying \cref{assump:varp} and let $\Gfull$ be the full time graph of $S$ as defined in \cref{sec:graph_representation}.
    Let $M=\{S^{i_1}_{t_1}, \ldots, S^{i_m}_{t_m} \}$ be a finite collection of nodes in $\Gfull$.
    The \emph{marginalized time graph}, $\mathcal{G}_M$, is the graph over nodes $M$ where for all $i, j \in M$ there is:
    \begin{enumerate}
        \item a directed edge $i \rightarrow j$ if and only if $i \rightarrow j$ in $\Gfull$ or 
        there exists $m_1 \in \mathbb{N}$, $v_1, \ldots, v_{m_1} \notin M$ and a directed path $i\rightarrow v_1 \rightarrow \cdots \rightarrow v_{m_1}\rightarrow j$ in $\Gfull$, and
        \item a bidirected edge $i \leftrightarrow j$ if and only if there exists $m_1, m_2 \in \mathbb{N}$, $v_1, \ldots, v_{m_1}, w_1, \ldots, w_{m_2}, U\notin M$ in $\Gfull$ such that there exists directed paths $U\rightarrow v_1\rightarrow \cdots v_{m_1}\rightarrow i$ and $U\rightarrow w_1\rightarrow \cdots w_{m_2} \rightarrow j$.
    \end{enumerate}
\end{definition}
The following theorem establishes that the CIV conditions being satisfied in a marginalized time graph implies a moment condition that can be used for identifying the causal effect.
In the previous section, we stated identifiability results in terms of the causal effect $\beta$.
The following theorem is stated in terms of the total causal effect (see \cref{sec:def_notation}); this generalizes the causal effect and may for instance be interesting if some predictors $X$ are unobserved or are observed but cannot be intervened on. To do so, we state a slight modification of \cref{assump:civ-d-sep}.
\begin{alt-civenum}
\item \label{assump:civ-d-sep-tce} $\cI$ and $Y$ are $d$-separated given $\cB$ in the graph where we remove 
all edges from a node $i \in \cX$ to a node $j \notin \cX$ if the edge lies
on a directed path from $\cX$ to $Y$ that contains only a single node in $\cX$ (namely $i$).
\end{alt-civenum}
\begin{restatable}[Time series IV by marginalization]{theorem}{dreamTheorem}
\label{prop:dream-theorem}
    Consider a process $S = [S_t]_{t\in \Z}$ satisfying \cref{assump:varp} with full time graph $\Gfull$.
    Let $Y$ be some node in $\Gfull$ and let $\cX, \cI, \cZ$, and $\cB$ be disjoint collections of nodes from $\Gfull$. Let $\tilde{\cX} := \cX\cup \cZ$ and define $M := \{Y\} \cup \cX \cup \cI \cup \cZ \cup \cB$. Assume that \cref{assump:civ-d-sep-tce,,assump:civ-descendants} are satisfied for $\cI, \tilde{\cX}, \cB$ and $Y$ in $\G_M$ (see \cref{def:marginalzed_graph}). Then, the following three statements hold.
    (i) The total causal effect $\vecin{\beta, \alpha}$ of $\vecin{\cX^\top,\cZ^\top}^\top$ on $Y$ satisfies the NIV moment equation
    \begin{align} \label{eq:momcond}
        \E[\cov(Y - b \cX - a \cZ, \cI|\cB)] = 0.
    \end{align}
    (ii) Further, if \cref{assump:civ-relevance} is satisfied for $\cI, \tilde{\cX}, \cB$, then $\vecin{\beta, \alpha}$ is the unique solution to \cref{eq:momcond}.
    (iii) If, additionally, $\bX, \bY, \bI, \bZ$ and $\bB$ are observations of $\cX, Y, \cI, \cZ$ and $\cB$ at $T$ time points, $W$ is a positive definite matrix, and
    \begin{align}\label{eq:ts-niv-from-sample}
	    \vecin{\hat b, \hat a} := \argmin_{b,a} \|\hat\cov(\bY - b \bX- a \bZ, \bI|\bB)\|_{W}^2,
    \end{align}
    then $\hat{b}$ is a consistent estimator for $\beta$.
\end{restatable}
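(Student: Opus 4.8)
The plan is to prove the three parts in sequence, leveraging the marginalization definition (\cref{def:marginalzed_graph}), the global Markov property (\cref{thm:gmp}), and the i.i.d.\ nuisance IV result (\cref{prop:nuisance-iv}). The conceptual heart is that the marginalized time graph $\G_M$ is a faithful surrogate for the (infinite) full time graph $\Gfull$: I would first establish that the $d$-separation statements read off $\G_M$ transfer back to $\Gfull$, so that \cref{thm:gmp} can be invoked to yield genuine conditional independences among the finitely many observed variables in $M$.

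For part (i), I would argue that \cref{def:marginalzed_graph} is constructed precisely so that a path in $\Gfull$ between nodes of $M$ that is unblocked given $\cB$ corresponds to an unblocked path in $\G_M$ (directed edges absorb intermediate non-collider chains through nodes outside $M$, and bidirected edges capture common-ancestor structures). Hence \cref{assump:civ-d-sep-tce} holding in $\G_M$ implies $\cI$ and $Y$ are $d$-separated given $\cB$ in the analogue of $\Gfull$ with the relevant $\cX$-outgoing edges removed. The subtlety is that the total causal effect $\vecin{\beta,\alpha}$ is the coefficient that, when subtracted, removes exactly the directed paths from $\tilde\cX$ to $Y$; I would show that the residual $Y - \beta\cX - \alpha\cZ$ retains dependence on $\cI$ only through paths that are blocked given $\cB$. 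Combining this with \cref{thm:gmp} gives $Y - \beta\cX - \alpha\cZ \indep \cI \mid \cB$, from which \cref{eq:momcond} follows by taking $\E[\cov(\cdot,\cdot\mid\cB)]$. The main obstacle here is handling the total-causal-effect bookkeeping via the modified separation condition \cref{assump:civ-d-sep-tce} rather than the single-edge removal in \cref{assump:civ-d-sep}, and making rigorous that edge-removal in $\Gfull$ commutes appropriately with marginalization to $\G_M$.

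For part (ii), uniqueness, I would reduce to the i.i.d.\ argument. Since parts (i) and the separation conditions mirror the hypotheses of \cref{prop:nuisance-iv} with $\tilde\cX = \cX\cup\cZ$, the rank condition \cref{assump:civ-relevance} on $\E[\cov(\tilde\cX,\cI\mid\cB)]$ guarantees that the linear moment equation \cref{eq:momcond}, viewed as a system in $\vecin{b,a}$, has the full-rank structure needed for a unique solution. Concretely, writing the residualized moment equation as $\E[r_Y r_\cI^\top] = \vecin{\beta,\alpha}\,\E[r_{\tilde\cX} r_\cI^\top]$, full row rank $d_{\tilde\cX}$ of $\E[r_{\tilde\cX}r_\cI^\top]$ forces $\vecin{b,a} = \vecin{\beta,\alpha}$; extracting the $\cX$-block yields $b=\beta$. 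This step is essentially an application of \cref{prop:nuisance-iv} once the probabilistic content of part (i) is in place.

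For part (iii), consistency, I would invoke the closed-form estimator \cref{eq:civ-closed-solution} together with the convergence of empirical moments in \cref{eq:assump_moments}. Because $\bX,\bY,\bI,\bZ,\bB$ are observed at $T$ stationary time points, \cref{eq:assump_moments} gives $\hat\cov\{\mathbf{S}_{t-j}\mathbf{S}_t^\top\} \convP \cov\{S_{t-j}S_t^\top\}$, so all empirical second moments entering $\vecin{\hat b,\hat a}$ converge in probability to their population counterparts. The estimator is a continuous function (the $\argmin$ has the explicit form in \cref{eq:civ-closed-solution}, involving a matrix inverse that is nonsingular in the limit precisely by \cref{assump:civ-relevance} and positive-definiteness of $W$) of these moments, so the continuous mapping theorem yields $\vecin{\hat b,\hat a} \convP \vecin{\beta,\alpha}$, and in particular $\hat b \convP \beta$. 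The only care needed is to verify that the limiting weighted Gram matrix $\E[r_\cX r_\cI^\top] W \E[r_\cI r_\cX^\top]$ (with $\cX$ replaced by $\tilde\cX$) is invertible so that the inverse in \cref{eq:civ-closed-solution} is continuous near the limit; this again follows from the rank condition.
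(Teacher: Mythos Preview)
Your plan for parts (ii) and (iii) is correct and matches the paper's argument: uniqueness follows directly from full row rank of $\E[\cov(\tilde\cX,\cI\mid\cB)]$, and consistency follows from \cref{eq:assump_moments} together with Slutsky applied to the closed form \cref{eq:civ-closed-solution}.

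For part (i), there is a genuine gap in the step ``Combining this with \cref{thm:gmp} gives $Y-\beta\cX-\alpha\cZ\indep\cI\mid\cB$.'' The global Markov property of \cref{thm:gmp} delivers conditional independence between \emph{sets of nodes} of $\Gfull$; the residual $Y-\beta\cX-\alpha\cZ$ is not a vertex in any graph, so no $d$-separation statement applies to it directly. Your alternative of reasoning in ``the analogue of $\Gfull$ with the relevant $\cX$-outgoing edges removed'' does not repair this: the distribution of $S$ is Markov with respect to $\Gfull$, not with respect to an edge-deleted graph, so separation in the latter carries no probabilistic content for the original process.

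The paper closes this gap by an explicit structural decomposition that your proposal does not identify. Using the linear VAR equations, one recursively substitutes parents of $Y$---stopping whenever a node in $\tilde\cX\cup\cB$ or in $\ND{\tilde\cX}$ is reached---to obtain $Y=\beta\cX+\alpha\cZ+\pi\cB+\gamma R+\epsilon$, where $R$ is a finite set of \emph{bona fide} nodes of $\Gfull$ disjoint from $\cB\cup\tilde\cX\cup\DE{\tilde\cX}$, and $\epsilon$ is a linear combination of noise innovations. Now the residual equals $\pi\cB+\gamma R+\epsilon$, and the moment condition reduces to two claims about genuine graph objects: (a) $R$ and $\cI$ are $d$-separated by $\cB$ in $\Gfull$, so \cref{thm:gmp} gives $\E[\cov(R,\cI\mid\cB)]=0$; and (b) $\epsilon\indep(\cI,\cB)$, which follows from the MA($\infty$) representation because neither $\cI$ nor $\cB$ descend from the nodes whose innovations enter $\epsilon$. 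Your path-lifting intuition is exactly what is used to prove (a): one assumes an unblocked $\Gfull$-path from $\cI$ to some $m\in R$, then builds from it a path in $\G_M$ from $\cI$ to $Y$ by retaining the $M$-nodes along the path, replacing colliders outside $M$ by their $\cB$-descendants (which become colliders in $\G_M$), and appending the edge into $Y$ arising from the directed $\Gfull$-path $m\to\cdots\to Y$ that avoids $\tilde\cX\cup\cB$. One then checks this $\G_M$-path survives the edge removal in \cref{assump:civ-d-sep-tce}, yielding a contradiction. The missing ingredient in your outline is the $R$-decomposition that reduces a statement about the residual to $d$-separation of actual nodes; once that is in place, your ``separation transfers from $\G_M$ to $\Gfull$'' programme goes through.
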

We now apply the above result to VAR($1$) processes satisfying \cref{assump:iv}.
In this case, the total causal effect coincides with the $\beta$ defined in \cref{sec:def_notation} (and \ref{assump:civ-d-sep-tce} and \ref{assump:civ-d-sep} become equivalent). The generality of \cref{prop:dream-theorem}, however, can be used to develop similar results for VAR($p$) processes with $p\neq 1$ and for total causal effects between arbitrary variables in the process.

\subsection{Instrumental \texorpdfstring{ VAR{($1$)}}{} Processes}\label{sec:iv-civ-var}
We now consider estimating the causal effect $\beta$ in VAR($1$)-processes, such as the one displayed in {\cref{fig:sub1}}.
A first attempt might be to estimate the effect $\beta$ from $X_{t-1}$ to $Y_t$ by directly adapting the i.i.d.\ case and using $\iv{X_{t-1}}{Y_t}(I_{t-2})$.
{
However, in general, this estimator is not consistent: In \cref{failure:naive_iv} in \cref{app:failure-naive-iv}, we show that the resulting estimator converges to $(1 - \alpha_{I,I}\alpha_{Y,Y})^{-1}\beta$. 
This naive IV approach fails due to the memory of the processes: The instrument, $I_{t-2}$, correlates with the response, $Y_t$, not only through the directed path $I_{t-2} \rightarrow X_{t-1} \rightarrow Y_t$, but also through paths reaching into the past, such as $I_{t-2}\leftarrow I_{t-3} \rightarrow X_{t-2} \rightarrow Y_{t-1}\rightarrow Y_t$. 
}
Instead, the concepts of NIV and time series reductions introduced above 
provide us with a principled approach to selecting which variables to include into the regression and allow us to construct estimators that adjust for the past. 

\subsubsection{Blocking the past using conditional IV} \label{se:meth1}
Consider a {VAR($1$)} process $S$ satisfying \cref{assump:iv}.
By \cref{prop:dream-theorem}, any set satisfying \cref{assump:civ-d-sep,assump:civ-descendants,assump:civ-relevance} in the corresponding marginalised time graph yields a consistent estimator for the causal effect of $X_{t-1}$ on $Y_t$. 
Thus, define $M:=\{I_{t-3}, I_{t-2}, X_{t-2}, X_{t-1}, Y_{t-1}, Y_t\}$ and consider the marginalization $\G_M$ of the full time graph with respect to $M$, see \cref{fig:marginalized-var1} (left). 
In this graph, every path from $I_{t-2}$ to $Y_t$ either goes through $X_{t-1}$ or $I_{t-3}$. 
Indeed, the assumptions for \cref{prop:dream-theorem} are satisfied when choosing
$\cI_t:=\{I_{t-2}\}$ and either $\cB_t = \{I_{t-3}\}$ or $\cB_t := \{I_{t-3},X_{t-2}, Y_{t-1}\}$. 
Formally, we have the following theorem.
\begin{restatable}[Identification with conditioning set]{theorem}{tsCiv}
    \label{prop:ts-civ}
    Consider a {VAR($1$)} process 
    $S = [S_t]_{t \in \Z}$ with
    $S = \vecin{I_t^\top, X_t^\top, H_t^\top, Y_t^\top}^\top_{t\in\Z}$ satisfying \cref{assump:iv}.
    Let either $\cB_t \coloneqq \{I_{t-3}\}$ or $\cB_t \coloneqq \{I_{t-3}, X_{t-2}, Y_{t-1}\}$. Then, the following three statements hold.
    (i)
    The causal effect $\beta$ of $X_{t-1}$ on $Y_t$ satisfies the CIV moment condition $\E[\cov(Y_t - \beta X_{t-1}, I_{t-2}|\cB_t)] = 0$.
    (ii) Furthermore, if $\E[\cov(X_{t-1}, I_{t-2} | \cB_t)]$ has rank $d_X$, then $\beta$ is identified by $\civ{X_{t-1}}{Y_t}(I_{t-2}| \cB_t)$.
    (iii) If, additionally,  $\bX_{t}, \bY_{t}, \bI_{t}$, and $\bB_{t}$ are observations of $X, Y, I$ and $\cB$ at $T$ time points, then $\beta$ can be consistently estimated as $T\to\infty$ by $\civ{\bX_{t-1}}{\bY_t}(\bI_{t-2}| \bB_{t})$, that is, the output of \cref{alg:main-method}. 
\end{restatable}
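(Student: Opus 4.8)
The plan is to verify that the marginalized time graph $\G_M$ over $M := \{I_{t-3}, I_{t-2}, X_{t-2}, X_{t-1}, Y_{t-1}, Y_t\}$ satisfies the three CIV requirements \cref{assump:civ-d-sep,assump:civ-descendants,assump:civ-relevance}, and then invoke \cref{prop:dream-theorem} to obtain all three conclusions at once. Since \cref{assump:iv} forces $p=1$, the total causal effect of $X_{t-1}$ on $Y_t$ coincides with $\beta$ (as noted in \cref{sec:def_notation}), and \cref{assump:civ-d-sep-tce} reduces to \cref{assump:civ-d-sep}, so \cref{prop:dream-theorem} applies directly with $\cZ = \emptyset$, $\cX = \{X_{t-1}\}$, $\cI = \{I_{t-2}\}$, and the stated choice of $\cB_t$.

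First I would construct $\G_M$ explicitly from \cref{def:marginalzed_graph}. The directed edges within $M$ come from edges of $\Gfull$ (e.g. $I_{t-3}\to I_{t-2}$, $I_{t-2}\to X_{t-1}$, $X_{t-2}\to X_{t-1}$, $X_{t-1}\to Y_t$, $Y_{t-1}\to Y_t$, $Y_{t-1}\to X_{t-1}$ via the $\alpha_{X,Y}$ edge) together with directed paths through marginalized-out nodes. The bidirected edges arise from latent common ancestors outside $M$: crucially, the confounder chain through $H$ induces a bidirected edge $X_{t-1}\leftrightarrow Y_t$ (both have $H$-ancestors), and the marginalized-out instrument nodes $I_{t-j}$, $j\ge 4$, together with $H_{t-j}$ induce bidirected edges among the retained nodes. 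The key structural fact I must establish is that $I_{t-3}$ (and its descendants inside $M$) intercepts every backward path: every trek from the common ancestors $I_{t-j}$ ($j\ge4$) into $I_{t-2}$ must pass through $I_{t-3}$, so conditioning on $I_{t-3}$ severs the instrument's correlation with the past.

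For \textbf{\cref{assump:civ-d-sep}}, I would remove the edge $X_{t-1}\to Y_t$ and argue that in the resulting graph $(\G_M)_{\cX\not\to Y}$, every path from $I_{t-2}$ to $Y_t$ is blocked given $\cB_t$. The candidate open paths are: (a) paths through $X_{t-1}$, now blocked because the outgoing edge to $Y_t$ is deleted and $X_{t-1}\notin\cB_t$ is a non-collider on any remaining route; (b) the backward path into the instrument's past, blocked by $I_{t-3}\in\cB_t$; and (c) paths created by the bidirected (confounding) edges, which I must check are either blocked or introduce a collider not opened by $\cB_t$. The two choices of $\cB_t$ must both be handled: the larger set $\{I_{t-3}, X_{t-2}, Y_{t-1}\}$ additionally blocks routes through $X_{t-2}$ and $Y_{t-1}$ but I must verify it does not open a collider path (this is exactly the subtlety illustrated in \cref{fig:tsls_diagram}, where conditioning on $X_{t-2}$ as a collider in the \emph{full} graph was problematic — but in $\G_M$ the relevant ancestors are already marginalized, so the collider issue is resolved). \textbf{\cref{assump:civ-descendants}} is immediate: $I_{t-3}$, $X_{t-2}$, $Y_{t-1}$ all lie at times strictly before $t$ along the relevant arrows, so none is a descendant of $X_{t-1}\cup Y_t$.

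The \textbf{main obstacle} will be the $d$-separation verification in part (i), specifically enumerating the paths created by bidirected edges in $\G_M$ and confirming none remains open — this requires careful bookkeeping of which latent common ancestors survive marginalization and which colliders $\cB_t$ activates. Part (ii) (identifiability) then follows from assuming the rank condition \cref{assump:civ-relevance} as a hypothesis, and part (iii) (consistency) follows immediately from \cref{prop:dream-theorem}(iii) combined with the convergence of empirical covariances in \cref{eq:assump_moments}; these last two steps are routine once part (i) is secured.
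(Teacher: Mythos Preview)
Your approach is correct and matches the paper's proof: verify \cref{assump:civ-d-sep,assump:civ-descendants} in the marginalized time graph and invoke \cref{prop:dream-theorem} for all three parts. One minor correction: \cref{prop:dream-theorem} fixes $M := \{Y\}\cup\cX\cup\cI\cup\cZ\cup\cB$, so you must use a separate, smaller $M_t = \{I_{t-3}, I_{t-2}, X_{t-1}, Y_t\}$ for the case $\cB_t = \{I_{t-3}\}$ rather than your six-node set for both cases --- in that smaller graph the $d$-separation check is considerably simpler, since every path from $I_{t-2}$ to $Y_t$ (after deleting $X_{t-1}\to Y_t$) must contain either the non-collider $I_{t-3}\in\cB_t$ or the collider $X_{t-1}$ with $(\{X_{t-1}\}\cup\DE{X_{t-1}})\cap\cB_t = \emptyset$, so no careful enumeration of bidirected edges is needed.
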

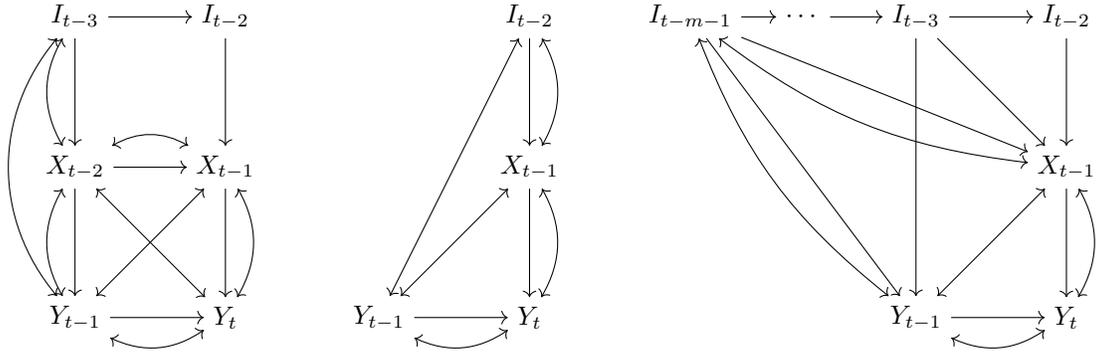
\begin{figure}
    \centering
    \begin{subfigure}{0.3\linewidth}
        \begin{tikzpicture}
        \node (I) at (2,4) {$I_{t-2}$};
        \node (X) at (2,2) {$X_{t-1}$};
        \node (Y) at (2,0) {$Y_{t}$};
        \node (I-) at (0,4) {$I_{t-3}$};
        \node (X-) at (0,2) {$X_{t-2}$};
        \node (Y-) at (0,0) {$Y_{t-1}$};
        \draw[->] (I) edge (X);
        \draw[->] (X) edge (Y);
        \draw[->] (I-) edge (I);
        \draw[->] (I-) edge (X-);
        \draw[->] (X-) edge (Y-);
        \draw[->] (Y-) edge (Y);
        \draw[->] (X-) edge (X);
        \draw[<->] (I-) edge[bend right=40] (Y-);
        \draw[<->] (I-) edge[bend right=30] (X-);
        \draw[<->] (I-) edge (X);
        \draw[<->] (Y-) edge[bend right=30] (Y);
        \draw[<->] (X-) edge[bend right=30] (Y-);
        \draw[<->] (X) edge[bend left=30] (Y);
        \draw[<->] (X-) edge[bend left=30] (X);
        \draw[<->] (X-) edge (Y);
        \draw[<->] (Y-) edge (X);
        \end{tikzpicture}
    \end{subfigure}
    \begin{subfigure}{0.25\linewidth}
        \begin{tikzpicture}
        \node (I) at (2,4) {$I_{t-2}$};
        \node (X) at (2,2) {$X_{t-1}$};
        \node (Y) at (2,0) {$Y_{t}$};
        \node (Y-) at (0,0) {$Y_{t-1}$};
        \draw[->] (I) edge (X);
        \draw[->] (X) edge (Y);
        \draw[->] (Y-) edge (Y);
        \draw[<->] (I) edge[bend left] (X);
        \draw[<->] (I) edge (Y-);
        \draw[<->] (Y-) edge[bend right=30] (Y);
        \draw[<->] (X) edge[bend left=30] (Y);
        \draw[<->] (Y-) edge (X);
        \end{tikzpicture}
    \end{subfigure}
    \begin{subfigure}{0.4\linewidth}
        \begin{tikzpicture}
        \node (I) at (2,4) {$I_{t-2}$};
        \node (X) at (2,2) {$X_{t-1}$};
        \node (Y) at (2,0) {$Y_{t}$};
        \node (I-) at (0,4) {$I_{t-3}$};
        \node (I--) at (-3,4) {$I_{t-m-1}$};
        \node (dots) at (-1.5,4) {$\cdots$};
        \node (Y-) at (0,0) {$Y_{t-1}$};
        \draw[->] (I) edge (X);
        \draw[->] (X) edge (Y);
        \draw[->] (I-) edge (I);
        \draw[->] (I-) edge (Y-);
        \draw[->] (Y-) edge (Y);
        \draw[->] (I-) edge (X);
        \draw[<->] (Y-) edge[bend right=30] (Y);
        \draw[<->] (X) edge[bend left=30] (Y);
        \draw[<->] (Y-) edge (X);
        \draw[->] (I--) edge (dots);
        \draw[->] (dots) edge (I-);
        \draw[->] (I--) edge (Y-);
        \draw[->] (I--) edge (X);
        \draw[<->] (I--) edge[bend right=15] (X);
        \draw[<->] (I--) edge[bend right=15] (Y-);
        \end{tikzpicture}
    \end{subfigure}     \caption{Three different marginalizations 
    of the full time graph $\Gfull$ of a {VAR($1$)} process satisfying \cref{assump:iv} (see \cref{fig:sub3}). 
    Using \cref{prop:dream-theorem}, we use these marginalizations for identification using CIV (left) and NIV (middle and right). (\textit{Left}) Marginalization to nodes $I_{t-2}, X_{t-1}$ and $Y_t$ and their lagged values. (\textit{Middle}) Marginalization to nodes $I_{t-2}, X_{t-1}$ and $Y_t$ and the lagged value of $Y_t$. (\textit{Right}) Marginalization to $m$ instrument nodes $I_{t-2}, \ldots, I_{t-m-1}$, and $X_{t-1}, Y_t,$ and $Y_{t-1}$. 
    }
    \label{fig:marginalized-var1}
\end{figure}
\Cref{prop:ts-civ} establishes that the bias due to  confounding from the past (see beginning of Section~\ref{sec:iv-civ-var})
can be overcome by choosing either $\cB_t = \{I_{t-3}\}$ or $\cB_t = \{I_{t-3}, X_{t-2}, Y_{t-1}\}$ as conditioning set. 
In \cref{sec:simulations}, we compare these two choices empirically.
The assumption that $\alpha_{XI}$ has full rank ensures the relevance condition, \cref{assump:civ-relevance}. 
{
In \cref{sec:asymptotic-variance-ts}, we state the asymptotic distribution of an estimator $\hat\beta$ constructed using \cref{prop:ts-civ}.

\Cref{prop:ts-civ} is robust to misspecification in the number of lags: Though we assume \cref{assump:iv}, which includes a VAR($1$) assumption, the proof of \cref{prop:ts-civ} is still valid if higher-order lags (where each coefficient matrix has the sparsity pattern of \cref{fig:sub1}) are present, as long as the $I$ process is marginally VAR($1$), i.e. $\alpha_{I,I}^j = 0$ for $j = 2, \ldots, p$. In this case, the conditioning set $\cB_t$ still yields 
a valid instrument setting.
}
Other choices of the marginalization set $M$ are possible, too, and yield alternative ways of estimating the causal effect $X_{t-1}\rightarrow Y_t$: We now illustrate an alternative strategy for identification using nuisance IV.

\subsubsection{Blocking the past using nuisance IV}  \label{se:meth2}
The graph in \cref{fig:marginalized-var1} (\textit{middle}) shows the marginalization of the full time graph in {\cref{fig:sub1}} to nodes $M := \{I_{t-2}, X_{t-1}, Y_{t-1}, Y_t\}$. 
The effect $X_{t-1}\rightarrow Y_{t}$ cannot be consistently estimated using only these variables in a CIV; if we condition on $Y_{t-1}$, for example, the path $I_{t-2} \leftrightarrow Y_{t-1}\leftrightarrow Y_t$ is unblocked, violating \cref{assump:civ-d-sep}. 
Instead, we can include $Y_{t-1}$ as a nuisance regressor: We identify the effect of $X_{t-1}$ on $Y_t$ using the instrument $\cI'_{t}:=\{I_{t-2}\}$ by $\niv{X_{t-1}}{Y_t}(\cI'_t, Y_{t-1})$, defined in \cref{sec:nuisance-iv}. This model satisfies the (nuissance) IV \cref{assump:civ-d-sep}, because
the only open paths between $I_{t-2}$ and $Y_t$ include either the edge $X_{t-1}\rightarrow Y_t$ or the edge $Y_{t-1}\rightarrow Y_t$.

\begin{algorithm}[t]
\begin{algorithmic}[1]
	\Statex \textbf{Input}: Sample $\bX = \vec{\bX_1, \ldots, \bX_{T}} \in \R^{d_X \times T}$, $\bY = \vec{\bY_1, \ldots, \bY_{T}} \in \R^{d_Y \times T}$, $\bI = \vec{\bI_1, \ldots, \bI_{T}} \in \R^{d_I \times T}$
	\Statex
	\State Align observations, by setting $\bY = [\bY_s, \ldots, \bY_T]$
	and either
	\begin{align*}
	    \text{(CIV)} \quad\quad & \bI \coloneqq [\bI_{s-2}, \ldots, \bI_{T-2}]\quad \bX \coloneqq [\bX_{s-1}, \ldots, \bX_{T-1}] \quad \text{and} \quad \bB \coloneqq [\bI_{s-3}, \ldots, \bI_{T-3}] \\
	    \text{(NIV)}  \quad\quad & \bI \coloneqq \left[\begin{matrix}\bI_{s-2} \\ \vdots \\ \bI_{s-m+1}\end{matrix} \ldots \begin{matrix}\bI_{T-2} \\ \vdots \\ \bI_{T-m+1}\end{matrix}\right] \quad \text{and}\quad 
	    \bX \coloneqq \left[\begin{matrix}\bX_{s-1} \\  \bY_{s-1} \end{matrix} \ldots
	    \begin{matrix}\bX_{T-1} \\  \bY_{T-1} \end{matrix}\right],
	\end{align*}
	where $s$ is chosen such that all indices are positive. 
	\State Compute regression estimates $\hat{E}[\bX|\bB], \hat{E}[\bY|\bB]$, and $\hat{E}[\bI|\bB]$.
	\State Compute residual processes $r_{\bX} \coloneqq \bX - \hat{E}[\bX|\bB], r_{\bY} \coloneqq \bY - \hat{E}[\bY|\bB]$, and $r_{\bI} \coloneqq \bI - \hat{E}[\bI|\bB]$.
	\State $W \coloneqq \left(\frac{1}{T-s+1} r_\bI r_\bI^\top\right)^{-1}$
	\State $\hat\beta \coloneqq r_\bY r_\bI^\top W  r_\bI r_\bX\bigg(r_\bX r_\bI^\top W r_\bI r_\bX^\top\bigg)^{-1}$
	\State If $\bI$ and $\bX$ are chosen according to NIV, set $\hat\beta \coloneqq \hat\beta_{1:d_X}$.
	\Statex 
	\Statex \textbf{Output}: 
	Estimate of causal effect $\hat\beta$
\end{algorithmic}
\caption{Estimating the causal effect $\beta$ of $X_{t-1}$ on $Y_t$ {in a VAR($1$) process satisfying} \cref{assump:iv}.}
\label{alg:main-method}
\end{algorithm}

Because the resulting $d_X$-dimensional estimate is extracted from the $d_X + d_Y$-dimensional solution $\iv{\{X_{t-1},Y_{t-1}\}}{Y_t}(\cI'_t)$, 
we require that $\rank\E\left\{\vecin{X_{t-1}^\top,Y_{t-1}^\top}^\top(\cI'_t)^\top\right\}=d_X + d_Y$. 
If $d_I < d_X + d_Y$, this rank condition is not met for $\cI'_{t}:=\{I_{t-2}\}$.
To overcome this, one can increase the instrument set to $\cI_t = \{I_{t-2}, I_{t-3},\ldots, I_{t-m-1}\}$ {for a user-specified $m$}: \cref{fig:marginalized-var1} (\textit{right}) shows the marginalization of the full time graph to $M \coloneqq \cI_t \cup \{X_{t-1},Y_{t-1},Y_t\}$. 
Again, \cref{assump:civ-d-sep,assump:civ-descendants} are satisfied in $\G_M$, but the instrument set now has dimension $|\cI| = m d_I$ 
{(implying that we should choose $m \geq (d_X + d_Y)/d_I$)}
and, provided the rank condition now holds, $\beta$ is identified by $\niv{X_{t-1}}{Y_t}(\cI_t, Y_{t-1})$. 
The following theorem formalizes this discussion.
\begin{restatable}[Identification with nuisance regressor]{theorem}{tsNiv}
    \label{prop:ts-niv}
    Consider a {VAR($1$)} process 
    $S = [S_t]_{t \in \Z}$ with
    $S = \vecin{I_t^\top, X_t^\top, H_t^\top, Y_t^\top}^\top_{t\in\Z}$ satisfying \cref{assump:iv}.
    Let $\cI_t \coloneqq \{I_{t-2},\ldots, I_{t-m-1}\}$ for an $m \geq 1$ and $\cZ_t \coloneqq \{Y_{t-1}\}$. Then, the following three statements hold.
    (i)
    There exists $\alpha \in \R$ such that the causal effect $\beta$ of $X_{t-1}$ on $Y_t$ satisfies the NIV moment condition {$\cov(Y_t - \beta X_{t-1}-\alpha \cZ_t, \cI_t) = 0$}.
    (ii)
    Further, if $\E[\vecin{X_{t-1}^\top,\cZ_t^\top}^\top\cI_t^\top]$ has rank $d_X + d_Y$, $\beta$ is identified by $\niv{X_{t-1}}{Y_t}(\cI_t, \cZ_t)$.
    (iii)
    If, additionally, $\bX_{t}, \bY_{t}, \bI_{t}$, and $\bZ_{t}$ are observations of $X, Y, \cI$ and $\cZ$ at $T$ time points, then $\beta$ can be consistently estimated as $T\to\infty$
    by $\niv{\bX_{t-1}}{\bY_t}(\bI_t, \bZ_{t})$, that is, the output of \cref{alg:main-method}. 
\end{restatable}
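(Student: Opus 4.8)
The plan is to obtain all three claims as instances of \cref{prop:dream-theorem} after exhibiting the right marginalized time graph and checking the three requirements in it. Concretely, I would set $\cX := \{X_{t-1}\}$, $\cZ := \cZ_t = \{Y_{t-1}\}$, so that $\tilde\cX = \cX \cup \cZ = \{X_{t-1}, Y_{t-1}\}$, take $\cI := \cI_t$, the empty conditioning set $\cB := \emptyset$, and $Y := Y_t$, and then study the marginalized time graph $\G_M$ with $M := \cI_t \cup \{X_{t-1}, Y_{t-1}, Y_t\}$ shown in \cref{fig:marginalized-var1} (right). Since \cref{assump:civ-descendants} is vacuous for $\cB = \emptyset$, and since for a VAR($1$) process satisfying \cref{assump:iv} the total causal effect of $X_{t-1}$ on $Y_t$ coincides with $\beta$ (so that \cref{assump:civ-d-sep-tce} reduces to \cref{assump:civ-d-sep}), the whole argument should reduce to the $d$-separation requirement, plus the rank condition for uniqueness.

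The core step will be to verify \cref{assump:civ-d-sep} in $\G_M$, i.e. that $\cI_t$ and $Y_t$ are $d$-separated by $\emptyset$ in the graph obtained from $\G_M$ by deleting the two directed edges $X_{t-1} \to Y_t$ and $Y_{t-1} \to Y_t$ emanating from $\tilde\cX$. The key fact I would extract from \cref{def:marginalzed_graph} is a complete description of the edges incident to $Y_t$. Its only parents in $\Gfull$ are $X_{t-1}$, $Y_{t-1}$ and $H_{t-1}$, and because the $I$-component has no incoming edges (the sparsity of $A_1$ in \cref{assump:iv}), $H$ is never a descendant of $I$; hence every directed $\Gfull$-path from any instrument $I_{t-k}$ to $Y_t$ must pass through $X_{t-1}$ or $Y_{t-1}$, both of which lie in $M$. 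This should yield two conclusions: $\G_M$ contains no directed edge $I_{t-k} \to Y_t$, and $\G_M$ contains no bidirected edge $I_{t-k} \leftrightarrow Y_t$ either, since any latent common ancestor $U \notin M$ of $I_{t-k}$ is necessarily itself an $I$-node, and no directed $\Gfull$-path from such $U$ to $Y_t$ can avoid $M$. Consequently, after deleting $X_{t-1} \to Y_t$ and $Y_{t-1} \to Y_t$, the only edges into $Y_t$ are the bidirected edges $X_{t-1} \leftrightarrow Y_t$ and $Y_{t-1} \leftrightarrow Y_t$.

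Given this, the $d$-separation itself should be short. Any path from an instrument to $Y_t$ must enter $Y_t$ through one of these two bidirected edges, so its penultimate node $w \in \{X_{t-1}, Y_{t-1}\}$ carries an arrowhead from the $Y_t$-side. Since $X_{t-1}$ and $Y_{t-1}$ have no outgoing directed edges within $M$ other than the (now deleted) edges into $Y_t$ — their only $M$-descendant is $Y_t$ — the preceding edge on the path must also point into $w$, making $w$ a collider. As $\cB = \emptyset$ contains no descendant of any collider, every such path is blocked, establishing \cref{assump:civ-d-sep}. Claim (i) then follows from \cref{prop:dream-theorem}(i), the $\cX$-block of the total causal effect of $\tilde\cX = [X_{t-1}^\top, Y_{t-1}^\top]^\top$ on $Y_t$ being exactly $\beta$ and the $\cZ$-block supplying the required $\alpha$.

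For (ii), I would note that the hypothesis that $\E\{[X_{t-1}^\top, \cZ_t^\top]^\top \cI_t^\top\}$ has rank $d_X + d_Y$ is precisely \cref{assump:civ-relevance} for $\cI_t, \tilde\cX, \emptyset$ (using $\cB = \emptyset$ and zero means), so \cref{prop:dream-theorem}(ii) gives that $[\beta, \alpha]$ is the unique solution of the moment equation, i.e. $\beta$ is identified by $\niv{X_{t-1}}{Y_t}(\cI_t, \cZ_t)$. For (iii), I would observe that the NIV branch of \cref{alg:main-method} computes exactly the weighted argmin estimator of \cref{prop:dream-theorem}(iii) with the (asymptotically) positive definite TSLS weight $W = (\frac{1}{T-s+1} r_\bI r_\bI^\top)^{-1}$ and then extracts the $\cX$-block; consistency of $\hat\beta$ for $\beta$ is then immediate from \cref{prop:dream-theorem}(iii) together with the convergence of empirical second moments in \cref{eq:assump_moments}. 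The main obstacle is the graph bookkeeping in the second step — reading off from \cref{def:marginalzed_graph} exactly which edges touch $Y_t$ and ruling out any directed or bidirected edge from the instruments to $Y_t$ — since everything else reduces to invoking \cref{prop:dream-theorem}.
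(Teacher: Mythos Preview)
Your proposal is correct and follows essentially the same approach as the paper: reduce everything to \cref{prop:dream-theorem} by taking $\cB=\emptyset$, checking \cref{assump:civ-descendants} trivially, and verifying the $d$-separation requirement in the marginalized graph $\G_M$ with $M=\cI_t\cup\{X_{t-1},Y_{t-1},Y_t\}$ by arguing that every path from an instrument to $Y_t$ must pass through $X_{t-1}$ or $Y_{t-1}$ as a collider. The paper's proof is terser---it simply points to the figure and asserts the collider structure---whereas you additionally justify from \cref{def:marginalzed_graph} why no directed or bidirected edge can connect an instrument to $Y_t$; this extra bookkeeping is sound and makes the argument more self-contained.
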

{
In \cref{sec:asymptotic-variance-ts}, we state the asymptotic distribution of an estimator $\hat\beta$ constructed using \cref{prop:ts-niv}.
Similar to the discussion in 
\cref{se:meth1}, \cref{prop:ts-niv} is robust to misspecification of the number of lags. While \cref{prop:ts-civ} is still valid in VAR($p$) processes if the $I$ process is a VAR($1$) process, \cref{prop:ts-niv} is valid in VAR($p$) processes 
if the $Y$ process is VAR($1$), since in that case $Y_{t-1}$ yields a valid nuisance regressor.

}
\Cref{prop:ts-niv} shows that identification is possible if $\rank\E\left\{\vecin{X_{t-1}^\top,Y_{t-1}^\top}^\top\cI_t^\top\right\}=d_X+d_Y$ is met, where we use the lagged instrument set $\cI_t = \{I_{t-2},\ldots, I_{t-m-1}\}$ (it is easy to see that for this choice of $\cI_t$, identifiability also holds for $\beta=0$ and the rank being $d_X$).
Satisfying this relevance criterion implies \cref{assump:civ-relevance}. For the CIV approach in \cref{se:meth1}, this is directly related to the rank of the parameter $\alpha_{XI}$. 
For NIV, the relevance criterion depends on the parameter matrix in an intricate way.
We now provide necessary and sufficient conditions for when this rank condition is satisfied. Moreover, we show in \cref{thm:almost_sure_identify} below that for almost all parameter matrices one can obtain sufficiently high rank to identify the effect $X_{t-1}\rightarrow Y_t$ by increasing the number of lags used.

Define the following submatrices of the parameter matrix $A$ from \cref{assump:iv}.
\begin{equation}\label{eq:submatrices}
    A_I := \mat{\alpha_{X,I} \\ 0} \in \R^{d_X + 1}\quad \text{and} \quad A_{XY} := \mat{\alpha_{X,X} & \alpha_{X,Y} \\ \beta & \alpha_{Y,Y}} \in \R^{(d_X + 1)\times (d_X + 1)}.
\end{equation}
The following theorem outlines conditions for $\E[\vecin{X_{t-1}^\top,Y_{t-1}^\top}^\top \cI^\top]$ to have full rank when $d_I = 1$ and we use $d_X + d_Y = d_X+1$ lags as instruments:
\begin{restatable}{theorem}{jordanIdentifiability}
\label{prop:identifiability}
	Consider a process $S = \vecin{I_t^\top, X_t^\top, H_t^\top, Y_t^\top}^\top_{t\in\Z}$ satisfying \cref{assump:iv}.
	Assume that $d_I = d_Y = 1$ and let $\cI_{t}:= \{I_{t-2},\ldots, I_{t-m-1}\}$, where $m = d_X + d_Y$. 
	Let $A_{XY}$ and $A_I$ be defined as in \cref{eq:submatrices}.
	The following three statements are equivalent:
    \begin{enumerate}
		\item $\rank \E[\vecin{X_{t-1}^\top,Y_{t-1}^\top}^\top \cI_t^\top] = d_X + d_Y$.
		\item The matrix $\vec{A_{XY}^0 A_I,\, A_{XY}^1 A_I, \ldots, A_{XY}^{d_X}A_I}$ is invertible, where $A_{XY}^0$ is the identity matrix of size $(d_X+d_Y) \times (d_X+d_Y)$.
		\label{item:controlability}
		\item Different Jordan blocks of $J$ have different eigenvalues and for all $q \in \{1, \ldots, k\}$, the coefficient $w_{\sum_{i=1}^q m_i}$ is non-zero;
		here, $J = Q^{-1}A_{XY} Q$ is the Jordan normal form\footnote{See \cref{sec:proof-jordan-forms} for the definition of Jordan normal forms and the notation that we use.} of $A_{XY}$, with $k$ Jordan blocks $J = \diag(J_{m_1}(\lambda_1), \ldots, J_{m_k}(\lambda_k))$, each with size $m_i$ and eigenvalue $\lambda_i$, and $w$ are the coefficients of $A_I$ in the basis of the generalized eigenvectors $Q$, that is, $w = Q^{-1}A_I$.
	\end{enumerate}
\end{restatable}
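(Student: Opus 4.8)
I will prove the two equivalences $(1)\Leftrightarrow(2)$ and $(2)\Leftrightarrow(3)$ separately. The equivalence $(2)\Leftrightarrow(3)$ is the classical single-input controllability criterion, so the novel work lies in translating the cross-covariance rank condition $(1)$ into the controllability condition $(2)$. Throughout, abbreviate $\rho := \alpha_{I,I}$, write $V_t := [X_t^\top, Y_t^\top]^\top \in \R^{d_X+1}$, and note that, since each $I_{t-2-j}$ is univariate and there are $m = d_X+d_Y$ lags, the matrix $\Sigma := \E[V_{t-1}\cI_t^\top]$ is square of size $(d_X+1)\times(d_X+1)$, so in $(1)$ full rank means invertibility.

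\emph{Reduction of the subprocess dynamics.} Reading off the block structure of $A_1$ in \cref{fig:sub1}, the $(X,Y)$-subprocess evolves as $V_t = A_{XY}V_{t-1} + A_I I_{t-1} + A_H H_{t-1} + \epsilon^V_t$, where $A_H$ collects the $H$-columns and $\epsilon^V := [(\epsilon^X)^\top,\epsilon^Y]^\top$. Grouping coordinates as $(I,H)$ versus $(X,Y)$, the matrix $A_1$ is block lower-triangular, so its spectrum is the union of the spectra of $\alpha_{I,I}$, $\alpha_{H,H}$, and $A_{XY}$; by the stability clause of \cref{assump:varp} every eigenvalue of $A_{XY}$ therefore lies strictly inside the unit disc, and likewise $|\rho|<1$. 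Consequently $\sum_{k\ge 0}A_{XY}^k(\cdots)$ converges, and unrolling the recursion yields the MA representation $V_{t-1} = \sum_{k\ge 0}A_{XY}^k A_I\, I_{t-2-k} + (\text{terms in }H\text{ and }\epsilon^V)$.

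\emph{Closed form and column operations.} Because $I$ is driven only by its own i.i.d.\ innovations and $\Gamma$ is diagonal, $I$ is independent of both $H$ and $\epsilon^V$, so those terms are mean-zero and uncorrelated with each $I_{t-2-j}$ (formally via \cref{thm:gmp}). Writing $\gamma_I(h)=\sigma_I^2\rho^{|h|}$ for the AR($1$) autocovariance of $I$, the $j$-th column of $\Sigma$ is $C_j = \sigma_I^2\sum_{k\ge 0}\rho^{|k-j|}A_{XY}^k A_I$. Splitting the sum at $k=j$ and introducing the resolvent $P:=(I-\rho A_{XY})^{-1}$ (invertible since $1/\rho$ is not an eigenvalue of $A_{XY}$) gives $C_j = \sigma_I^2\big(\sum_{k=0}^{j-1}\rho^{j-k}A_{XY}^k A_I + A_{XY}^j P A_I\big)$. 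The key step is that $\Sigma$ differs from the controllability matrix only by invertible transformations: applying the rank-preserving unipotent column operation $C_j\mapsto C_j-\rho C_{j-1}$ for $j=1,\dots,m-1$ telescopes the finite partial sums and yields $C_0=\sigma_I^2 PA_I$ and $C_j-\rho C_{j-1}=\sigma_I^2(1-\rho^2)A_{XY}^j PA_I$. Rescaling each column by the nonzero scalar $\sigma_I^2(1-\rho^2)$ (nonzero by $|\rho|<1$) and using that $A_{XY}$ commutes with $P$, the resulting matrix equals $P\,[A_I, A_{XY}A_I,\dots,A_{XY}^{d_X}A_I]$. Since $P$ is invertible, $\rank\Sigma$ equals the rank of the controllability matrix, establishing $(1)\Leftrightarrow(2)$.

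\emph{Controllability via Jordan form, and the main obstacle.} Finally, $(2)\Leftrightarrow(3)$ is the standard Jordan/Hautus characterisation of controllability for a single input (here $A_I$ is one column because $d_I=1$): writing $A_{XY}=QJQ^{-1}$ and $w=Q^{-1}A_I$, the controllability matrix is invertible iff $[w, Jw,\dots,J^{d_X}w]$ is, which by a blockwise analysis of the block-diagonal $J$ holds iff no eigenvalue occurs in two Jordan blocks and the entry of $w$ at each block's last coordinate $\sum_{i=1}^q m_i$ is nonzero. I expect the main obstacle to be the column-operation step above: recognising the correct rank-preserving operation and verifying through the resolvent identity $\rho I + (A_{XY}-\rho I)P = (1-\rho^2)A_{XY}P$ that the infinite geometric tails collapse \emph{exactly} onto $P$ times the controllability matrix. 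The remaining ingredients—convergence and independence (via \cref{thm:gmp} and the stability assumption) and the blockwise bookkeeping for $(2)\Leftrightarrow(3)$—are routine by comparison.
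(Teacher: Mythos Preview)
Your proposal is correct and follows essentially the same route as the paper: deriving the closed form $C_j=\sigma_I^2\big(\sum_{k=0}^{j-1}\rho^{j-k}A_{XY}^k + A_{XY}^jP\big)A_I$ with $P=(I-\rho A_{XY})^{-1}$, applying the column operation $C_j\mapsto C_j-\rho C_{j-1}$ to collapse onto $P$ times the controllability matrix, and then invoking the cyclic-vector/Jordan characterisation for $(2)\Leftrightarrow(3)$. The only cosmetic differences are that you reach the closed form via the MA($\infty$) expansion while the paper argues via a one-step recursion and stationarity, and your statement ``rescaling each column by $\sigma_I^2(1-\rho^2)$'' is slightly imprecise since the first column carries only the factor $\sigma_I^2$---but as both scalars are nonzero this does not affect the rank argument.
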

Intuitively, \cref{prop:identifiability} states that identification of $\beta$ is possible if the information is passed on in a sufficiently diverse way from $I_{t-k}$ to $(X_t, Y_t)$. 
For every $k \in \{0, \ldots, d_X\}$, $A^{k}_{XY}A_I$ corresponds to the path $I_{t-(k+1)}\stackrel{A_I}{\rightarrow}(X_{t-k},Y_{t-k})\stackrel{A_{XY}}{\rightarrow}\ldots\stackrel{A_{XY}}{\rightarrow} (X_{t},Y_{t})$ in \cref{fig:sub3}.
\cref{prop:identifiability}~\ref{item:controlability} requires these to be sufficiently different (that is, linearly independent), for the matrix $\E[\vecin{X_{t-1}^\top,Y_{t-1}^\top}^\top \cI_t^\top]$ to have full rank. 
While there are parameter matrices that do not satisfy \cref{prop:identifiability} (see \cref{appendix:ts-niv-examples} for examples), the following corollary shows that, when chosen randomly, almost all parameter matrices $A$ allow for using multiple lags $I_{t-2-j}$ as instruments.
\begin{restatable}{corollary}{almostSure}
\label{thm:almost_sure_identify}
Consider a VAR($1$) process $S$ with $d_I = 1$ and parameter matrix $A$, 
and assume that sparsity pattern of $A$ is given by \cref{assump:iv} and that the non-zero entries of $A$ are drawn from any distribution which has density with respect to Lebesgue measure.
Then $\beta$ is identifiable with probability $1$.
\end{restatable}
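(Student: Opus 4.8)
The plan is to reduce the almost-sure claim to the non-vanishing of a single polynomial and then apply the standard fact that the zero set of a nonzero real polynomial is Lebesgue-null. Since $d_Y = 1$ throughout and $d_I = 1$ by hypothesis, I would take $m = d_X + 1$ lags and invoke \cref{prop:identifiability}: the equivalence of its statements~1 and~2 shows that the relevance rank condition $\rank\E[\vecin{X_{t-1}^\top, Y_{t-1}^\top}^\top \cI_t^\top] = d_X + d_Y$ holds if and only if the $(d_X+1)\times(d_X+1)$ matrix $C(A) \coloneqq \vec{A_{XY}^0 A_I, A_{XY}^1 A_I, \ldots, A_{XY}^{d_X} A_I}$ is invertible. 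By \cref{prop:ts-niv}, this rank condition is exactly what identifies $\beta$ through $\niv{X_{t-1}}{Y_t}(\cI_t, \cZ_t)$ with $\cZ_t = \{Y_{t-1}\}$. Hence it suffices to show that the event $\{\det C(A) = 0\}$ has probability zero.

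Next I would observe that $\det C(A)$ is a polynomial in the entries of $A$: the blocks $A_{XY}$ and $A_I$ are fixed submatrices of $A$ (see \cref{eq:submatrices}), the entries of every power $A_{XY}^k$ are polynomials in the entries of $A_{XY}$, each column $A_{XY}^k A_I$ is therefore polynomial in the entries of $A$, and the determinant is polynomial in these columns. Writing $p$ for this determinant viewed as a polynomial in the free (non-structurally-zero) entries of $A$, the claim reduces to $p \not\equiv 0$: the zero set of a nonzero real polynomial has Lebesgue measure zero, and the distribution of the free entries, having a density with respect to Lebesgue measure, assigns probability zero to any Lebesgue-null set. This reasoning is unaffected if one additionally restricts the draw to the open stationarity region of \cref{assump:varp}, since restricting an absolutely continuous measure to an open set leaves null sets null.

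The one step that requires a concrete construction, and the main (if mild) obstacle, is exhibiting a single admissible parameter matrix at which $p \neq 0$, so that $p \not\equiv 0$. The difficulty is that the sparsity pattern forces the $Y$-coordinate of $A_I$ to vanish, so $A_I$ cannot be taken to be an arbitrary cyclic vector. I would resolve this by choosing $A_{XY}$ to be the down-shift (nilpotent companion) matrix with ones on the sub-diagonal and zeros elsewhere and setting $A_I = e_1$, i.e.\ $\alpha_{X,I} = \vecin{1, 0, \ldots, 0}^\top \in \R^{d_X}$ with the forced $Y$-entry equal to $0$; this respects \cref{assump:iv} because every entry of $A_{XY}$ and of $\alpha_{X,I}$ is a free parameter and, since $d_X + 1 \geq 2$, the last coordinate of $e_1$ is indeed $0$. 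For this choice $A_{XY}^k A_I = e_{k+1}$ for $k = 0, \ldots, d_X$, so $C(A) = \vec{e_1, \ldots, e_{d_X+1}}$ is the identity matrix and $\det C(A) = 1 \neq 0$. This establishes $p \not\equiv 0$ and completes the argument.
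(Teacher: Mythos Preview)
Your argument is correct and complete. The reduction to the controllability matrix via statement~ii) of \cref{prop:identifiability}, the observation that $\det C(A)$ is a polynomial in the free entries, and the explicit nilpotent-shift example showing $p\not\equiv 0$ all go through; your handling of the stationarity region is also correct, since the exhibited point is only used to rule out $p\equiv 0$ on $\R^N$ and need not itself satisfy the eigenvalue constraint.

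The paper takes a different route, working instead through statement~iii) of \cref{prop:identifiability}: it argues that with probability one the eigenvalues of $A_{XY}$ are distinct (so the Jordan form is diagonal), and then that $w=Q^{-1}A_I$ has no zero entry almost surely because $Q$ is determined by $A_{XY}$ while $\alpha_{X,I}$ is drawn independently from a density. Your approach is arguably more self-contained: it avoids the spectral decomposition entirely, sidesteps the (slightly delicate) check that no row of $Q^{-1}$ is supported only on the last coordinate, and reduces everything to a single ``nonzero polynomial has null zero set'' step. The paper's approach, on the other hand, gives more structural insight into \emph{why} identifiability holds generically, connecting it directly to the eigenvalue geometry that \cref{prop:identifiability} exposes.
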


The following corollary provides a sufficient condition for identifiability of $\beta$, when instruments are multivariate.
\begin{restatable}{corollary}{moreThanOneInstrument}
\label{prop:identify_more_than_one_instrument}
    Consider a {VAR($1$)} process $S$ satisfying \cref{assump:iv} with $d_I > 1$ instrument processes $I^{(1)}, \ldots, I^{(d_I)}$. Assume that there is at least one instrument process $I^{(j)}$ such that both of the following conditions hold.
    \begin{enumerate}
        \item $I^{(j)}_{t}$ is independent of $I^{(i)}_{s}$ for all $t,s$ and $i \neq j$, and
        \item the requirements of \cref{prop:identifiability} are satisfied for the reduced process $(I^{(j)}, X, Y)$. 
    \end{enumerate}
    Then $\beta$ is identifiable.
\end{restatable}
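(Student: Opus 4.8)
The plan is to reduce the claim to the rank condition of \cref{prop:ts-niv} and then to transfer the full-rank certificate of \cref{prop:identifiability} from the single coordinate process $I^{(j)}$ to the multivariate instrument, using assumption~(1) to decouple $I^{(j)}$ from the remaining instruments. By \cref{prop:ts-niv}(ii), with nuisance regressor $\cZ_t := \{Y_{t-1}\}$ and lagged instrument set $\cI_t := \{I_{t-2}, \ldots, I_{t-m-1}\}$ (where now each $I_{t-k} \in \R^{d_I}$), it suffices to find some $m$ for which $\rank \E[\vecin{X_{t-1}^\top, Y_{t-1}^\top}^\top \cI_t^\top] = d_X + d_Y$. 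I would take $m = d_X + d_Y$, matching \cref{prop:identifiability}. The columns of this $(d_X + d_Y) \times (m d_I)$ matrix that are indexed by the $j$-th coordinate, namely $\cI_t^{(j)} := \{I^{(j)}_{t-2}, \ldots, I^{(j)}_{t-m-1}\}$, form a square submatrix $M^{(j)} := \E[\vecin{X_{t-1}^\top, Y_{t-1}^\top}^\top (\cI_t^{(j)})^\top] \in \R^{(d_X+d_Y)\times(d_X+d_Y)}$; since a matrix possessing a full-rank square submatrix has full row rank, it is enough to prove that $M^{(j)}$ is invertible.

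The heart of the argument is to identify $M^{(j)}$ with the cross-moment matrix of the reduced univariate process. Because $A_1$ is block lower triangular with diagonal blocks $\alpha_{I,I}, \alpha_{H,H}$, and $A_{XY}$ (in the ordering $I, H, X, Y$), the spectrum of $A_{XY}$ is contained in that of $A_1$ and hence lies strictly inside the unit disk; the $(X,Y)$-recursion of \cref{assump:iv} therefore unrolls as
\[
    \vecin{X_{t-1}^\top, Y_{t-1}^\top}^\top = \sum_{k \geq 0} A_{XY}^{k}\left( \mat{\alpha_{X,I} \\ 0} I_{t-2-k} + \mat{\alpha_{X,H} \\ \alpha_{Y,H}} H_{t-2-k} + \epsilon^{XY}_{t-1-k}\right),
\]
where $\epsilon^{XY}$ denotes the stacked $(X,Y)$-innovations. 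Cross-correlating with $I^{(j)}_{t-\ell}$, the $H$- and $\epsilon$-terms vanish because under \cref{assump:iv} the confounder $H$ and the innovations are independent of the instrument process, and by assumption~(1) we have $\E[I^{(i)}_{t-2-k} I^{(j)}_{t-\ell}] = 0$ for all $i \neq j$. Only the $j$-th coordinate of each $I_{t-2-k}$ survives, so that
\[
    \E\!\left[\vecin{X_{t-1}^\top, Y_{t-1}^\top}^\top I^{(j)}_{t-\ell}\right] = \sum_{k \geq 0} A_{XY}^{k}\, A_I^{(j)}\, \E\!\left[I^{(j)}_{t-2-k}\, I^{(j)}_{t-\ell}\right], \qquad A_I^{(j)} := \mat{\alpha_{X,I^{(j)}} \\ 0},
\]
with $\alpha_{X,I^{(j)}}$ the $j$-th column of $\alpha_{X,I}$. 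This is exactly the cross-moment expression that \cref{prop:identifiability} analyses for the univariate process $(I^{(j)}, X, Y)$, whose $A_I$ (cf.~\cref{eq:submatrices}) equals $A_I^{(j)}$ and whose $A_{XY}$ and autocovariances of $I^{(j)}$ coincide with those appearing above.

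To legitimately invoke \cref{prop:identifiability}, I would fold the remaining instruments $(I^{(i)})_{i \neq j}$, together with $H$, into an augmented latent confounder $\tilde H := (H, (I^{(i)})_{i \neq j})$ and check that $(I^{(j)}, \tilde H, X, Y)$ still satisfies \cref{assump:iv} with a univariate instrument: assumption~(1) forces $I^{(j)}$ to be neither driven by nor a driver of the remaining instruments, $(I^{(i)})_{i \neq j}$ does not enter the equation for $Y$ (since $\alpha_{Y,I} = 0$) and enters $X$ exactly as confounding would, while $\tilde H$ depends only on its own past. Assumption~(2) then states that the requirements of \cref{prop:identifiability} hold for this reduced process, so statement~1 of that theorem gives $\rank M^{(j)} = d_X + d_Y$. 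Hence the full cross-moment matrix has full row rank, and \cref{prop:ts-niv}(ii) implies that $\beta$ is identified by $\niv{X_{t-1}}{Y_t}(\cI_t, Y_{t-1})$.

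The step I expect to be the main obstacle is the second one: rigorously showing that $M^{(j)}$ coincides with the reduced-process cross-moment matrix. This depends entirely on assumption~(1) eliminating every cross-covariance between $I^{(j)}$ and the other instruments across all lags, and on confirming that the absorption of $(I^{(i)})_{i \neq j}$ into $\tilde H$ preserves the block structure of \cref{assump:iv}---in particular, deducing from the joint (Gaussian) independence that the off-diagonal blocks of $\alpha_{I,I}$ linking $I^{(j)}$ to the remaining instruments vanish, so that $I^{(j)}$ genuinely evolves as an autonomous univariate instrument and the unrolling above is exact.
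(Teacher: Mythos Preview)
Your proof is correct and shares the paper's key reduction: absorb the remaining instruments $(I^{(i)})_{i\neq j}$ into an augmented latent confounder $\tilde H := (H,(I^{(i)})_{i\neq j})$ so that $(I^{(j)},\tilde H,X,Y)$ again satisfies \cref{assump:iv} with a univariate instrument, and then invoke \cref{prop:identifiability} via assumption~(ii). The paper, however, stops right there: once $\beta$ is identified by the \emph{reduced} NIV estimator that uses only $I^{(j)}$, identifiability in the full system is immediate, because the reduced-process observations are a subset of the full observations. You instead go on to show that the \emph{full} multivariate NIV estimator (all $d_I$ instruments at $m=d_X+d_Y$ lags) satisfies the rank condition of \cref{prop:ts-niv}(ii), by exhibiting $M^{(j)}$ as a full-rank square submatrix of the cross-moment matrix. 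This route is longer---the MA$(\infty)$ unrolling and the matching of $M^{(j)}$ with the reduced-process cross-moments become unnecessary once one works directly in the reduced process---but it yields the bonus that the practically relevant estimator using \emph{all} instruments is also well-posed, something the paper's three-line argument does not explicitly provide. Your flagged ``main obstacle'' (deducing from assumption~(1) that the off-diagonal entries of $\alpha_{I,I}$ linking $I^{(j)}$ to the other instruments vanish) is indeed the one point the paper glosses over; your sketch of how to extract it from Gaussianity and the diagonal noise covariance is the right way to close it.
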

Using a single instrument at $d_X + 1$ lags allows for a simple condition for identifiability. 
But in finite samples, using instruments with high time lags may come at a loss of efficiency as the estimation procedure may suffer from weak dependencies between the residual and the instrument due to the mixing of the time series, see \cref{sec:simulations} for an empirical investigation.

\subsubsection{Extension to non-VAR \texorpdfstring{{Nonlinear}}{} Processes}
\label{sec:relaxing-varp}
The results in \cref{se:meth1,se:meth2} assume that the entire process $S$ is a VAR($1$) process, see \cref{assump:iv}, 
which was done mostly for presentation purposes. 
The key arguments and statements hold more generally.
We have argued that similar arguments hold for VAR($p$) processes and we now consider a setting, where $Y_t$ 
satisfies a linear structural equation (as in a VAR($1$) process) 
but we do not assume any VAR{($p$)} structure or linearity on the remaining subprocesses.
We outline the assumptions needed to obtain the same identification results as in \cref{se:meth1,se:meth2} (a similar relaxation can be applied when $Y_t$ behaves like a VAR($p$) process).
Let $[S_t]_{t\in\Z}=[I_t^\top,X_t^\top,H_t^\top,Y_t^\top]^\top_{t\in\Z}$ and assume that for all $t\in\Z$
\begin{equation}\label{eq:structural-Y}
    Y_t \coloneqq \beta X_{t-1} + \alpha_{Y,Y} Y_{t-1}+g(\epsilon_t^Y, H_{t-1}),
\end{equation}
where $\epsilon_t^Y$ is a sequence of i.i.d.\ random variables that, for all $t$, are independent of $[Y_s]_{s<t}$ and $[X_s, I_s, H_s]_{s\leq t}$ and $g$ is a measurable function. 
Without assuming a VAR{($p$)} process we make the following assumptions on $S$.
\begin{alt-assumpenum}
    \item \label{assump:S_stable} $[S_t]_{t\in\Z}$ is covariance stationary.
    \item \label{assump:empirical_moments} $S$ satisfies \cref{eq:assump_moments}, that is, empirical first and second moments converge to their population version.
    \item \label{assump:CIV_general} There exists a $p\in\N$ such that for 
    all $i \in \{1, \ldots, d\}$, there exists a function $f^i$ such that for all $t\in\Z$ 
    \begin{align*}
         S^i_t=f^i(S_{t-1},...,S_{t-p})+\varepsilon^i_t.
     \end{align*}
     This induces a full time graph $\Gfull$ \citep{peters2013causal};
     we assume that for all finite disjoint collections of nodes $A, B, C$ from $\Gfull$ such that $A$ and $C$ are $d$-separated given $B$ in $\Gfull$, we have $A \indep C | B$.
     Furthermore, 
     for all $t\in\Z$, $H_{t-1}\in \ND{I_{t-2}}$, $Y_{t-1}\in \ND{I_{t-2}}$ and  $\varepsilon_Y^t$ is independent of any finite set $A \subseteq \ND{Y_t}$. 
     \item \label{assump:inst_indep_of_noise} For all $t\in \Z$ and $m\in \N$, we have $(\varepsilon_t^Y,H_{t-1})\indep (I_{t-2},..., I_{t-2-m})$.
\end{alt-assumpenum}
Using these assumptions, we can restate \cref{prop:ts-civ} without the VAR(1) assumption.

\begin{restatable}[Identification with conditioning set relaxing the VAR{($p$)} assumption]{proposition}{tsCivgen}
\label{thm:CIV_general}
Consider a process $S = \vecin{I_t^\top, X_t^\top, H_t^\top, Y_t^\top}^\top_{t\in\Z}$ satisfying \cref{eq:structural-Y,assump:S_stable,assump:empirical_moments,assump:CIV_general}.
    Let $\cB_t$ be a set of variables satisfying $\PA{(I_{t-2})}\subseteq\cB_t\subseteq \ND{Y_{t}}\cap \ND{I_{t-2}}$ in $\Gfull$.
    Then, (i), (ii) and (iii) from \cref{prop:ts-civ} hold. 
\end{restatable}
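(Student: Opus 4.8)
The plan is to follow the three-part structure of \cref{prop:ts-civ}, replacing the VAR($1$)-specific computations by the graphical hypothesis \cref{assump:CIV_general} together with the structural equation \eqref{eq:structural-Y}. The starting point for (i) is that \eqref{eq:structural-Y} lets us write the CIV residual as
\[
    Y_t - \beta X_{t-1} = \alpha_{Y,Y}\, Y_{t-1} + g(\epsilon_t^Y, H_{t-1}),
\]
so it is a measurable function of the triple $(Y_{t-1}, H_{t-1}, \epsilon_t^Y)$. It therefore suffices to prove $(Y_{t-1}, H_{t-1}, \epsilon_t^Y) \indep I_{t-2} \mid \cB_t$: the residual is then conditionally independent of $I_{t-2}$, so $\cov(Y_t - \beta X_{t-1}, I_{t-2}\mid\cB_t) = 0$ almost surely, and taking expectations gives the moment equation in (i).

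First I would show $\{Y_{t-1}, H_{t-1}\} \indep I_{t-2} \mid \cB_t$ graphically. The combinatorial core is the elementary fact that, in any DAG, for a node $v$ and a set $B$ with $\PA(v) \subseteq B \subseteq \ND{v}$, the node $v$ is $d$-separated from $\ND{v}\setminus B$ given $B$: a path leaving $v$ through a parent is blocked at that parent, which is a non-collider in $B$; a path leaving $v$ through a child can reach a non-descendant only after a first collider, and this first collider is reached by a directed subpath from $v$, hence lies in $\DE{v}$ together with all its descendants and is thus not opened by conditioning on $B \subseteq \ND{v}$. I would apply this with $v = I_{t-2}$ and $B = \cB_t$, noting that the hypothesis $\PA(I_{t-2}) \subseteq \cB_t \subseteq \ND{Y_t}\cap\ND{I_{t-2}}$ supplies the required inclusions and that $Y_{t-1}, H_{t-1} \in \ND{I_{t-2}}$ by \cref{assump:CIV_general}; components of $\{Y_{t-1}, H_{t-1}\}$ that happen to lie in $\cB_t$ are conditionally degenerate and cause no difficulty. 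This yields $d$-separation, and the global Markov property assumed in \cref{assump:CIV_general} upgrades it to conditional independence. The argument is valid even though $\Gfull$ is infinite, since paths are finite vertex sequences and the Markov property is assumed for finite node collections.

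Next I would incorporate the innovation $\epsilon_t^Y$. Because $I_{t-2}, Y_{t-1}, H_{t-1}$ and every element of $\cB_t$ lie in $\ND{Y_t}$, assumption \cref{assump:CIV_general} gives that $\epsilon_t^Y$ is independent of the finite set $\{I_{t-2}, Y_{t-1}, H_{t-1}\}\cup\cB_t$, so in particular $\epsilon_t^Y \indep (I_{t-2}, Y_{t-1}, H_{t-1}) \mid \cB_t$. Combining this with the conditional independence from the previous paragraph through the standard factorization of conditional laws gives $(Y_{t-1}, H_{t-1}, \epsilon_t^Y) \indep I_{t-2}\mid\cB_t$, completing (i); the marginal multi-lag independence \cref{assump:inst_indep_of_noise} is the companion condition used for the NIV variant and is not needed for this CIV argument. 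Part (ii) is then linear algebra: subtracting the moment equation for $\beta$ from one for an arbitrary solution $b$ gives $(\beta - b)\,\E[\cov(X_{t-1}, I_{t-2}\mid\cB_t)] = 0$, and since $\E[\cov(X_{t-1}, I_{t-2}\mid\cB_t)]\in\R^{d_X\times d_I}$ has full row rank $d_X$ its left null space is trivial, forcing $b = \beta$.

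Finally, for (iii) I would invoke the continuous mapping theorem. The output of \cref{alg:main-method} is a fixed rational function of empirical first and second moments of the aligned variables $\bY_t, \bX_{t-1}, \bI_{t-2}$ and the regressors $\bB_t$, and these moments converge in probability to their population counterparts by \cref{assump:empirical_moments} (i.e.\ \eqref{eq:assump_moments}) under stationarity \cref{assump:S_stable}; evaluated at the population moments the map returns the population CIV solution, which is $\beta$ by (i) and (ii). I expect the only genuinely delicate point to be here: the residuals $r_\bX, r_\bY, r_\bI$ must be consistent for the conditional-expectation residuals $X_{t-1}-\E[X_{t-1}\mid\cB_t]$, etc., rather than for linear-projection residuals. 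Dropping the VAR($1$) assumption breaks linearity of these conditional expectations, and a simple example in which the residual and the instrument are both quadratic in a Gaussian $\cB_t$ shows that replacing the conditional expectation by its linear projection does not preserve the moment equation. Consistency therefore hinges on the regression step estimating the conditional expectation consistently (which, in the VAR($1$) special case of \cref{prop:ts-civ}, reduces to ordinary least squares because the conditional expectations are linear); granting this, (iii) follows from the continuous mapping argument exactly as in \cref{prop:ts-civ}.
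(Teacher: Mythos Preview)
Your proposal is correct and follows essentially the same route as the paper: the decomposition $Y_t-\beta X_{t-1}=\alpha_{Y,Y}Y_{t-1}+g(\epsilon_t^Y,H_{t-1})$, the $d$-separation argument for $\{Y_{t-1},H_{t-1}\}$ via $\PA(I_{t-2})\subseteq\cB_t\subseteq\ND{I_{t-2}}$, and the combination with the independence of $\epsilon_t^Y$ from non-descendants are exactly what the paper does (your packaging is slightly tighter, treating $Y_{t-1}$ and $H_{t-1}$ jointly from the outset, whereas the paper handles $Y_{t-1}$ separately from $(H_{t-1},\epsilon_t^Y)$ and then invokes contraction). Your caveat on (iii)---that consistency requires the regression step to estimate the possibly nonlinear conditional expectations, not merely linear projections---is a valid point that the paper passes over by simply referring to the proof of \cref{prop:dream-theorem}.
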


Similarly, we can extend 
\cref{prop:ts-niv} to more general time series models, too.
\begin{restatable}[Identification with nuisance regressor relaxing the VAR{($p$)} assumption]{proposition}{tsNivgen}
    \label{thm:NIV_general}
    Consider a process $S = \vecin{I_t^\top, X_t^\top, H_t^\top, Y_t^\top}^\top_{t\in\Z}$ satisfying \cref{eq:structural-Y,assump:S_stable,assump:inst_indep_of_noise,assump:empirical_moments}.
    Let 
    $\cZ_t \coloneqq \{Y_{t-1}\}$
    and $\cI_t \coloneqq \{I_{t-2},\ldots, I_{t-m-1}\}$ for an $m \geq 1$. 
    Then, (i), (ii), and (iii) from \cref{prop:ts-niv} hold.
\end{restatable}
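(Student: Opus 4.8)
The plan is to establish the three conclusions of \cref{prop:ts-niv} directly from the structural form of $Y_t$ in \cref{eq:structural-Y} together with the exogeneity assumption \cref{assump:inst_indep_of_noise}, thereby bypassing the graphical (marginalization and global Markov) machinery that the VAR proof of \cref{prop:ts-niv} inherits from \cref{prop:dream-theorem}. Throughout I would write $\alpha := \alpha_{Y,Y}$ and abbreviate the augmented regressor by $\tilde X_t := [X_{t-1}^\top, Y_{t-1}^\top]^\top$, so that $\cZ_t = \{Y_{t-1}\}$ enters \cref{eq:structural-Y} with coefficient $\alpha$. For part~(i) I would rearrange \cref{eq:structural-Y} into the exact identity
\begin{equation*}
    Y_t - \beta X_{t-1} - \alpha Y_{t-1} = g(\epsilon_t^Y, H_{t-1}).
\end{equation*}
The right-hand side is a measurable function of $(\epsilon_t^Y, H_{t-1})$, and since $\cI_t = \{I_{t-2},\dots,I_{t-m-1}\}$ is contained in $\{I_{t-2},\dots,I_{t-2-m}\}$, \cref{assump:inst_indep_of_noise} yields $(\epsilon_t^Y,H_{t-1})\indep\cI_t$ and hence $g(\epsilon_t^Y,H_{t-1})\indep\cI_t$. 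Independence forces the covariance to vanish, so with this choice $\alpha=\alpha_{Y,Y}$ the NIV moment condition $\E[\cov(Y_t-\beta X_{t-1}-\alpha Y_{t-1},\cI_t)]=0$ holds (the conditioning set is empty, so the conditional covariance collapses to the unconditional one). This is the only place the structural assumption is used, and it replaces the $d$-separation step of the VAR argument.

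For part~(ii) I would run the standard IV uniqueness argument. Suppose some $(b,a)$ also satisfies $\E[\cov(Y_t-bX_{t-1}-aY_{t-1},\cI_t)]=0$. Subtracting this from the moment equation of part~(i) and using bilinearity of the covariance together with the zero-mean assumption gives
\begin{equation*}
    [\beta-b,\;\alpha-a]\,\E[\tilde X_t\,\cI_t^\top]=0.
\end{equation*}
By the assumed rank condition, $\E[\tilde X_t\,\cI_t^\top]$ has full row rank $d_X+d_Y$, so its left null space is trivial and $[\beta-b,\alpha-a]=0$; in particular $b=\beta$, which is exactly the statement that $\beta$ is identified by $\niv{X_{t-1}}{Y_t}(\cI_t,\cZ_t)$.

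For part~(iii) I would note that the estimator returned by \cref{alg:main-method} is precisely the closed-form expression \cref{eq:civ-closed-solution} specialized to an empty conditioning set (so $r_\bX=\bX$, $r_\bY=\bY$, $r_\bI=\bI$), which is a fixed continuous function of the empirical second moments appearing in \cref{eq:assump_moments}. By \cref{assump:empirical_moments} these empirical moments converge in probability to their population counterparts, and by covariance stationarity \cref{assump:S_stable} those population quantities are well defined and $t$-invariant. The limit of the continuous map is the population NIV solution, which by part~(ii) equals $[\beta,\alpha]$; the continuous mapping theorem then delivers $[\hat b,\hat a]\convP[\beta,\alpha]$, and taking the first $d_X$ columns gives $\hat\beta\convP\beta$, mirroring \cref{prop:dream-theorem}(iii).

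The main obstacle is not conceptual but a continuity check inside part~(iii): one must verify that the map is evaluated at a point of continuity, i.e.\ that the Gram-type matrix $\E[\tilde X_t\,\cI_t^\top]\,W\,\E[\cI_t\,\tilde X_t^\top]$ being inverted is nonsingular in the limit. This follows because the full-row-rank condition from part~(ii), combined with positive definiteness of the limiting TSLS weight $W=\E[\cI_t\cI_t^\top]^{-1}$, makes the product positive definite; one additionally needs $\E[\cI_t\cI_t^\top]$ invertible so that the empirical weight converges, a mild non-degeneracy requirement on the instruments. The genuinely new ingredient relative to the VAR proof is the observation that \cref{eq:structural-Y} together with \cref{assump:inst_indep_of_noise} render the residual exogenous to $\cI_t$ \emph{without} invoking the global Markov property, so that the finite-sample and identifiability arguments of the i.i.d.\ NIV setting carry over essentially verbatim.
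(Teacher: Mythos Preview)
Your proof is correct and follows essentially the same approach as the paper: part~(i) uses \cref{eq:structural-Y} to write the residual as $g(\epsilon_t^Y,H_{t-1})$ and invokes \cref{assump:inst_indep_of_noise} plus measurability of $g$, part~(ii) reduces to the same rank condition as in \cref{prop:ts-niv}, and part~(iii) applies Slutsky/continuous mapping to the empirical moments via \cref{assump:empirical_moments}. Your treatment is somewhat more explicit (spelling out the subtraction argument for uniqueness and the invertibility check for the limiting Gram matrix), but the underlying strategy is identical to the paper's.
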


\Cref{eq:structural-Y} allows for binary covariates or instruments. The assumed model in $\cref{eq:structural-Y}$ does not allow for the response $Y_t$ to be binary, though \cref{thm:CIV_general,thm:NIV_general} readily extend to other parameterizations of the parameter of interest.
{Another generalization that is possible, is to let $\beta_t$ change over time. We show in \cref{app:time-inhomogeneous-effect} that one can also develop estimating equations for this setting.}

So far, we have focused on estimating causal effects. Knowledge of such causal effects can be of interest in itself. In the following section, we discuss that they can also be used for prediction and forecasting under interventions.

\subsection{Optimal Prediction under Interventions} \label{sec:prediction_intervention}
Causal estimates may facilitate improved prediction under intervention.
Suppose that we have inferred the causal effect $X_{t-1}\rightarrow Y_t$ for instance using the methods presented in \cref{sec:iv-civ-var} above. 
How do we best predict $Y_{t+1}$ given that we 
perform the intervention $\operatorname{do}(X_t:=x)$ (see \cref{sec:interventions} for an introduction to do-interventions)
and that we observe the past values of the time series
(but the value of $X_t$ had it not been intervened on is not observed)?

Due to the hidden confounding, the conditional mean of $Y_{t+1}$ given its past (which could be consistently estimated by OLS regression, for example) is in general not optimal in terms of mean squared prediction error (MSPE). Intuitively, this is because the conditional mean also encompasses the effects of the latent process $H$ onto $Y_{t+1}$. 
In the i.i.d.\ setting, it has been observed that using the causal parameter yields a smaller MSPE and can be worst-case optimal under arbitrarily large interventions
\citep[e.g.,][]{Rojas2016, Christiansen2020DG}.

In VAR{($p$)} processes, the intervention $\operatorname{do}(X_t := x)$ partially breaks the confounding of $X_{t}$ and $Y_{t+1}$ from the past{. If an unobserved} process $H$ {confounds $X$ and $Y$}, the {marginalized} process $(X,Y)$ is not a Markov process {(neither with or without intervention), and therefore} the lagged observations $\{X_{t-k}, k = 1, \ldots, m\}$ and $\{Y_{t-j}, j = 1, \ldots, l\}$ {can} further improve prediction of $Y_{t+1}${, due to being informative of $H$, and therefore of $Y_{t+1}$}.
Fixing the number of lags, the following proposition shows that, under the intervention $\operatorname{do}(X_t := x)$, the optimal linear prediction consist of a mix of (population) regression parameters for non-intervened variables and causal parameters for the intervened variable $X_t$.
\begin{restatable}{proposition}{optimalPrediction}
\label{prop:optimal_prediction}
    Consider a {VAR($1$)} process $S=[S_t]_{t\in \mathbb{Z}}$ satisfying~\cref{assump:iv}.
    Let $\beta$ be the causal effect from $X_t$ to $Y_{t+1}$, and let, for an arbitrary $m, \ell \in \mathbb{N}$,
    $(\alpha_{Y,X}, \alpha_{Y,Y})$ be 
    the population vector of coefficients when regressing $Y_{s+1}-\beta X_s$ on $\{{X}_{s-k}, k = 1, \ldots, m\} \cup \{{Y}_{s-j}, j = 0, \ldots, l\}$.
    Then 
    \begin{align*}
        (\alpha_{Y,Y},\beta,\alpha_{Y,X})
        = \arg\min_{a, b, c}\E_{\operatorname{do}(X_t:=x)}\left\{Y_{t+1} - \sum_{j=0}^\ell a_j Y_{t-j} - bX_{t} - \sum_{k=1}^m c_k X_{t-k}\right\}^2.
    \end{align*}
\end{restatable}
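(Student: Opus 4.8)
The plan is to reduce the interventional mean–square objective to an observational least–squares problem that decouples into two independent one–shot minimizations. Write the lagged predictors as the $Y$-lags $\{Y_{t-j}\}_{j=0}^\ell$ and the $X$-lags $\{X_{t-k}\}_{k=1}^m$. Under \cref{assump:iv} all of these lie in $\ND{X_t}$ in the full time graph, since their time indices are at most $t$ while every descendant of $X_t$ has time index at least $t+1$; in particular $H$ is never a descendant of $X$. Starting from the $Y$-row of $A_1$, namely $Y_{t+1} = \beta X_t + \alpha_{Y,H}H_t + \alpha_{Y,Y}Y_t + \epsilon^Y_{t+1}$, I define the \emph{causal residual} $\xi_{t+1} := Y_{t+1} - \beta X_t$ and record the key structural fact that $\xi_{t+1}$ is a (linear, hence measurable) function of $H_t$, $Y_t$, and the innovation $\epsilon^Y_{t+1}$ alone, and in particular does not depend on $X_t$.

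Next I would use the interventional semantics of the SCM (see \cref{sec:interventions}): the intervention $\operatorname{do}(X_t := x)$ replaces the mechanism of $X_t$ by the constant $x$ and leaves the mechanisms of all non-descendants of $X_t$ unchanged. Consequently, under the intervention, (i) $X_t = x$ deterministically, and (ii) the joint law of $\big(H_t, Y_t, \epsilon^Y_{t+1}, \{Y_{t-j}\}_{j=0}^\ell, \{X_{t-k}\}_{k=1}^m\big)$ equals its observational law. Since $\xi_{t+1}$ is a fixed function of $(H_t, Y_t, \epsilon^Y_{t+1})$, the joint law of $\big(\xi_{t+1}, \{Y_{t-j}\}, \{X_{t-k}\}\big)$ under $\operatorname{do}(X_t:=x)$ coincides, by covariance stationarity, with the observational joint law of $\big(Y_{s+1}-\beta X_s, \{Y_{s-j}\}, \{X_{s-k}\}\big)$. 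Substituting $Y_{t+1} = \beta X_t + \xi_{t+1}$ and $X_t = x$ into the objective then gives
\[
\E_{\operatorname{do}(X_t:=x)}\!\Big\{Y_{t+1} - \sum_{j=0}^\ell a_j Y_{t-j} - bX_t - \sum_{k=1}^m c_k X_{t-k}\Big\}^2
= \E\,\big\{(\beta-b)x + g(a,c)\big\}^2,
\]
where the right–hand expectation is observational, $g(a,c) := \xi - \sum_{j=0}^\ell a_j Y_{s-j} - \sum_{k=1}^m c_k X_{s-k}$, and $\xi := Y_{s+1}-\beta X_s$.

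I would then exploit that the process has zero mean, so $\E[g(a,c)] = 0$ for every $(a,c)$; the deterministic term $(\beta-b)x$ therefore contributes no cross term, and expanding the square yields
\[
L(a,b,c) = \big((\beta-b)x\big)^2 + \E\{g(a,c)^2\}.
\]
The two summands depend on disjoint parameters ($b$ versus $(a,c)$), so $L$ is minimized by minimizing them separately. The first term is nonnegative and vanishes iff $b=\beta$ (and $b=\beta$ remains a minimizer in the degenerate case $x=0$). The second term is exactly the population least–squares criterion for regressing $\xi = Y_{s+1}-\beta X_s$ on the $Y$- and $X$-lags, whose minimizer is by definition $(\alpha_{Y,Y}, \alpha_{Y,X})$; under the standard non-degeneracy assumption that the second–moment matrix of the lagged predictors is invertible, the minimizers are unique. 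This delivers $(\alpha_{Y,Y},\beta,\alpha_{Y,X})$ as claimed.

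The main obstacle is step (ii): rigorously establishing that the joint distribution of the causal residual $\xi_{t+1}$ together with the lagged predictors is invariant under $\operatorname{do}(X_t:=x)$. This is where the SCM interventional semantics enter, combined with the observation that every relevant predictor is a non-descendant of $X_t$ while $\xi_{t+1}$ carries no dependence on $X_t$. Once this invariance is secured, the remainder is the elementary mean-zero decomposition and two decoupled minimizations, so I expect no further difficulty there.
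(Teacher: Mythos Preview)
Your proposal is correct and follows essentially the same approach as the paper: both expand $Y_{t+1}$ via the structural equation, use that the lagged predictors (and the residual $\xi_{t+1}=\alpha_{Y,Y}Y_t+\alpha_{Y,H}H_t+\epsilon^Y_{t+1}$) are non-descendants of $X_t$ so their joint law is unchanged under $\operatorname{do}(X_t:=x)$, kill the cross term via the zero-mean assumption, and then decouple into the trivial minimization $b=\beta$ and the observational least-squares problem for $(a,c)$. Your explicit naming of $\xi_{t+1}$ and $g(a,c)$ is a cosmetic repackaging of the paper's three-term expansion, not a different argument.
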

That is, under the intervention $\operatorname{do}(X_t := x)$, the causal coefficient can be used to optimally predict $Y_{t+1}$. We state the corresponding finite sample algorithm in Appendix~\ref{sec:apppredunderinterv}.

\section{Simulation Experiments}\label{sec:simulations}
We test the empirical performance of our proposed estimators in simulation experiments. 
(\Cref{sec:exp-fixed-generating,sec:weak-instruments-experiment} discuss additional experiments on settings with weak instruments.)

\paragraph{Data generating process.} 
We generate data by first simulating a matrix $A$ with the sparsity structure of \cref{fig:sub1} and all non-zero entries being drawn independently uniformly at random from $(-0.9, -0.1)\cup(0.1,0.9)$. By \cref{thm:almost_sure_identify}, the causal effect under this sampling scheme is almost surely identifiable by NIV. Unless specified differently, we use $d_I = 3, d_X = 2$ and $d_Y = 1$. 
Any such randomly generated matrix is used to generate data only if it satisfies the eigenvalue condition in \cref{assump:varp} with a margin of $0.1$. Furthermore, all noise variables $\varepsilon_t^{i}$ are independently randomly drawn from a normal distribution with mean $0$ and standard deviation $1$.

\paragraph{Evaluation of the estimators.}
We simulate $m=1{,}000$ random matrices and for each we simulate $s=10$ data sets. 
We fit estimators $\hat\beta_j$ for each $j \in \{1, \ldots, s\}$, and for a given matrix, we compute the average error, 
$\operatorname{error}(\hat\beta) \coloneqq \operatorname{mean}( \|\hat\beta_1 - \beta\|_2^2, \ldots, \|\hat\beta_s - \beta\|_2^2)$.

\subsection{Identification of the causal effect}
\label{sec:Sim_compare}
\paragraph{Consistency}
The estimators CIV and NIV are consistent
(\cref{prop:ts-civ,prop:ts-niv}) and we perform an experiment to compare their finite sample properties for different sample sizes. 
We simulate data from the scheme described above and fit two $\civ{X_{t-1}}{Y_t}(I_{t-2}|\cB_t)$ estimators, where $\cB_t = \{I_{t-3}\}$ ($\CIV_{I}$) or $\cB_t = \{I_{t-3}, X_{t-2}, Y_{t-1}\}$ ($\CIV_{I,X,Y}$)
and two $\niv{X_{t-1}}{Y_t}(\cI_t, Y_{t-1})$ {estimators},
where $\cI_t = \{I_{t-2}\}$ ($\NIV_{\text{1 lag}}$) or $\cI_t = \{I_{t-2}, I_{t-3}, I_{t-4}\}$ ($\NIV_{\text{3 lag}}$).
All of these estimators are consistent (see \cref{sec:iv-civ-var}).
We plot the errors obtained for different sample sizes $T$ in \cref{plot:compare} (left).
For all estimators, the errors decrease with increasing sample size supporting the consistency results in \cref{prop:dream-theorem}.
In general, there is no empirical support for either of the NIV or CIV estimators to be strictly better than the others in terms of speed of convergence in terms of sample size.
As discussed in \cref{se:meth1}, both $\cB_t=\{I_{t-3}\}$ and $\cB_t=\{I_{t-3}, X_{t-2}, Y_{t-1}\}$ block confounding from past values. We observe that removing $X_{t-2}$ and $Y_{t-1}$ from $\cB_t$ increases the upper tail of the error distribution, supporting the intuition that while the conditioning variables $X_{t-2}$ and $Y_{t-1}$ are not necessary for identification, they reduce finite sample variance.
Similarly, for the NIV estimators, using $3$ lags instead of a single lag shrinks the upper tail of the error distribution. {In \cref{sec:error-correlation-matrix} we print the correlation matrices between the error of the estimators, and in \cref{sec:exp-fixed-generating} we conduct a similar experiment, but with fixed data generating mechanisms.}
\begin{figure}[t]
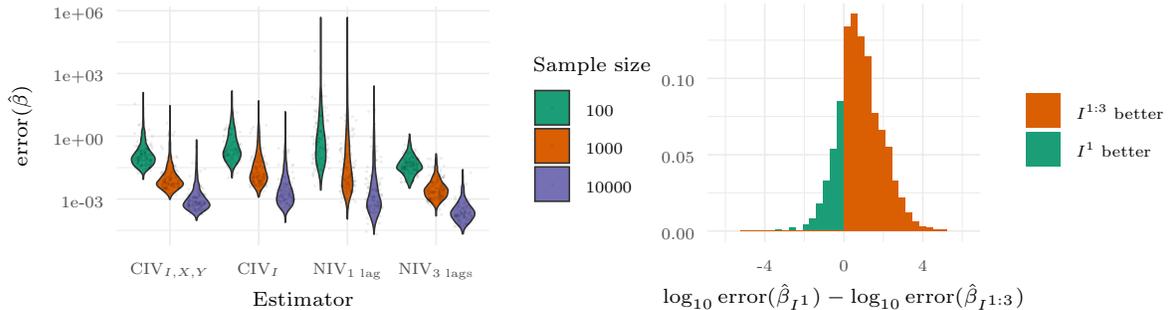

    \centering
    {\scriptsize
    \begin{subfigure}{0.55\textwidth}

     \end{subfigure}
    }
    \caption{
    (\textit{Left}) Distributions of the average error (in log scale) of different consistent CIV and NIV estimators for various sample sizes $T$, see Section~\ref{sec:Sim_compare} (`Consistency'). 
    Each point corresponds to the average over repeated draws from the same parameter matrix, and different points correspond to different parameter matrices.
    The $\NIV_{1\text{ lag}}$ estimators have heavier tails, corresponding to the fact that this is a just-identified case, where we use a $3$-dimensional instrument to estimate a $3$-dimensional causal effect.
    (\textit{Right})
    Histogram of log error ratios for two different NIV estimators in a model with a $3$-dimensional instrument. $I^{1}$ uses $6$ lags from a $1$-dimensional instrument process, while $I^{1:3}$ uses $2$ lags of a $3$-dimensional instrument process, see Section~\ref{sec:Sim_compare} (`Using more lags or additional instruments'). A value larger than zero indicates that $I^{1}$ yields a larger error than $I^{1:3}$. This is the case for the majority of the considered settings.
    }
    \label{plot:compare}
\end{figure}

\paragraph{Using more lags or additional instruments.} 
We compare using multiple lags of a single instrument to 
using multiple instrument processes. 
We consider the model described above with $d_I=3$ independent instrument processes, and use either the first instrument process ($I^{1}$) or all three instrument processes ($I^{1:3}$) for estimation in NIV. For $I^{1:3}$ we use 2 lags of each of the three processes (`recent instruments'), while for $I^{1}$ we use 6 lags (`distant instruments'), such that both models use in total $6$ instruments. In this way, we can inspect the benefit of using more recent instruments if available.

\Cref{plot:compare} (right) shows a histogram of the log error ratio of the two different estimators: A large value indicates that model $I^{1}$, which uses many lags of only a single instrument process, incurs a higher error. 
In the majority of parameter settings, using recent instruments yields a lower error, in some cases several orders of magnitude, when compared to the more distant instruments. 
Although adding lags of a univariate instrument process can yield identifiability of the causal effect of a regressor process that is not univariate (see \cref{prop:identifiability}), this simulation experiment supports the notion that using more instruments (if available) is preferred over using more distant lags. 

\paragraph{Estimation close to non-identifiability.}
\Cref{ex:not_identifiable} in \cref{appendix:ts-niv-examples} shows a setting, in which 
$\beta$ is not identifiable by NIV (in the setting of \cref{thm:almost_sure_identify}, this happens with probability zero).
We examine the behaviour of the NIV estimator in scenarios that are close to this non-identifiable setting.
We consider $d_I = 1$, $\alpha_{X,X}=\diag(-0.6, -0.6+\Delta)$ and draw the remaining parameters uniformly from $(-0.9,-0.1)\cup (0.1,0.9)$.
As per \cref{thm:almost_sure_identify}, the causal effect is identifiable, except for $\Delta = 0$, 
in which case $\alpha_{X,X}$ has two Jordan blocks with the same eigenvalue.
In \cref{plot:non_ident} we plot the median error for changing $\Delta$ and sample size $T$.\footnote{Here, we report the median, since the non-identifiability when $\Delta=0$ implies that the mean is ill-behaved; for those lines where $\Delta\neq 0$, plotting the mean instead of the median yields a similar plot (not shown).}
The further $\Delta$ is from $0$, the faster (in terms of sample size) the estimator converges to $\beta$.
In the non-identified setting $\Delta = 0$, we do not observe the error to decrease with increasing sample size. 
This observation is in line with \cref{prop:identifiability}.
\begin{figure}[t]
    \centering
    {\scriptsize
\begin{tikzpicture}[x=1pt,y=1pt]
\definecolor{fillColor}{RGB}{255,255,255}
\begin{scope}
\definecolor{drawColor}{gray}{0.92}

\path[draw=drawColor,line width= 0.3pt,line join=round] ( 42.95, 50.40) --
	(203.99, 50.40);

\path[draw=drawColor,line width= 0.3pt,line join=round] ( 42.95, 74.73) --
	(203.99, 74.73);

\path[draw=drawColor,line width= 0.3pt,line join=round] ( 42.95, 99.06) --
	(203.99, 99.06);

\path[draw=drawColor,line width= 0.6pt,line join=round] ( 42.95, 38.23) --
	(203.99, 38.23);

\path[draw=drawColor,line width= 0.6pt,line join=round] ( 42.95, 62.56) --
	(203.99, 62.56);

\path[draw=drawColor,line width= 0.6pt,line join=round] ( 42.95, 86.89) --
	(203.99, 86.89);

\path[draw=drawColor,line width= 0.6pt,line join=round] ( 42.95,111.23) --
	(203.99,111.23);

\path[draw=drawColor,line width= 0.6pt,line join=round] ( 58.54, 30.69) --
	( 58.54,120.97);

\path[draw=drawColor,line width= 0.6pt,line join=round] ( 84.51, 30.69) --
	( 84.51,120.97);

\path[draw=drawColor,line width= 0.6pt,line join=round] (110.49, 30.69) --
	(110.49,120.97);

\path[draw=drawColor,line width= 0.6pt,line join=round] (136.46, 30.69) --
	(136.46,120.97);

\path[draw=drawColor,line width= 0.6pt,line join=round] (162.44, 30.69) --
	(162.44,120.97);

\path[draw=drawColor,line width= 0.6pt,line join=round] (188.41, 30.69) --
	(188.41,120.97);
\definecolor{drawColor}{RGB}{27,158,119}

\path[draw=drawColor,line width= 0.6pt,line join=round] ( 58.54,113.43) --
	( 84.51,115.41) --
	(110.49,113.31) --
	(136.46,112.25) --
	(162.44,114.25) --
	(188.41,114.84);
\definecolor{drawColor}{RGB}{217,95,2}

\path[draw=drawColor,line width= 0.6pt,line join=round] ( 58.54,116.87) --
	( 84.51,112.42) --
	(110.49,113.71) --
	(136.46,110.80) --
	(162.44,112.99) --
	(188.41,111.14);
\definecolor{drawColor}{RGB}{117,112,179}

\path[draw=drawColor,line width= 0.6pt,line join=round] ( 58.54,114.20) --
	( 84.51,109.90) --
	(110.49,109.96) --
	(136.46, 99.82) --
	(162.44, 95.93) --
	(188.41, 80.79);
\definecolor{drawColor}{RGB}{231,41,138}

\path[draw=drawColor,line width= 0.6pt,line join=round] ( 58.54,107.69) --
	( 84.51, 94.18) --
	(110.49, 89.05) --
	(136.46, 72.02) --
	(162.44, 66.04) --
	(188.41, 49.20);
\definecolor{drawColor}{RGB}{102,166,30}

\path[draw=drawColor,line width= 0.6pt,line join=round] ( 58.54,100.43) --
	( 84.51, 81.41) --
	(110.49, 74.37) --
	(136.46, 58.73) --
	(162.44, 48.82) --
	(188.41, 34.79);
\end{scope}
\begin{scope}
\definecolor{drawColor}{gray}{0.30}

\node[text=drawColor,anchor=base east,inner sep=0pt, outer sep=0pt, scale=  0.88] at ( 38.00, 35.20) {0.001};

\node[text=drawColor,anchor=base east,inner sep=0pt, outer sep=0pt, scale=  0.88] at ( 38.00, 59.53) {0.010};

\node[text=drawColor,anchor=base east,inner sep=0pt, outer sep=0pt, scale=  0.88] at ( 38.00, 83.86) {0.100};

\node[text=drawColor,anchor=base east,inner sep=0pt, outer sep=0pt, scale=  0.88] at ( 38.00,108.20) {1.000};
\end{scope}
\begin{scope}
\definecolor{drawColor}{gray}{0.30}

\node[text=drawColor,anchor=base,inner sep=0pt, outer sep=0pt, scale=  0.88] at ( 58.54, 19.68) {100};

\node[text=drawColor,anchor=base,inner sep=0pt, outer sep=0pt, scale=  0.88] at ( 84.51, 19.68) {500};

\node[text=drawColor,anchor=base,inner sep=0pt, outer sep=0pt, scale=  0.88] at (110.49, 19.68) {1000};

\node[text=drawColor,anchor=base,inner sep=0pt, outer sep=0pt, scale=  0.88] at (136.46, 19.68) {5000};

\node[text=drawColor,anchor=base,inner sep=0pt, outer sep=0pt, scale=  0.88] at (162.44, 19.68) {10000};

\node[text=drawColor,anchor=base,inner sep=0pt, outer sep=0pt, scale=  0.88] at (188.41, 19.68) {50000};
\end{scope}
\begin{scope}
\definecolor{drawColor}{RGB}{0,0,0}

\node[text=drawColor,anchor=base,inner sep=0pt, outer sep=0pt, scale=  1.10] at (123.47,  7.64) {$n$};
\end{scope}
\begin{scope}
\definecolor{drawColor}{RGB}{0,0,0}

\node[text=drawColor,rotate= 90.00,anchor=base,inner sep=0pt, outer sep=0pt, scale=  1.10] at ( 13.08, 75.83) {$\operatorname{error}(\hat\beta)$};
\end{scope}
\begin{scope}
\definecolor{drawColor}{RGB}{0,0,0}

\node[text=drawColor,anchor=base west,inner sep=0pt, outer sep=0pt, scale=  1.10] at (220.49,110.93) {Eigenval. $\Delta$};
\end{scope}
\begin{scope}
\definecolor{drawColor}{RGB}{27,158,119}

\path[draw=drawColor,line width= 0.6pt,line join=round] (221.94, 97.13) -- (233.50, 97.13);
\end{scope}
\begin{scope}
\definecolor{drawColor}{RGB}{217,95,2}

\path[draw=drawColor,line width= 0.6pt,line join=round] (221.94, 82.68) -- (233.50, 82.68);
\end{scope}
\begin{scope}
\definecolor{drawColor}{RGB}{117,112,179}

\path[draw=drawColor,line width= 0.6pt,line join=round] (221.94, 68.22) -- (233.50, 68.22);
\end{scope}
\begin{scope}
\definecolor{drawColor}{RGB}{231,41,138}

\path[draw=drawColor,line width= 0.6pt,line join=round] (221.94, 53.77) -- (233.50, 53.77);
\end{scope}
\begin{scope}
\definecolor{drawColor}{RGB}{102,166,30}

\path[draw=drawColor,line width= 0.6pt,line join=round] (221.94, 39.31) -- (233.50, 39.31);
\end{scope}
\begin{scope}
\definecolor{drawColor}{RGB}{0,0,0}

\node[text=drawColor,anchor=base west,inner sep=0pt, outer sep=0pt, scale=  0.88] at (240.45, 94.10) {0.0};
\end{scope}
\begin{scope}
\definecolor{drawColor}{RGB}{0,0,0}

\node[text=drawColor,anchor=base west,inner sep=0pt, outer sep=0pt, scale=  0.88] at (240.45, 79.65) {0.01};
\end{scope}
\begin{scope}
\definecolor{drawColor}{RGB}{0,0,0}

\node[text=drawColor,anchor=base west,inner sep=0pt, outer sep=0pt, scale=  0.88] at (240.45, 65.19) {0.1};
\end{scope}
\begin{scope}
\definecolor{drawColor}{RGB}{0,0,0}

\node[text=drawColor,anchor=base west,inner sep=0pt, outer sep=0pt, scale=  0.88] at (240.45, 50.74) {0.5};
\end{scope}
\begin{scope}
\definecolor{drawColor}{RGB}{0,0,0}

\node[text=drawColor,anchor=base west,inner sep=0pt, outer sep=0pt, scale=  0.88] at (240.45, 36.28) {1.0};
\end{scope}
\end{tikzpicture}
     }
    \caption{Median error (log scale) for the NIV estimator as we vary $\Delta$, the difference between the eigenvalues of $\alpha_{X,X}$, see Section~\ref{sec:Sim_compare} (`Estimation close to non-identifiability'). The causal effect is identifiable if and only if $\Delta \neq 0$, see~\cref{prop:identifiability}. 
    Indeed, the error does not decrease for $\Delta = 0$ and decreases the faster, the further $\Delta$ is away from $0$.}
    \label{plot:non_ident}
\end{figure}
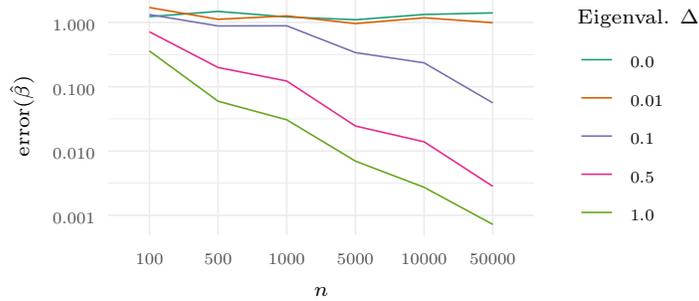

\subsection{Using the causal parameter for prediction under intervention on \texorpdfstring{$X$}{}}\label{sec:prediction_simulation}
We support empirically that a linear prediction using OLS parameters for non-intervened variables and causal parameters for the intervened variables achieves minimal square loss for prediction under intervention (see~\cref{prop:optimal_prediction}).
We consider the model with $d_X=d_I=1$ and, ensuring strong hidden confounding, draw the non-zero entries $\alpha_{i,H},\ i\in\{X,H,Y\}$ uniformly at random from $(-0.9,-0.5)\cup (0.5,0.9)$ (instead of $(-0.9, -0.1)\cup(0.1,0.9)$ as for the other non-zero entries of $A$).
The prediction task follows \cref{sec:prediction_intervention}:
Given observations
$\mathbf{X} = \vec{\mathbf{X}_1, \ldots, \mathbf{X}_{T-1}} \in \R^{d_X \times T-1}$
and
$\mathbf{Y} = \vec{\mathbf{Y}_1, \ldots, \mathbf{Y}_T} \in \R^{d_Y \times T}$, with $T = 3{,}000$,
the goal is to predict
$\mathbf{Y}_{T+1} \in \R^{d_Y \times 1}$
under an intervention $\operatorname{do}(X_T:= n\cdot\sigma)$ where $\sigma$ is the standard deviation {of} the process $[X_t]_{t=1}^{T-1}$ and $n \in \{1, 5\}$. 
In \cref{sec:prediction_intervention} we discuss that IV is prediction optimal under arbitrary interventions $\operatorname{do}(X_T \coloneqq x)$. In particular, one would expect that OLS becomes increasingly inferior to IV methods when one increases $n$ and thereby the intervention strength $\operatorname{do}(X_T\coloneqq n\cdot \sigma)$.

We compare prediction for $Y_{T+1}$ via \cref{alg:prediction} in \cref{sec:apppredunderinterv} (with $m=2$ and $l=1$) with
prediction based on the OLS regression 
$Y_{t+1}\sim X_t+X_{t-1}+X_{t-2}+Y_t+Y_{t-1}$ (`OLS'). 
For each repetition and matrix $A$, we obtain CIV and NIV estimates of $\beta$ on a separate sample first, and then obtain predictions for $Y_{T+1}$ following \cref{alg:prediction} using either $\hat\beta_\text{CIV}$ (`CIV') or $\hat\beta_\text{NIV}$ (`NIV').
For each of the $m=100$ random matrices $A$ we compute $\operatorname{error}(\hat Y_{T+1})$ as the mean of the squared prediction error $(\hat Y_{T+1} - Y_{T+1})^2$ over the $s=100$ repetitions.

In \cref{plot:predict}, we plot the error of the OLS estimate against the error of the IV estimate (either CIV or NIV).
When $\do(X_T \coloneqq \sigma)$, the intervention is within the normal range of $X_t$, and the errors of OLS and IV estimates are similar. As we perform a stronger intervention $\do(X_T \coloneqq 5\sigma)$, the OLS error exceeds the IV error in most simulations. This robustness in prediction under intervention is in line with the result in \cref{prop:optimal_prediction}. 

\begin{figure}[t!]
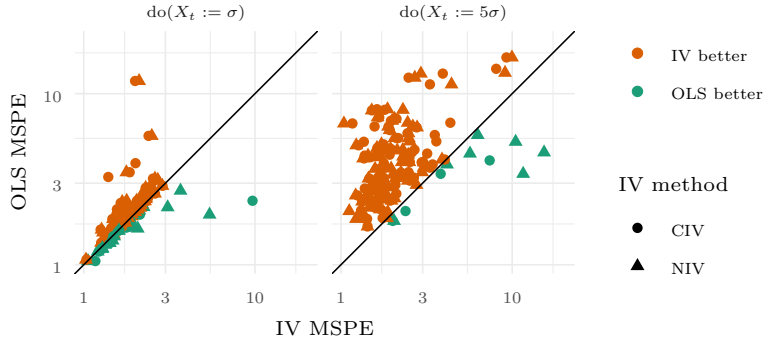

    \centering
    {\scriptsize

     }
    \caption{%
    The figure plots the loss $\operatorname{MSPE}(\hat{Y}_{T+1})$ under an intervention $\operatorname{do}(X_T\coloneqq x)$ when $Y_{T+1}$ is predicted using OLS against the loss when $Y_{T+1}$ is predicted using one of the IV methods we develop for time series. The OLS prediction is based on the regression $Y_{t+1}\sim X_t+X_{t-1}+X_{t-2}+Y_t+Y_{t-1}$ while the IV predictions of $Y_{T+1}$ are based on the procedure discussed in \cref{sec:prediction_intervention} (see \cref{alg:prediction} in \cref{sec:apppredunderinterv}).
    We plot this both for $\operatorname{do}(X_T\coloneqq \sigma)$ and $\operatorname{do}(X_T\coloneqq 5\sigma)$, where $\sigma$ is the standard deviation of $X_t$ in the unintervened distribution. The results show that as the intervention strength increases, the OLS prediction error increases at a faster rate than the IV prediction error. 17 outliers were removed from the right-hand side plot, of which 10 had a larger OLS MSPE than IV MSPE.
    }
    \label{plot:predict}
\end{figure}

\section{Conclusion and Future Work}
In this work, we have developed IV methods for time series data that allow us to identify the causal effect of a process $X$ on a response process $Y$, based on an instrument process $I$ that exhibits memory effects. 
Simple adaptations of ordinary IV estimators generally fail to identify the causal effect due to confounding from the past,
as we show in \cref{failure:naive_iv}.
We have developed the concept of nuisance IV (NIV), see \cref{prop:nuisance-iv}, a marginalization framework for time series graphs, see \cref{thm:gmp}, and have proved
a version of the global Markov property for VAR{($p$)} processes, see~\cref{prop:dream-theorem}.
Based on these principles, we propose two classes of estimation methods that properly adjust for confounding from the past: one based on choosing the correct conditioning set (CIV), see \cref{prop:ts-civ}, and another one based on nuisance regressors (NIV), see \cref{prop:ts-niv}. 
The procedures find solutions to moment conditions that, in their population version, are satisfied for the true causal parameters. Unlike in the i.i.d.\ case, the identifiability conditions (which are usually rank conditions) do not have a simple interpretation.
\cref{prop:identifiability} provides necessary and sufficient conditions on the parameters of the underlying data-generating process for the causal parameter being the unique solution to the corresponding moment equation.

{The experiments show that while several different choices of instruments, conditioning sets and nuisance regressors allow us to consistently identify the causal effect, the modelling choices impact the finite sample performance.}
For example, for identifiability in the case of NIV, we only need the number of lags used as instruments{, $m$, to be so} large that $d_I\cdot m=d_X+d_Y${,} but using more lags may help in finite samples in that this shrinks the upper tail of the error distribution, see \cref{plot:compare} (left).

We have further argued that identifying the causal effect may be of interest not only for causal inference, but also for prediction of $Y$ under the intervention $\operatorname{do}(X_t := x)$, where the minimal expected squared error can be obtained by a mix of causal parameters and regression coefficients, see \cref{prop:optimal_prediction} and \cref{sec:prediction_simulation}.

For future work, it may be fruitful to develop principled techniques for deciding which estimator yields the best finite sample performance \citep[see, e.g.,][Chapter~4]{henckel2021graphical} and to construct confidence statements, either based on Appendix~\ref{sec:asymptotic-variance-ts} or other techniques \citep[e.g.,][]{newey1987simple,shah2020hardness}.
{
    In settings where both NIV and CIV estimators are valid, it would be interesting to combine the estimating equations to obtain a single estimator and to analyze both the statistical and identifiability properties of such an approach.
    Our results assume that the response variable $Y_t$ is continuous. 
    However, we think one can generalize the NIV and CIV procedures, for example to a setting where $Y_t$ follows a generalized linear model.
    Further, it would be interesting to understand whether the identifiability obtained by using lagged instruments in \cref{prop:identifiability} can be achieved in processes that do not satisfy the linearity assumption.
    In addition,  it is known that overidentification can help to detect or even correct for certain types of model violations \citep[see, e.g.,][]{hansen1982large,chao2014testing}. It may be worthwhile to explore to which extent such methods can also be used in the time series setting -- in particular, as the NIV approach comes with a natural way of obtaining a certain degree of overidentification by including more of the previous time points (see also the discussion in Section~\ref{sec:Sim_compare}).
    It could further be interesting to extend the results to VARMA processes \citep[e.g.,][]{Scherrer2019}.}
Finally, as for the i.i.d.\ case \citep{Imbens2009,Chesher2003, Saengkyongam2022}, considering 
independence, rather than vanishing covariances may yield stronger identifiability results but may come with computational and statistical challenges.

\acks{%
We thank Aurélien Bibaut for pointing out typos in an earlier version of \cref{fig:marginalized-var1} and Ignacio Gonzalez Pérez for discovering and discussing potential fixes for errors in the proofs of \cref{thm:gmp,prop:dream-theorem}.
During parts of this project, NT, RS and JP were supported by  a research grant (18968) from VILLUM FONDEN and JP and SW by the Carlsberg Foundation. 
}

\bibliography{other_tex_files/references.bib}

\newpage
\appendix

\section{Additional details for \texorpdfstring{\cref{sec:setup}}{}}
\subsection{Relation to VARMA Processes}
In this section, we discuss that the partially observed VAR($1$) process can also be viewed as a VARMA($p,q$) process.
In this perspective,
the difficulty of identifying $\beta$ when $H$ is unobserved
is linked to the non-uniqueness of vector autoregressive moving average (VARMA) process representations.
The observed process $[I^\top,X^\top,Y^\top]^\top_{t\in\mathbb{Z}}$
can be obtained as a linear transformation of the VAR($1$) process $S$
and as such it
has a VARMA($p,q$) process representation
where
$p \leq d$
and
$q \leq (d - 1)$
\citep[Corollary 11.1.1]{lutkepohl2005new}.
Intuitively, the dependencies between $I, X, Y$ induced by the unobserved $H$ process can instead be modelled by serially correlated errors and higher-order memory. In contrast to a VAR{($p$)} process, however, the parameters of a VARMA process in standard form are not identified and different parameter settings may induce the same distribution over $[I^\top,X^\top,Y^\top]^\top_{t\in\mathbb{Z}}$
\citep[Chapter 12.1]{lutkepohl2005new}.
As such, it is not straight-forward to obtain $\beta$ from a VARMA representation of the observed process, even when choosing a canonical representation such as the echelon form or the final equations form.
In this work, we propose another approach and describe how to exploit the instrumental variables idea to identify $\beta$ when $H$ is unobserved, without needing to estimate all of $A$.

\subsection{Observational Equivalence}\label{sec:obs_equiv}
Without an instrument, the causal effect $\beta$ is, due to the hidden confounding, not identifiable in general.
In a fully observed Gaussian VAR($1$) process, the parameter matrix (which contains the causal effect) and the covariance matrix of the noises are uniquely determined by the distribution, 
and can be identified by least squares regression on the previous time step, for example \citep[Chapter~11]{hamilton1994time}. This is not the case if parts of the system are unobserved.
We consider a VAR($1$) process over $H = \vecin{H^1, H^2}^\top, X$, and $Y$, where $H$ is latent, and provide two different sets of parameters which entail the same observational distribution, that is, the same joint distribution of the observed process $[S_{XY,t}]_{t\in\Z} = [X_t^\top, Y_t^\top]^\top_{t\in\Z}$. 
Nevertheless, the causal effects in the two cases are different (one is $0$, the other is $b \neq 0$), and so are the induced interventional distributions when intervening on $X$, see \cref{sec:interventions} below.
Consider the two coefficient matrices 
\begin{align}
    \label{eq:obs_equiv_matrices}
    A_1 := 
    \begin{matrix} \text{{\tiny H}}^{\text{\tiny 1}} \\ \text{{\tiny H}}^{\text{\tiny 2}} \\ \text{{\tiny X}} \\ \text{{\tiny Y}}\end{matrix}
    \mat{a & 0 & 0 & 0 \\ c & 0 & 0 &0 \\ c & 0 & 0& 0 \\ 0 & b & 0 & 0}
    \quad \text{ and } \quad
    A_2 := 
    \begin{matrix} \text{{\tiny H}}^{\text{\tiny 1}}\\ \text{{\tiny H}}^{\text{\tiny 2}} \\ \text{{\tiny X}} \\ \text{{\tiny Y}}\end{matrix}
    \mat{a & 0 & 0 & 0 \\ 0 & a & 0 & 0 \\ 0& c & 0 & 0 \\ 0&0 & b & 0},
\end{align}
  with coefficients $a \in (-1, 1)$ and $b,c\in \R \setminus \{0\}$.
\begin{figure}[t]
\centering
\begin{tikzpicture}
    \tikzstyle{every path} = [thin, ->];
    \node (H12) at (0,1.5){$H^2_{t-2}$};
    \node (H11) at (2,1.5){$H^2_{t-1}$};
    \node (H10) at (4,1.5){$H^2_{t}$};
    \node (H22) at (0,0.75){$H^1_{t-2}$};
    \node (H21) at (2,0.75){$H^1_{t-1}$};
    \node (H20) at (4,0.75){$H^1_{t}$};
    \node (Y2) at (0,0){$Y_{t-2}$};
    \node (Y1) at (2,0){$Y_{t-1}$};
    \node (Y0) at (4,0){$Y_{t}$};
    \node (X2) at (0,2.25){$X_{t-2}$};
    \node (X1) at (2,2.25){$X_{t-1}$};
    \node (X0) at (4,2.25){$X_{t}$};

    \path [->] (H11) edge  (H10);
    \path [->] (H12) edge node[above] {$a$}(H11);
    \path [->] (H11) edge (H20);
    \path [->] (H12) edge node[below] {$c$} (H21);
    \path [->] (H21) edge node[below] {$b$} (Y0);
    \path [->] (H22) edge (Y1);
    \path [->] (H11) edge (X0);
    \path [->] (H12) edge node [above] {$c$} (X1);
    \draw[draw=black, fill opacity=0.5, dashed] (-1, 0.375) rectangle (5, 1.875);
    \node at (5.6, 1.5) {Latent};

    \node (H2b) at (8, 2.25){$H^2_{t-2}$};
    \node (H1b) at (10,2.25){$H^2_{t-1}$};
    \node (H0b) at (12,2.25){$H^2_{t}$};
    \node (H2) at (8, 1.5){$H^1_{t-2}$};
    \node (H1) at (10,1.5){$H^1_{t-1}$};
    \node (H0) at (12,1.5){$H^1_{t}$};
    \node (X2) at (8, 0.75){$X_{t-2}$};
    \node (X1) at (10,0.75){$X_{t-1}$};
    \node (X0) at (12,0.75){$X_{t}$};
    \node (Y2) at (8,0){$Y_{t-2}$};
    \node (Y1) at (10,0){$Y_{t-1}$};
    \node (Y0) at (12,0){$Y_{t}$};
    \path [->] (H1) edge (H0);
    \path [->] (H2) edge node[above] {$a$} (H1);
    \path [->] (H1) edge (X0);
    \path [->] (H2) edge node[below] {$c$}(X1);
    \path [->] (X1) edge node[above] {$b$}(Y0);
    \path [->] (X2) edge (Y1);
    \path [->] (H2b) edge node[above] {$a$} (H1b);
    \path [->] (H1b) edge (H0b);
    \draw[draw=black, fill opacity=0.5, dashed] (7, 1.125) rectangle (13, 2.625);
    
    \node at (13.6, 2.5) {Latent};
\end{tikzpicture}

\caption{Illustration of two different causal mechanisms 
($S^{(1)}$, left, and $\tilde{S^{(2)}}$, right)
which are observationally equivalent for any $b,c \in \R$ and $a \in [-1,1]$ and Gaussian noise distribution $\mathcal{N}(0, \operatorname{Id})$.}
\label{fig:obs_equiv}
\end{figure}
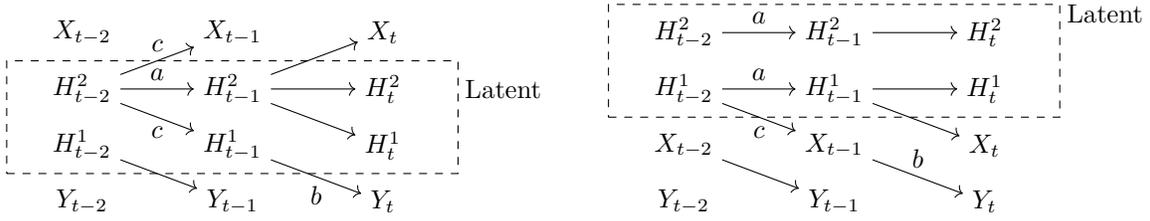
Consider the processes $S^{(1)}$ and $S^{(2)}$ satisfying
\begin{align}
    \label{eq:obs_equiv_example}
    S^{(i)}_{t+1} = A_i S^{(i)}_t + \epsilon^{i}_{t+1},
\end{align}
  with $\epsilon^{i}_t \sim \mathcal{N}(0,\operatorname{Id})$. 
  \Cref{fig:obs_equiv} depicts parts of the corresponding full time graphs.
  Let $S_{XY}^{(1)}$ and $S_{XY}^{(2)}$ denote the subprocesses where only $X$ and $Y$ are observed. 
  The following result shows, that $S_{XY}^{(1)}$ and $S_{XY}^{(2)}$ are identically distributed, that is
  the two models arising from $A_1$ and $A_2$ are \emph{observationally equivalent} \citep[e.g.,][]{rothenberg1971identification}.
\begin{restatable}{proposition}{equivalentDistributions}
\label{prop:two_equivalent_distributions}
    Let $S^{(1)}, S^{(2)}$ be the processes defined from \cref{eq:obs_equiv_example} with respective parameter matrices $A_1$ and $A_2$ from \cref{eq:obs_equiv_matrices} and $\mathcal{N}(0, \operatorname{Id})$-distributed noise.
    Then the observed subprocesses $S^{(1)}_{XY}$ and $S^{(2)}_{XY}$ are identically distributed. 
\end{restatable}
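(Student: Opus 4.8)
The plan is to reduce the statement to an identity of second moments and then verify it by passing to the moving-average representation of each observed process.

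First I would check that both $S^{(1)}$ and $S^{(2)}$ are genuine stationary Gaussian processes: the matrices $A_1,A_2$ each have eigenvalue $a$ together with zeros, so $|a|<1$ makes the spectral radius strictly less than one and \cref{assump:varp} supplies a unique stationary solution driven by i.i.d.\ $\mathcal{N}(0,\operatorname{Id})$ innovations. Consequently $S^{(i)}_{XY} = [X^\top,Y^\top]^\top$ is, for each $i$, a zero-mean stationary \emph{Gaussian} process, being a convergent linear functional of the innovations. Such a process is completely determined by its autocovariance function $k \mapsto \cov(S^{(i)}_{XY,t}, S^{(i)}_{XY,t-k})$ (equivalently, by its $2\times 2$ spectral density). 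Hence it suffices to prove that these autocovariance functions agree for every lag $k\in\Z$, and no distributional argument beyond matching covariances is needed.

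To carry out this comparison I would unroll the recursion $S^{(i)}_t=\sum_{j\ge 0}A_i^{\,j}\epsilon^{(i)}_{t-j}$ and project onto the $X$- and $Y$-coordinates, writing each of $X_t,Y_t$ explicitly in terms of the scalar AR($1$) driver (with coefficient $a$) and the mutually independent unit-variance innovations; equivalently one reads off the transfer functions $\Psi_i$ with $[X_t,Y_t]^\top=\Psi_i(L)\,\epsilon^{(i)}_t$ and compares the spectral densities $f_i=\Psi_i\Psi_i^{*}$. The diagonal blocks are the routine part: in both models $X_t$ equals $c$ times the driver at lag one plus an independent unit innovation, and $Y_t$ equals $bc$ times the driver at lag two plus a white component of variance $b^2+1$, so $\cov(X_t,X_{t-k})$ and $\cov(Y_t,Y_{t-k})$ coincide across the two models by inspection.

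The crux — and the step I expect to be the main obstacle — is matching the cross-covariances $\cov(X_t,Y_{t-k})$ for all $k$. The two models transmit the $X$--$Y$ dependence through structurally different channels: under $A_1$ the coupling is carried through the latent driver, whereas under $A_2$ part of it passes through the innovation that $X$ shares with the $Y$-equation, so $\Psi_1$ and $\Psi_2$ take quite different forms and the desired equality $f_1=f_2$ is an instance of the non-uniqueness of spectral factorization discussed for VARMA processes above. The decisive bookkeeping is to track exactly which innovations are common to the $X$- and $Y$-equations at each lag; verifying that the resulting cross-covariance sequences agree is where the particular choice of parameter matrices in \cref{eq:obs_equiv_matrices} does its work, and it is the only place where the computation is delicate. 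Once this cross-covariance identity is established, the proposition follows from the determination-by-second-moments argument of the first step.
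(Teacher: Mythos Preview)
Your plan mirrors the paper's: reduce to autocovariances via Gaussianity and stationarity, then compare them through the MA($\infty$) expansion. The paper does this by computing the $XY$-submatrices of $A_i^{k+s}(A_i^k)^\top$ term by term; your transfer-function formulation is an equivalent packaging of the same calculation.

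The genuine gap is the step you flag as the crux but leave unverified --- and if you carry it out, it fails. Unrolling the recursions (with $h$ an AR(1) driver with coefficient $a$ and unit-variance innovations, independent of the remaining noises) gives
\begin{align*}
\text{under }A_1:\quad & X_t = c\,h_{t-1}+\epsilon^X_t,\qquad Y_t = bc\,h_{t-2}+b\,\epsilon^{H^2}_{t-1}+\epsilon^Y_t,\\
\text{under }A_2:\quad & X_t = c\,h_{t-1}+\epsilon^X_t,\qquad Y_t = bc\,h_{t-2}+b\,\epsilon^{X}_{t-1}+\epsilon^Y_t,
\end{align*}
so the only difference is whether the unit-variance noise multiplying $b$ in $Y_t$ is $\epsilon^{H^2}_{t-1}$ (independent of $\epsilon^X$) or $\epsilon^X_{t-1}$ itself. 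This leaves the diagonal autocovariances unchanged, but it changes $\cov(Y_t,X_{t-1})$: under $A_1$ one gets $b\,\cov(H^2_{t-1},X_{t-1})=bc^2/(1-a^2)$, whereas under $A_2$ one gets $b\,\var(X_{t-1})=bc^2/(1-a^2)+b$. The cross-covariances disagree at lag one by exactly $b\neq 0$. The paper's argument has the matching oversight: it checks the $XY$-blocks of $A_i^{k+s}(A_i^k)^\top$ only for $k\ge 1$ and never treats the $k=0$ term $A_i^s$, whose $(Y,X)$ entry at $s=1$ is $0$ for $A_1$ and $b$ for $A_2$. With $\mathcal{N}(0,\operatorname{Id})$ noise the proposition as written appears to be false; your instinct that the cross-covariance is where the construction must do its work was exactly right, and with these particular matrices it does not.
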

\begin{proof}
Since each of the processes are jointly Gaussian with zero mean, their distributions are uniquely determined by the autocovariance matrices.
This means that the observed processes are identically distributed if and only if $\E(S_{XY,t}^{(1)} S^{(1)^\top}_{XY,t-s}) = \E(S^{(2)}_{XY,t} S^{(2)^\top}_{XY,t-s})$ for all $s \geq 0$.
For $i \in \{1, 2\}$, the $s$-th autocovariance of $S^{(i)}$ is given by: 
$$\label{eq:autocovariance_formula}
    \E(S^{(i)}_t S^{(i)^\top}_{t-s}) = \sum_{k = 0}^\infty A_i^{k+s} \operatorname{Id} A_i^{k^\top}.
$$
Observe that
$$
A_1^k = \mat{a^k & 0 & 0 & 0 \\ a^{k-1}c & 0 & 0 &0 \\ a^{k-1}c & 0 & 0& 0 \\ a^{k-2} bc & 0 & 0 & 0}
\quad 
A_2^k = \mat{a^k & 0 & 0 & 0 \\ 0 & a^k & 0 & 0 \\ 0 & a^{k-1}c & 0 & 0 \\ 0 & a^{k-2}bc &0& 0}
$$
for $k \geq 2$. 
Consequently, 
\begin{align*}
A_1^{k+s} A_1^{k^\top} &= 
\mat{
a^{2k + s} & a^{2k + s - 1}c & a^{2k + s - 1}c & a^{2k+s-2}bc \\ 
* & a^{2k + s -2}c^2 & a^{2k + s-2}c^2 & a^{2k + s-3}bc^2 \\
* & * & a^{2k + s-2}c^2 & a^{2k + s-3}bc^2 \\
* & * & * & a^{2k + s-4}b^2c^2
}\ \text{and}
\\
A_2^{k+s} A_2^{k^\top} &=
\mat{
a^{2k+s} & 0 & 0 & 0 \\
* &a^{2k+s} & a^{2k+s-1}c & a^{2k+s-2}bc \\
*&* & a^{2k+s-2}c^2 & a^{2k+s-3}bc^2 \\
*&* & * & a^{2k+s-4}b^2c^2
}
\end{align*}
for $k \geq 2$ and $s \geq 0$, where the asterisks are given by symmetry of the matrices. 
For the case $k = 1, s = 0$, we have:
$$
A_1  A_1^{\top} = 
\mat{a^{2} & a^2 c & a^2c & 0 \\ 
* & c^2 & c^2 & 0 \\
* & * & c^2 & 0 \\
* & * & * & b^2
}
\quad\text{and}\quad
A_2 A_2^{\top} = 
\mat{
a^2 & 0 & 0 & 0 \\
*&a^{2} & ac & 0\\
*&* & c^2 & 0 \\
*&* & * & b^2
},
$$
and if $k = 1, s \geq 1$:
$$
A_1^{1+s}  A_1^{\top} = 
\mat{
a^{2 + s} & a^{1+s}c & a^{1+s}c & 0 \\ 
a^{1+s}c & a^{s}c^2 & a^{s}c^2 & 0 \\
a^{1+s}c & a^{s}c^2 & a^{s}c^2 & 0 \\
a^s bc & a^{s-1} c^2 b & a^{s-1} c^2 b & 0
}
\\\quad\text{and}\quad
A_2^{1+s}  A_2^{\top} =
\mat{
a^{2+s} & 0 & 0 & 0 \\
0 & a^{2 + s} & a^{1+s}c & 0 \\ 
0 & a^{1+s}c & a^{s}c^2 & 0 \\
0 & a^s bc & a^{s-1} c^2 b & 0
}.
$$
For any of the above matrices $M$, let $M_{XY}$ denote the $2\times 2$ submatrix in the bottom right corner, relating to the $X,Y$ subprocess. In all of the above cases, these coefficients relating to the $X,Y$ subprocess coincide, that is for any $k, s \geq 0$, $(A_1^{k+s} A_1^{k^\top})_{XY}
=(A_2^{k+s} A_2^{k^\top})_{XY}$, and since therefore 
$(\sum_{k = 0}^\infty A_1^{k+s} A_1^{k^\top})_{XY} 
= \sum_{k = 0}^\infty (A_2^{k+s} A_2^{k^\top})_{XY}$, 
it follows that $\E(S^{(1)}_{XY,t} S^{(1)^\top}_{XY,t-s}) = \E(S^{(2)}_{XY,t} S^{(2)^\top}_{XY,t-s})$ for all $s \geq 0$. 
\end{proof}

\subsection{Structural Causal Models and Interventions }\label{sec:interventions}
We provide a formal introduction to structural causal models (SCMs) and interventions, which motivates the notion of a causal effect. For a more detailed introduction, see \citet{pearl2009causality, Bongers2020}. 

An SCM consists of a tuple $\Pi=(\mathcal{S},P_\varepsilon)$ where $\mathcal{S}$ is a set of structural assignments 
and $P_\varepsilon$ describes the joint distribution of the error terms.
For a finite collection of variables $S^1, \ldots, S^d$, 
with structural assignments $\mathcal{S} \coloneqq \{f^1, \ldots, f^d\}$ and noise distribution $P_{\epsilon} = P^1\otimes \cdots\otimes P^d$, for each $j = 1, \ldots, d$, the structural equation of $S^j$ is 
\begin{equation*}
    S^j \coloneqq f^j(\PA_j, \epsilon^j),
\end{equation*}
where $\epsilon^j$ is distributed according to $P^j$ and the \emph{parents} $\PA_j$ is a subset of $\{S^1, \ldots, S^d\}\setminus\{S^j\}$. The SCM induces a corresponding graph $\mathcal{G}$ over nodes $\{1, \ldots, j\}$, where we draw an edge from $j'$ to $j$ if $S^{j'}\in\PA_j$. We assume that the parent sets are such that $\G$ is acyclic. 

Similarly, we interpret the VAR{($p$)} process described in \cref{eq:varp} as a structural causal model over an infinite number of nodes $\vecin{S_t}_{t\in\Z}$. In this case, the structural assignments are $S_t:=AS_{t-1}+\varepsilon_t$, $t\in\mathbb{Z}$.
Here, the error terms $\varepsilon_t$
are assumed to be i.i.d.\ over time, 
distributed according to $P_\varepsilon$. 
Furthermore, $P_\varepsilon$ is a product distribution, and thus the error terms are jointly independent. The SCM entails an observational distribution on the variables $[S_t]_{t\in \Z}$ which we denote by $P_S^{\Pi}$. 

Formally, an intervention on an SCM is a replacement of one or more of the structural assignments at one or more time points. 
Such a replacement induces a new SCM that we denote by $\Tilde{\Pi}=(\Tilde{\mathcal{S}},\Tilde{P_\varepsilon})$. 
An example of an intervention on the above VAR{($p$)} process is to fix the value of $X$ for some specific time point $t_0$ -- we write this intervention as $\operatorname{do}(X_{t_0}:=x)$. 
Under this intervention, for $t \neq t_0$, the process still satisfies the original SCM, including assumptions on the noise variables. 

The interventional distribution of $\Pi$ under this intervention is defined as  $P_S^{\Pi;\operatorname{do}(X_{t_0}:=x)}:=P_S^{\Tilde{\Pi}}$.
In general, the interventional distribution of $\Pi$ under an intervention is the 
distribution that is induced by the SCM $\Tilde{\Pi}$ obtained by replacing some of the structural assignments. 
We require that this distribution exists and is unique. 
Depending on the application at hand, 
several interventions on the process are useful, including, for example, changing the dynamics of one component for all time points \citep{Peters2020dynamics}.
In this work, we focus on an intervention at a particular time point. 
When performing such an intervention  
$\operatorname{do}(X_{t_0}:=x)$ on 
$X_{t_0}$, we have that 
$\frac{\partial}{\partial x}\E_{P_Y^{\Pi;\operatorname{do}(X_{t_0}:=x)}} [Y_{t_0+1}] = \alpha_{Y,X}$.
This  motivates calling 
$\beta = \alpha_{Y,X}$ the {direct} causal effect from $X$ to $Y$
{in our parametric model (while the effect may be mediated in another parametrization, e.g., at finer temporal resolution)}.
In several applications, the causal effect $\beta$ is of interest by itself because it yields insight into understanding the causal structure of the problem. 
The causal effect, however, also comes with another benefit: it is 
optimal for
prediction under intervention.
We discuss this point of view in \cref{sec:prediction_intervention}.

\subsection{Defining Multivariate Total Causal Effects}\label{sec:appendix-TCE}
In some cases, the effect we want to estimate may be more general than a single entry in one of the coefficient matrices $A_k$. 
In \cref{sec:def_notation} we define the total causal effect of a single variable $S^{i}_{t-l}$ on $S^{j}_{t}$ as
\begin{align*}
    \left(\sum_{\substack{1 \leq l_1, \ldots, l_m \leq p \\ l_1 + \cdots + l_m = l}} A_{l_1}\cdots A_{l_m}\right)_{j,i}, 
\end{align*}
where $A_l$ are the parameter matrices of the VAR($p$) process. Using the method of path coefficients \citep{wright1934method}, we now  provide a more general definition, where $\cX$ may contain multiple variables.
\begin{definition}[Path coefficients]
    Let $S$ be a VAR($p$) process,  
    let $Y_t\coloneqq S^{i_0}_{t}$ be a subprocess of $S$ and let $\cX_t = \vecin{{S_{t-l_1}^{i_1}}^\top, \ldots, {S_{t-l_m}^{i_m}}^\top}^\top$ be a collection of subprocesses of $S$. 
    For $j = 1, \ldots, m$, we define a \emph{$S_{t-l_j}^{i_j}$-causal path} to be a directed path  from $S_{t-l_j}^{i_j}$ to $Y_t$ in the full time graph of $S$ that does not intersect any other $S_{t-l_{j'}}^{i_{j'}}$, for $j'\neq j$. 
    For a $S_{t-l_j}^{i_j}$-causal path $\pi: S_{t-l_j}^{i_j} \stackrel{e_1}{\rightarrow} \cdots \stackrel{e_d}{\rightarrow}S_{t}^{i_0}$, we define the \emph{path coefficient} to be the product of linear coefficients along $\pi$, $c_\pi := \prod_{k=1}^d a_k$, where $a_k$ denotes the entry in the coefficient matrix $A_v$, for the lag $v$ corresponding to the edge $e_k$.
\end{definition}
We can now define the total causal effect.
\begin{definition}[Total Causal Effect]\label{def:causal_effect_varp}
    Let $S$ be a VAR($p$) process, let $Y_t = S^{i_0}_{t}$ be a subprocess of $S$ and let $\cX = \vecin{{S_{t-l_1}^{i_1}}^\top, \ldots, {S_{t-l_m}^{i_m}}^\top}^\top$ be a collection of subprocesses of $S$. 
    For $j = 1, \ldots, m$, the \emph{total causal effect}, $\beta^{j}$, of $S_{t-l_j}^{i_j}$ on $Y_t$ is the sum of path coefficients $c_\pi$ over all $S_{t-l_j}^{i_j}$-causal paths $\pi$ from $S_{t-l_j}^{i_j}$ to $Y_t$,
    $\beta^i := \sum_{\text{$S_{t-l_1}^{i}$-causal paths } p} c_p$.
    Similarly the total causal effect, $\beta$, of $\cX_t$ on $Y_t$ is the 
    concatenation of these, $\beta := \vecin{{\beta^1}^\top, \ldots, {\beta^m}^\top}^\top$.
\end{definition}

\section{Additional details for \texorpdfstring{\cref{sec:niv}}{}}
\subsection{Asymptotic variances for i.i.d.\ estimators }\label{sec:asymptotic-variance}
Drawing on existing results \citep{hall2005generalized}, we now provide formulas for the asymptotic variances of the NIV and CIV estimators. 
If a unique solution $(\beta, \alpha)$ to \cref{eq:niv-identifying-equation} exists, the asymptotic distribution of the $\hat{\beta}_{\NIV,T}$ estimator in the i.i.d.\ setting, discussed in \cref{sec:nuisance-iv},
with the weight matrix $W \coloneqq \E[\cI\cI^\top]^{-1}$ (which asymptotically is optimal, see \cref{sec:intro-civ}) is given by 
\begin{align*}
    \sqrt{T}(\hat{\beta}_{\NIV,T}-\beta)\xrightarrow[]{d}\mathcal{N}(0,\Sigma_1),
\end{align*} 
for $T \rightarrow \infty$, where the asymptotic variance $\Sigma_1$ is given by
\begin{align*}
    \Sigma_1 \coloneqq \big(\E (\tilde{\cX}\cI^\top)K^{-1} \E(\tilde{\cX}\cI^\top)^\top\big)^{-1},
\end{align*}
where $\tilde{\cX}\coloneqq\begin{matrix} [\cX^\top, \cZ^\top]^\top\end{matrix}$,
$K=\E((Y-\beta \cX - \alpha\cZ)^2)\E(\cI\cI^\top)$.
$\Sigma_1$ is a $(d_\cX+d_\cZ)\times(d_\cX+d_\cZ)$ matrix, with the top-left $d_\cX\times d_\cX$ sub-matrix describing the asymptotic variance of $\cX$.

Similarly, the asymptotic distribution of $\hat{\beta}_{\CIV, T}$ in the i.i.d.\ setting, discussed in \cref{sec:intro-civ}, with weight matrix $W \coloneqq \E[\var(\cI| \cB)]^{-1}$ is
\begin{align*}
    \sqrt{T}(\hat{\beta}_{\CIV,T}-\beta)\xrightarrow[]{d}\mathcal{N}(0,\Sigma_2),
\end{align*} 
where
\begin{align*}
    \Sigma_2 \coloneqq \big(\E[ \cov(\cX,\cI|\cB)]K^{-1} \E[\cov(\cX,\cI|\cB)^\top]\big)^{-1},
\end{align*}
and $K=\E[\var((Y-\beta \cX)^2|\cB)]\E[\var(\cI|\cB)]$.

\subsection{Asymptotic variances for estimators in time series} \label{sec:asymptotic-variance-ts}
Closed-form expressions for the asymptotic variances of the NIV and CIV estimators can also be found for the VAR{($1$)} process presented in \cref{sec:iv-for-time-series}, but these are slightly more involved than for the i.i.d.\ setting presented in \cref{sec:asymptotic-variance}. 
Assume the setting as described in~\cref{prop:ts-niv} and consider the NIV estimator $\hat{\beta}_{\NIV,T}$. We then have
\begin{align*}
    \sqrt{T}(\hat{\beta}_{\NIV,T}-\beta)\xrightarrow[]{d}\mathcal{N}(0,\Sigma_1),
\end{align*} 
where the asymptotic variance $\Sigma_1$ is given by
\begin{align*}
    \Sigma_1 \coloneqq \big(\E (\tilde{\cX}\cI^\top)K^{-1} \E(\tilde{\cX}\cI^\top)^\top\big)^{-1},
\end{align*}
where $\tilde{\cX}:=\begin{matrix} [X_{t-1}^\top,& Y_{t-1}^\top]^\top\end{matrix}$, $\cI:=\{I_{t-2},...,I_{t-m-1}\}$ and  $K:=\lim_{T\rightarrow\infty}Var\bigg(\frac{1}{\sqrt{T}}\sum_{t=1}^{T}(Y_t-\beta \cX - \alpha\cZ)\cI^{\top}\bigg)$ using the optimal choice of weight matrix, $W=K^{-1}$, 
see \citet[][Chapter~3]{hall2005generalized}.

Now for the CIV estimator, see \cref{prop:ts-civ}, the asymptotic distribution of $\hat{\beta}_{\CIV, T}$ is
\begin{align*}
    \sqrt{T}(\hat{\beta}_{\CIV,T}-\beta)\xrightarrow[]{d}\mathcal{N}(0,\Sigma_2),
\end{align*} 
where
\begin{align*}
    \Sigma_2 \coloneqq \big(\E[ \cov(X_{t-1},I_{t-2}|\cB_t)]K^{-1} \E[\cov(X_{t-1},I_{t-2}|\cB_t)^\top]\big)^{-1},
\end{align*}
and $K:=\lim_{T\rightarrow\infty}Var\bigg(\frac{1}{\sqrt{T}}\sum_{t=1}^{T}(r_{Y_t}-\beta r_{X_{t-1}})r_I^{\top}\bigg)$ with $r_i:=i-\E[i|\cB_t]$
using the optimal choice of weight matrix, $W=K^{-1}$,
see \citet[][Chapter~3]{hall2005generalized}.

\section{Additional details for \texorpdfstring{\cref{sec:iv-for-time-series}}{}}
{
\subsection{Failure of \texorpdfstring{$\iv{X_{t-1}}{Y_t}(I_{t-2})$}{}}
\label{app:failure-naive-iv}
In \cref{sec:iv-for-time-series}, we discuss that using $I_{t-2}$ as an instrument for the effect $X_{t-1}\rightarrow Y_t$ is not valid. This is simply a consequence of \cref{assump:civ-d-sep-tce} not being met, since $I_{t-2}$ correlates with $Y_t$ through other causal paths than $I_{t-2}\rightarrow X_{t-1} \rightarrow Y_t$. 
Here we formally prove that this procedure fails and derive the bias of the resulting estimate. 

\begin{restatable}[Failure of naive IV adaption]{proposition}{failureNaive}
\label{failure:naive_iv}
Consider a VAR($1$) process $S = \vecin{I_t^\top, X_t^\top, H_t^\top, Y_t^\top}^\top_{t\in\Z}$ satisfying \cref{assump:iv} with $d_I = d_X = d_H = d_Y = 1$.
If $\cov(X_{t-1},I_{t-2})\neq 0$ and $\alpha_{I,I}\alpha_{Y,Y}\neq 1$, the $\iv{X_{t-1}}{Y_t}(I_{t-2})$ estimator $\hat\beta$ converges in probability to 
\begin{equation*}
    (1 - \alpha_{I,I} \alpha_{Y,Y})^{-1} \beta.
\end{equation*}
Consequently, $\hat\beta$ is in general not consistent for the causal effect $\beta$ of $X_{t-1}$ on $Y_t$, unless $I$ or $Y$ do not have any autoregressive structure, that is, $\alpha_{I,I} = 0$ or $\alpha_{Y,Y}= 0$.
\end{restatable}

}

\subsection{Example of a distribution that does not satisfy the rank requirement in \texorpdfstring{\cref{prop:identifiability}}{}}\label{appendix:ts-niv-examples}
In \cref{se:meth2}, we have developed a criterion for identifiability that depends on the parameter matrix of a {VAR($1$)} process satisfying \cref{assump:iv}. We have showed in \cref{thm:almost_sure_identify} that if parameter matrices are drawn from a distribution with density with respect to Lebesgue measure, then the identifiability criterion holds almost surely. 
In this section, we provide an example of a parameter matrix that does not satisfy the criterion. 
\begin{example} \label{ex:not_identifiable}
	Consider the case where $d_X > 1, d_I=1,$ and
	$\alpha_{X,X}=\diag(c, \ldots, c)$ for a $c \in \R$.
	By \cref{prop:identifiability}, $\beta$ is not identifiable by NIV: this follows because $A_{XY}$ is a lower triangular matrix with $c, \ldots, c$ ($d_X$ times) and $\alpha_{Y,Y}$ on the diagonal, and the Jordan form $J$ is a diagonal matrix with the same diagonal entries. Hence there are $d_X$ Jordan blocks with the same eigenvalue $c$ so the causal effect $\beta$
	is not identified by NIV.
	On the contrary, when $\alpha_{X,X} = \diag(c_1, \ldots, c_{d_X})$ (see \cref{fig:ex-figure-identifiability}) where $c_i \neq c_j$ for all $i\neq j$, $\beta$ is identified by NIV if also 
	$\alpha_{Y,Y} \neq c_i$ for all $i$. 
	\begin{figure}
	    \centering
    \begin{tikzpicture}[>=latex,font=\sffamily]
        \node (X1) at (0, 2) {$X_{t-1}^{(1)}$};
        \node (X2) at (0, 1) {$X_{t-1}^{(2)}$};
        \node (Ypast) at (0, 0) {$Y_{t-1}$};
        \node (Y) at (3, 0) {$Y_t$};
    
        \draw[->] (X1) --node[midway,fill=white]{$c_1$} (Y);
        \draw[->] (X2) --node[midway,fill=white]{$c_2$} (Y);
        \draw[->] (Ypast) --node[midway,fill=white]{$\alpha_{Y,Y}$} (Y);
    \end{tikzpicture}
	    \caption{Subgraph of the full-time graph of the structure discussed in \cref{ex:not_identifiable} when $d_X = 2$. For simplicity, we do not draw nodes corresponding to the instrument, $I$, and the confounder, $H$.}
	    \label{fig:ex-figure-identifiability}
	\end{figure}
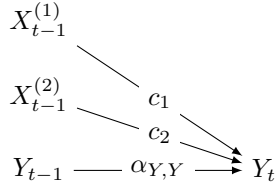
\end{example}
\begin{example}
In the case of $d_X = 1, d_I=1,$ and $\alpha_{X,I} \neq 0$, $\beta$ is identifiable. If, for example, $\alpha_{Y,Y} = \alpha_{X,X} =: \alpha$, we have, 
	\begin{align*}
		A_{XY} = \mat{\alpha & 0 \\ \beta & \alpha} = \underbrace{\mat{0 & 1 \\ \beta & 0}}_{=:P} \mat{\alpha & 1 \\ 0 & \alpha}\underbrace{\mat{0 & \frac{1}{\beta} \\ 1 & 0}}_{=P^{-1}}.
	\end{align*}
	This has only one Jordan block with algebraic multiplicity $m = 2$ and 
	\begin{equation*}
	    \left(P^{-1}\mat{\alpha_{X,I}\\0}\right)_{m} = \vec{1, 0}\mat{\alpha_{X,I} \\ 0} = \alpha_{X,I} \neq 0,
	\end{equation*}
	where $(\cdot)_{m}$ refers to the $m$-th entry,
	so by \cref{prop:identifiability}, $\beta$ is identifiable (and by similar arguments the same holds if $\alpha_{X,X} \neq \alpha_{Y,Y}$.
\end{example}

\subsection{Algorithm for Prediction under Interventions} \label{sec:apppredunderinterv}
This section contains \cref{alg:prediction}, the procedure for using a causal effect to predict under interventions, as discussed in \cref{sec:prediction_intervention}.
\begin{algorithm}
\begin{algorithmic}[1]
	\Statex \textbf{Input}: Causal parameter $\beta$, sample $\mathbf{X} = \vec{\mathbf{X}_1, \ldots, \mathbf{X}_{t-1}} \in \R^{d_X \times (t-1)}$, $\mathbf{Y} = \vec{\mathbf{Y}_1, \ldots, \mathbf{Y}_t} \in \R^{1 \times t}$, 
	interventional value $x$, lag parameters $m$ and $l$
	\Statex
	\State Compute the residual process $r_s \coloneqq \mathbf{Y}_{s+1} - \beta \mathbf{X}_{s}$ for $s = 1, \ldots, {t-1}$.
	\State For $s>\max(k,m)$, linearly regress 
	$r_s$ on $\{\mathbf{X}_{s-k}, k = 1, \ldots, m\}$ and $\{\mathbf{Y}_{s-j}, j = 0, \ldots, l\}$ to obtain coefficients $\hat\alpha_{Y,X}^{k}$ and $\hat\alpha_{Y,Y}^{j}$.
	\Statex 
	\Statex \textbf{Output}: 
	\State Prediction $\hat{\mathbf{Y}}_{t+1} = \beta x + \sum_{k = 1}^m \hat\alpha_{Y,X}^{k} \mathbf{X}_{t-k} + \sum_{j = 0}^l \hat\alpha_{Y,Y}^{j} \mathbf{Y}_{t-j}$.
\end{algorithmic}
\caption{Linear prediction under the intervention $\operatorname{do}(X_t := x)$
}
\label{alg:prediction}
\end{algorithm}

{
    \subsection{Identifying equations for time-inhomogeneous effects}\label{app:time-inhomogeneous-effect}
    Throughout the main part, we assume that the target parameter $\beta$ is a constant. We now show, that if instead $\{\beta_t\}_{t\in\Z}$ is a sequence of random variables (not necessarily i.i.d.), then we can apply our methodology to estimate $\E[\beta_t]$.
    For a VAR($1$) process $S$, we make the following modification of \cref{assump:iv}
    \begin{alt-alt-assumpenum}
        \setcounter{alt-alt-assumpenumi}{1}
        \item \label{assump:iv-time-varying} Assume that  $S$ satisfies \cref{assump:iv}
        except that for each $t\in\Z$, the structural assignment of $Y_t$ is $Y_t = \beta_t X_{t-1} + \alpha_{Y,H}H_{t-1} + \alpha_{Y,Y}Y_{t-1} + \epsilon_t^Y$, where $\{\beta_t\}_{t\in\Z}$ is a sequence of random variables. 
    \end{alt-alt-assumpenum}
    
    The following two results generalize \cref{prop:ts-civ,prop:ts-niv}, by stating that the identifying equations used in \cref{prop:ts-civ,prop:ts-niv} to identify $\beta$, identify $\E[\beta_t]$ in the case of a time inhomogeneous effect.
    \begin{restatable}[Identification with conditioning set and time inhomogeneous effect]{proposition}{tsCivInhomogeneous}
        \label{lemma:ts-civ-inhomogeneous}
        Consider a VAR($1$) process $S = \vecin{I_t^\top, X_t^\top, H_t^\top, Y_t^\top}^\top_{t\in\Z}$ satisfying \cref{assump:iv-time-varying}, ensuring that all quantities below are well-defined.
        Let either $\cB_t \coloneqq \{I_{t-3}\}$ or $\cB_t \coloneqq \{I_{t-3}, X_{t-2}, Y_{t-1}\}$ and let $b \coloneqq \E[\beta_t]$ be the expectation of $\beta_t$. 
        If $\beta_t \indep (X_{t-1}, I_{t-2}, \cB_t)$, the following two statements hold.
        (i)
        $b$ satisfies the CIV moment condition $\E[\cov(Y_t - b X_{t-1}, I_{t-2}|\cB_t)] = 0$.
        (ii) Furthermore, if $\E[\cov(X_{t-1}, I_{t-2} | \cB_t)]$ has rank $d_X$, then $b$ is identified by $\civ{X_{t-1}}{Y_t}(I_{t-2}| \cB_t)$.
    \end{restatable}
    \begin{restatable}[Identification with nuisance regressor and time inhomogeneous effect]{proposition}{tsNivInhomogeneous}
        \label{lemma:ts-niv-inhomogeneous}
        Consider a VAR($1$) process $S = \vecin{I_t^\top, X_t^\top, H_t^\top, Y_t^\top}^\top_{t\in\Z}$ satisfying \cref{assump:iv-time-varying}, ensuring that all quantities below are well-defined.
        Let $\cI_t \coloneqq \{I_{t-2},\ldots, I_{t-m-1}\}$ for an $m \geq 1$ and $\cZ_t \coloneqq \{Y_{t-1}\}$ and let $b \coloneqq \E[\beta_t]$ be the expectation of $\beta_t$. 
        If $\beta_t \indep (X_{t-1}, \cI_t)$, the following two statements hold.
        (i)
        There exists $\alpha \in \R$ such that $b$ satisfies the NIV moment condition $\E[\cov(Y_t - b X_{t-1}-\alpha \cZ_t, \cI_t)] = 0$.
        (ii)
        Further, if $\E[\vecin{X_{t-1}^\top,\cZ_t^\top}^\top\cI_t^\top]$ has rank $d_X + d_Y$, $b$ is identified by $\niv{X_{t-1}}{Y_t}(\cI_t, \cZ_t)$.
    \end{restatable}
    The proofs can be found in Appendices~\ref{sec:newproooof1} and~\ref{sec:newproooof2}, respectively.
    The above considers identification of a single time point. Further assumptions (such as observing each time point several times) may be required to obtain consistent estimators.
}

\section{Simulation details and additional experiments}
\subsection{Correlation matrix for errors in \texorpdfstring{\cref{plot:compare}}{}}\label{sec:error-correlation-matrix}
In \cref{plot:compare}, we plot the absolute errors for two CIV estimators and two NIV estimators for several different data generating mechanisms. Here, we print the Spearman correlations between the four %
estimators. 

When $n = 100$, the correlation is
\begin{equation*}
\begin{matrix}
    \textrm{\tiny{CIV}}_{\text{\tiny I,X,Y}} \\
    \textrm{\tiny{CIV}}_{\text{\tiny I}} \\
    \textrm{\tiny{NIV}}_{\textrm{\tiny 1 lag}} \\
    \textrm{\tiny{NIV}}_{\textrm{\tiny 3 lags}}
\end{matrix}
\begin{pmatrix}
   1.000 & 0.635 & 0.391 & 0.390 \\ 
  0.635 & 1.000 & 0.276 & 0.241 \\ 
  0.391 & 0.276 & 1.000 & 0.393 \\ 
  0.390 & 0.241 & 0.393 & 1.000 \\ 
\end{pmatrix}
\end{equation*}

When $n = 1{,}000$, the correlation is
\begin{equation*}
\begin{matrix}
    \textrm{\tiny{CIV}}_{\text{\tiny I,X,Y}} \\
    \textrm{\tiny{CIV}}_{\text{\tiny I}} \\
    \textrm{\tiny{NIV}}_{\textrm{\tiny 1 lag}} \\
    \textrm{\tiny{NIV}}_{\textrm{\tiny 3 lags}}
\end{matrix}
\begin{pmatrix}
  1.000 & 0.728 & 0.400 & 0.476 \\ 
  0.728 & 1.000 & 0.275 & 0.308 \\ 
  0.400 & 0.275 & 1.000 & 0.417 \\ 
  0.476 & 0.308 & 0.417 & 1.000 \\ 
\end{pmatrix}
\end{equation*}

When $n = 10{,}000$, the correlation is
\begin{equation*}
\begin{matrix}
    \textrm{\tiny{CIV}}_{\text{\tiny I,X,Y}} \\
    \textrm{\tiny{CIV}}_{\text{\tiny I}} \\
    \textrm{\tiny{NIV}}_{\textrm{\tiny 1 lag}} \\
    \textrm{\tiny{NIV}}_{\textrm{\tiny 3 lags}}
\end{matrix}
\begin{pmatrix}
  1.000 & 0.742 & 0.385 & 0.465 \\ 
  0.742 & 1.000 & 0.280 & 0.308 \\ 
  0.385 & 0.280 & 1.000 & 0.459 \\ 
  0.465 & 0.308 & 0.459 & 1.000 \\ 
\end{pmatrix}
\end{equation*}
Across sample sizes, all %
estimators are positively correlated, with the strongest correlation present between the two CIV estimators. Yet, for no other pair does the correlation exceed $0.5$, indicating that the remaining estimator pairs do not agree on which data generating mechanisms are 
more `difficult'.

\subsection{Varying instrument strength for a fixed data generating mechanism}\label{sec:exp-fixed-generating}
In \cref{sec:Sim_compare}, we simulate random data generating mechanisms, and consider the error in estimating causal effects. The resulting errors plotted in \cref{plot:compare} then reflect both finite sample fluctuations and fluctuations stemming from the fact that estimation is more difficult in some data generating mechanisms than in others. 

In this experiment, we consider a fixed data generating mechanism to compare only the finite sample error of the NIV and CIV estimators. We use the simulation setup described in \cref{sec:simulations}, except that the entries of the parameter matrix are not random, but fixed at
\begin{equation*}
    A \coloneqq

     }
    \caption{{Absolute errors of CIV and NIV estimators of the causal effect in the experiment described in \cref{sec:exp-fixed-generating}. Each point corresponds to a single instantiation of a time series. The data are simulated using four different data generating mechanisms, with the strength of the instrument ranging from `Strong' to `Very Weak'.}}
    \label{fig:fixed-generating-mechanism}
\end{figure}
For each instrument strength, we sample a time series of $T = 1{,}000$ steps, apply the $\textrm{CIV}_{I,X, Y}$ and $\textrm{NIV}_{3 \textrm{ lags}}$ estimators from \cref{plot:compare} (left), and store the absolute deviation from the ground truth causal effect $\beta$. We repeat this $300$ times and plot the resulting errors in \cref{fig:fixed-generating-mechanism}.

As expected, the average errors for both CIV and NIV increase as the instruments get weaker. However, for this data generating mechanism, the weakening of the instruments particularly hurts the performance of the CIV estimator: While the performance of NIV and CIV are comparable using the `strong' instruments, the NIV estimator outperforms the CIV estimator in the majority of simulations with the `very weak' instruments. 

\subsection{Asymptotic behaviour with weak instruments in the asymptotic regime}\label{sec:weak-instruments-experiment}
The instrument process may sometimes only explain a small fraction of the variance in the covariate process, a situation known as `weak instruments' \citep{stock2002survey}. 
When coefficients are kept constant, as is the case for the asymptotics discussed in \cref{sec:asymptotic-variance-ts}, both i.i.d.\ IV estimators as well as the time series estimators, which this paper focuses on, are asymptotically normal; however, if we consider an asymptotic regime where the effect of the instruments on the covariates vanishes at rate $O(T^{-\frac{1}{2}})$, \citet{staiger1997instrumental} show that in i.i.d.\ data, estimators are not necessarily asymptotically normal, and may be ill-behaved even for large sample sizes. 
In this simulation study, we compare the bias and variance of NIV and CIV estimators to a misspecified IV estimator. We sample data from the same data generating process as in \cref{sec:simulations} (see `Data generating process'), except that we scale the entries $\alpha_{X,I}$ by $\frac{1}{2\sqrt{T}}$. 
We use the same estimators as the ones described in \cref{plot:compare}, and in addition an ordinary (misspecified) IV estimator, $\iv{X_{t-1}}{Y_t}(I_{t-2})$. 
In \cref{plot:weak-iv}, we plot the mean average error. As one would expect for GMM-based IV-methods \citep{stock2002survey}, we do not see evidence that the error converges to $0$, but rather remains approximately constant. 
Furthermore, the experiment does not suggest that our estimators are more affected by weak instruments than the ordinary IV estimator. While the $\NIV_{1\text{ lag}}$ estimator exhibits larger average errors, this aligns with its behavior under non-weak instrument conditions, as shown in \cref{plot:compare}, where we noted that it has heavy-tailed errors due to just-identification.

\begin{figure}[t]
    \centering
    {\scriptsize
\begin{tikzpicture}[x=1pt,y=1pt]
\definecolor{fillColor}{RGB}{255,255,255}
\path[use as bounding box,fill=fillColor,fill opacity=0.00] (0,0) rectangle (433.62,126.47);
\begin{scope}
\path[clip] ( 44.91, 30.69) rectangle (350.30,120.97);
\definecolor{drawColor}{gray}{0.92}

\path[draw=drawColor,line width= 0.3pt,line join=round] ( 44.91, 33.60) --
	(350.30, 33.60);

\path[draw=drawColor,line width= 0.3pt,line join=round] ( 44.91, 50.41) --
	(350.30, 50.41);

\path[draw=drawColor,line width= 0.3pt,line join=round] ( 44.91, 67.21) --
	(350.30, 67.21);

\path[draw=drawColor,line width= 0.3pt,line join=round] ( 44.91, 84.02) --
	(350.30, 84.02);

\path[draw=drawColor,line width= 0.3pt,line join=round] ( 44.91,100.83) --
	(350.30,100.83);

\path[draw=drawColor,line width= 0.3pt,line join=round] ( 44.91,117.64) --
	(350.30,117.64);

\path[draw=drawColor,line width= 0.6pt,line join=round] ( 44.91, 42.00) --
	(350.30, 42.00);

\path[draw=drawColor,line width= 0.6pt,line join=round] ( 44.91, 58.81) --
	(350.30, 58.81);

\path[draw=drawColor,line width= 0.6pt,line join=round] ( 44.91, 75.62) --
	(350.30, 75.62);

\path[draw=drawColor,line width= 0.6pt,line join=round] ( 44.91, 92.43) --
	(350.30, 92.43);

\path[draw=drawColor,line width= 0.6pt,line join=round] ( 44.91,109.23) --
	(350.30,109.23);

\path[draw=drawColor,line width= 0.6pt,line join=round] ( 80.15, 30.69) --
	( 80.15,120.97);

\path[draw=drawColor,line width= 0.6pt,line join=round] (138.87, 30.69) --
	(138.87,120.97);

\path[draw=drawColor,line width= 0.6pt,line join=round] (197.60, 30.69) --
	(197.60,120.97);

\path[draw=drawColor,line width= 0.6pt,line join=round] (256.33, 30.69) --
	(256.33,120.97);

\path[draw=drawColor,line width= 0.6pt,line join=round] (315.06, 30.69) --
	(315.06,120.97);
\definecolor{drawColor}{RGB}{248,118,109}

\path[draw=drawColor,line width= 0.6pt,line join=round] ( 80.15, 64.77) --
	(138.87, 62.24) --
	(197.60, 60.32) --
	(256.33, 60.69) --
	(315.06, 60.11);
\definecolor{drawColor}{RGB}{163,165,0}

\path[draw=drawColor,line width= 0.6pt,line join=round] ( 80.15, 69.28) --
	(138.87, 64.74) --
	(197.60, 62.77) --
	(256.33, 62.99) --
	(315.06, 69.14);
\definecolor{drawColor}{RGB}{0,191,125}

\path[draw=drawColor,line width= 0.6pt,line join=round] ( 80.15, 41.29) --
	(138.87, 39.02) --
	(197.60, 37.14) --
	(256.33, 35.75) --
	(315.06, 34.79);
\definecolor{drawColor}{RGB}{0,176,246}

\path[draw=drawColor,line width= 0.6pt,line join=round] ( 80.15,108.06) --
	(138.87,108.83) --
	(197.60,116.87) --
	(256.33,106.23) --
	(315.06,112.57);
\definecolor{drawColor}{RGB}{231,107,243}

\path[draw=drawColor,line width= 0.6pt,line join=round] ( 80.15, 66.67) --
	(138.87, 73.63) --
	(197.60, 64.48) --
	(256.33, 64.67) --
	(315.06, 63.84);
\end{scope}
\begin{scope}
\path[clip] (  0.00,  0.00) rectangle (433.62,126.47);
\definecolor{drawColor}{gray}{0.30}

\node[text=drawColor,anchor=base east,inner sep=0pt, outer sep=0pt, scale=  0.88] at ( 39.96, 38.97) {1};

\node[text=drawColor,anchor=base east,inner sep=0pt, outer sep=0pt, scale=  0.88] at ( 39.96, 55.78) {10};

\node[text=drawColor,anchor=base east,inner sep=0pt, outer sep=0pt, scale=  0.88] at ( 39.96, 72.59) {100};

\node[text=drawColor,anchor=base east,inner sep=0pt, outer sep=0pt, scale=  0.88] at ( 39.96, 89.39) {1000};

\node[text=drawColor,anchor=base east,inner sep=0pt, outer sep=0pt, scale=  0.88] at ( 39.96,106.20) {10000};
\end{scope}
\begin{scope}
\path[clip] (  0.00,  0.00) rectangle (433.62,126.47);
\definecolor{drawColor}{gray}{0.30}

\node[text=drawColor,anchor=base,inner sep=0pt, outer sep=0pt, scale=  0.88] at ( 80.15, 19.68) {100};

\node[text=drawColor,anchor=base,inner sep=0pt, outer sep=0pt, scale=  0.88] at (138.87, 19.68) {316};

\node[text=drawColor,anchor=base,inner sep=0pt, outer sep=0pt, scale=  0.88] at (197.60, 19.68) {1000};

\node[text=drawColor,anchor=base,inner sep=0pt, outer sep=0pt, scale=  0.88] at (256.33, 19.68) {3162};

\node[text=drawColor,anchor=base,inner sep=0pt, outer sep=0pt, scale=  0.88] at (315.06, 19.68) {10000};
\end{scope}
\begin{scope}
\path[clip] (  0.00,  0.00) rectangle (433.62,126.47);
\definecolor{drawColor}{RGB}{0,0,0}

\node[text=drawColor,anchor=base,inner sep=0pt, outer sep=0pt, scale=  1.10] at (197.60,  7.64) {Sample size};
\end{scope}
\begin{scope}
\path[clip] (  0.00,  0.00) rectangle (433.62,126.47);
\definecolor{drawColor}{RGB}{0,0,0}

\node[text=drawColor,rotate= 90.00,anchor=base,inner sep=0pt, outer sep=0pt, scale=  1.10] at ( 13.08, 75.83) {$\operatorname{error}(\hat\beta)$};
\end{scope}
\begin{scope}
\path[clip] (  0.00,  0.00) rectangle (433.62,126.47);
\definecolor{drawColor}{RGB}{0,0,0}

\node[text=drawColor,anchor=base west,inner sep=0pt, outer sep=0pt, scale=  1.10] at (366.80,110.93) {IV method};
\end{scope}
\begin{scope}
\path[clip] (  0.00,  0.00) rectangle (433.62,126.47);
\definecolor{drawColor}{RGB}{248,118,109}

\path[draw=drawColor,line width= 0.6pt,line join=round] (368.24, 97.13) -- (379.80, 97.13);
\end{scope}
\begin{scope}
\path[clip] (  0.00,  0.00) rectangle (433.62,126.47);
\definecolor{drawColor}{RGB}{163,165,0}

\path[draw=drawColor,line width= 0.6pt,line join=round] (368.24, 82.68) -- (379.80, 82.68);
\end{scope}
\begin{scope}
\path[clip] (  0.00,  0.00) rectangle (433.62,126.47);
\definecolor{drawColor}{RGB}{0,191,125}

\path[draw=drawColor,line width= 0.6pt,line join=round] (368.24, 68.22) -- (379.80, 68.22);
\end{scope}
\begin{scope}
\path[clip] (  0.00,  0.00) rectangle (433.62,126.47);
\definecolor{drawColor}{RGB}{0,176,246}

\path[draw=drawColor,line width= 0.6pt,line join=round] (368.24, 53.77) -- (379.80, 53.77);
\end{scope}
\begin{scope}
\path[clip] (  0.00,  0.00) rectangle (433.62,126.47);
\definecolor{drawColor}{RGB}{231,107,243}

\path[draw=drawColor,line width= 0.6pt,line join=round] (368.24, 39.31) -- (379.80, 39.31);
\end{scope}
\begin{scope}
\path[clip] (  0.00,  0.00) rectangle (433.62,126.47);
\definecolor{drawColor}{RGB}{0,0,0}

\node[text=drawColor,anchor=base west,inner sep=0pt, outer sep=0pt, scale=  0.88] at (386.75, 94.10) {CIV$_{I,X,Y}$};
\end{scope}
\begin{scope}
\path[clip] (  0.00,  0.00) rectangle (433.62,126.47);
\definecolor{drawColor}{RGB}{0,0,0}

\node[text=drawColor,anchor=base west,inner sep=0pt, outer sep=0pt, scale=  0.88] at (386.75, 79.65) {CIV$_{I}$};
\end{scope}
\begin{scope}
\path[clip] (  0.00,  0.00) rectangle (433.62,126.47);
\definecolor{drawColor}{RGB}{0,0,0}

\node[text=drawColor,anchor=base west,inner sep=0pt, outer sep=0pt, scale=  0.88] at (386.75, 65.19) {NIV$_{3 \textrm{ lags}}$};
\end{scope}
\begin{scope}
\path[clip] (  0.00,  0.00) rectangle (433.62,126.47);
\definecolor{drawColor}{RGB}{0,0,0}

\node[text=drawColor,anchor=base west,inner sep=0pt, outer sep=0pt, scale=  0.88] at (386.75, 50.74) {NIV$_{1 \textrm{ lag}}$};
\end{scope}
\begin{scope}
\path[clip] (  0.00,  0.00) rectangle (433.62,126.47);
\definecolor{drawColor}{RGB}{0,0,0}

\node[text=drawColor,anchor=base west,inner sep=0pt, outer sep=0pt, scale=  0.88] at (386.75, 36.28) {IV};
\end{scope}
\end{tikzpicture}
     }
    \caption{Average error (in log scale) of NIV, CIV and IV estimators for the causal effect of $X_{t-1}$ on $Y_t$,
    where the strength of the instrument vanishes at rate $O(T^{-1/2})$.
    Similar to the weak instrument setting in i.i.d.\ data, in this asymptotic regime our estimators do not converge to the true causal effect \citep[e.g.,][]{stock2002survey}. 
    }
    \label{plot:weak-iv}
\end{figure}

\section{Proofs}\label{sec:all_proofs}
\subsection{Proof of Conditional IV}  \label{app:proofciviid}
For completeness, we now prove our statement about CIV from \cref{sec:intro-civ}.
{Similar statements have been reported by \citet{brito2002generalized, pearl2009causality,henckel2021graphical} 
but they differ in their 
precise formulations. }
\begin{proposition} \label{prop:civiid}
Consider a linear SCM (see \cref{sec:interventions}) over variables $V$, and let $\cI, \cX, \cB, \{Y\} \subseteq V$ be disjoint collections of variables
from $V$, and let $\G$ be the corresponding DAG.
Assume that $\cI, \cX$ and $Y$ have zero mean 
and finite second moment and 
let $\beta$ be the causal coefficient with which $\cX$ enters the structural equation for $Y$
(some of the entries of $\beta$ can be zero, so not all variables in $\cX$ have to be parents of $Y$).
We consider
the following three requirements on $\cI, \cX, \cB$ and $Y$
\begin{civenum}
    \item $\cI$ and $Y$ are $d$-separated given $\cB$ in the graph $\G_{\cX\not\rightarrow Y}$, 
    that is the graph $\G$ where all direct edges from $\cX$ to $Y$ are removed,
    \item $\cB$ is not a descendant of 
    $\cX \cup Y$ in $\G$, and
    \item the matrix $\E[\cov(\cX, \cI | \cB)]$ has rank $d_\cX$, that is, full row rank.
\end{civenum}
If \cref{assump:civ-d-sep,assump:civ-descendants} are met, $Y - \beta\cX \indep \cI | \cB$, and in particular $\beta$ satisfies the \emph{CIV moment equation}
\begin{equation}\label{eq:civ-moment-equation-appendix}
    \E[\cov(Y - \beta \cX, \cI|\cB)] = 0.    
\end{equation}
If, additionally, \cref{assump:civ-relevance} is met,
$\beta$ is the unique solution to this equation,
\begin{equation*}
    \E[\cov(Y - b \cX, \cI|\cB)] = 0 \implies b = \beta.
\end{equation*}
\end{proposition}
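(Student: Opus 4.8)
The statement has two parts. The first part (moment equation) reduces to establishing the conditional independence $Y - \beta\cX \indep \cI \mid \cB$, from which \cref{eq:civ-moment-equation-appendix} follows because the conditional covariance of independent variables vanishes (and then averaging over $\cB$). The second part (uniqueness) is a linear-algebra argument that leverages the rank condition \ref{assump:civ-relevance}. I would prove these in that order.

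\textbf{First part.} In the linear SCM, write the structural equation for $Y$ as $Y = \beta\cX + \gamma W + \epsilon^Y$ for some $W \subseteq V\setminus\cX$ (as in the setup preceding the proposition). The residual $R := Y - \beta\cX = \gamma W + \epsilon^Y$ is then, up to the linear map $\beta$, exactly the ``non-$\cX$'' contribution to $Y$. The key observation is that $R$ is a deterministic function of the node $Y$ in the modified graph $\G_{\cX\not\to Y}$ where the direct edges from $\cX$ to $Y$ have been deleted: removing those edges removes precisely the term $\beta\cX$. So I would argue that any $d$-separation of $\cI$ and $Y$ given $\cB$ in $\G_{\cX\not\to Y}$ transfers to a $d$-separation of $\cI$ and $R$ given $\cB$, and then invoke the global Markov property of the SCM's DAG to conclude $R \indep \cI \mid \cB$. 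Here \ref{assump:civ-descendants} ($\cB$ not a descendant of $\cX\cup Y$) is what guarantees that conditioning on $\cB$ does not open a collider path or otherwise spoil the separation — I would check that conditioning on non-descendants is compatible with reading the independence off $\G_{\cX\not\to Y}$. Given $R \indep \cI \mid \cB$, the conditional covariance $\cov(R,\cI\mid\cB)$ is identically zero, so its expectation over $\cB$ vanishes, which is \cref{eq:civ-moment-equation-appendix}.

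\textbf{Second part (uniqueness).} For any candidate $b$, expand
\begin{equation*}
    \E[\cov(Y - b\cX, \cI \mid \cB)] = \E[\cov(Y-\beta\cX,\cI\mid\cB)] - (b-\beta)\,\E[\cov(\cX,\cI\mid\cB)].
\end{equation*}
By the first part the first term is $0$, so the moment equation for $b$ becomes $(b-\beta)\,\E[\cov(\cX,\cI\mid\cB)] = 0$. Since $\E[\cov(\cX,\cI\mid\cB)]$ is a $d_\cX \times d_\cI$ matrix of full row rank $d_\cX$ by \ref{assump:civ-relevance}, it has a right inverse; multiplying on the right gives $b - \beta = 0$, i.e.\ $b = \beta$.

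\textbf{Main obstacle.} The routine steps are the covariance algebra and the rank argument; the delicate point is the graphical step in the first part — namely justifying that $d$-separation of $\cI$ and $Y$ in $\G_{\cX\not\to Y}$ together with \ref{assump:civ-descendants} yields the \emph{residual} independence $Y-\beta\cX \indep \cI \mid \cB$, rather than merely $Y \indep \cI \mid \cB$. The care needed is in arguing that the residual $R$ ``lives at'' the node $Y$ of the edge-deleted graph and inherits its separation properties, and in verifying that conditioning on the non-descendant set $\cB$ does not introduce new active paths (no collider on a path between $\cI$ and $R$ is unblocked). This is a standard but not entirely mechanical manipulation of $d$-separation in linear SCMs, and it is where I would spend most of the effort.
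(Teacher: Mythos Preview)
Your proposal is correct and the uniqueness part is essentially identical to the paper's. For the first part, however, your route differs from the paper's. The paper stays in the original SCM and original graph $\G$: it writes $Y = \beta\cX + \pi\cB + \gamma R + \epsilon^Y$ (where $R$ are the parents of $Y$ outside $\cX\cup\cB$), then proves separately that (1) every path between $\cI$ and $R$ in $\G$ is blocked by $\cB$ via a case-by-case path argument, and (2) $\epsilon^Y \indep (\cI,\cB)$ because neither $\cI$ nor $\cB$ is a descendant of $Y$. Your argument instead passes to the \emph{edge-deleted} SCM (same noise, $Y$'s equation replaced by $\gamma W + \epsilon^Y$), observes that the residual $Y-\beta\cX$ coincides pointwise with the $Y$ node there, and applies the global Markov property of $\G_{\cX\not\to Y}$ directly via (CIV1). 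This is cleaner and avoids the case analysis on paths; the paper's approach is more explicit and keeps everything in one graph.

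One point to sharpen: you attribute the role of (CIV2) to ``not opening a collider path'', but in your framework its function is different. The Markov property in $\G_{\cX\not\to Y}$ gives independence in the \emph{edge-deleted} SCM; to transfer this to the original you need $(\cI,\cB)$ to be the same random variables in both SCMs, i.e.\ non-descendants of $Y$ in $\G$. That $\cB$ is a non-descendant is exactly (CIV2). That $\cI$ is a non-descendant is \emph{not} an assumption and must be derived: a directed path $Y\to\cdots\to I$ survives in $\G_{\cX\not\to Y}$ and, since (CIV2) forbids $\cB$ from containing any node on it, it would be open given $\cB$, contradicting (CIV1). The paper makes this same observation in its proof. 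Once you plug this in, your argument goes through.
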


\begin{proof}
Due to the additive, linear  structure of the SCM, we can write $Y$ as a     \begin{align*}
        Y = \beta \cX + \pi \cB + \gamma R + \epsilon^Y,
    \end{align*}
    where $R$ are those parents of $Y$ that are not in $\cX\cup\cB$ and $\pi \in \R^{d_Y \times d_\cB}, \gamma \in \R^{d_Y \times d_R}$
    are some coefficients.
    
    We claim (1) that any path from $\cI$ to $R$ is blocked by $\cB$ and (2) that $\E[\cov(\epsilon^Y, \cI|\cB)]=0$.
    It then follows from the global Markov property \citep{lauritzen1996graphical}, that $\E[\cov(R, \cI |\cB)] = 0$, and trivially also $\E[\cov(\cB, \cI|\cB)]=0$. 
    Hence, since $Y - \beta \cX = \pi \cB + \gamma R + \epsilon^Y$, it follows that $\E[\cov(Y - \beta \cX, \cI | \cB)] = \E[\cov(\pi \cB + \gamma R + \epsilon^Y, \cI | \cB)] = 0$.
    
    For (1), suppose for a contradiction that a path $\pi$ between $\cI$ and $R$ that is unblocked given $\cB$ exists. 
    Case 1: $\pi$ does not contain any edge from $\cX$ to $Y$. Then, 
    the path that concatenates $\pi$ with the corresponding edge from $R$ to $Y$ is an unblocked path (given $\cB$) in the graph, where the edges from $\cX$ to $Y$ are removed. This contradicts \cref{assump:civ-d-sep}.
    Case 2: $\pi$ contains an edge from $X \in \cX$ to $Y$. Then, 
    $\pi$ contains either the structure
    $X \rightarrow Y \leftarrow$ 
    or the structure 
    $X \rightarrow Y \rightarrow$. 
    The first case implies $\cB \cap \DE{Y} \neq \emptyset$, violating~\cref{assump:civ-descendants}. In the second case, 
    we either have that there is a directed path from $Y$ to $\cI$ that is unblocked by $\cB$, violating  \cref{assump:civ-d-sep}
    or, again that 
    $\cB \cap \DE{Y} \neq \emptyset$, violating~\cref{assump:civ-descendants}.
    
    For (2), we have that neither $\cB$ nor $\cI$ are descendants of $Y$: $\cB$ cannot be a descendant of $Y$ due to \cref{assump:civ-descendants}, and by \cref{assump:civ-d-sep}, $\cI$ can only be a descendant if $\cB$ is also a descendant, which is not possible. Every variable in the linear SCM can be rewritten as a function only of noise terms corresponding to ancestors. 
    Applying this to $\cB$ and $I$, we have  (since $Y$ is not an ancestor of neither $\cI$ nor $\cB$ and because $\epsilon^Y$ is independent of all other noise terms in the SCM) that $\epsilon^Y$ is independent of $(\cB, \cI)$ and it follows that $\E[\cov(\epsilon^Y, \cI|\cB)]=0$.
    
    When we additionally assume \cref{assump:civ-relevance}, the solution to \cref{eq:civ-moment-equation-appendix} is unique because the equation can be rewritten to
    \begin{align*}
        b\E[\cov(\cX, \cI|\cB)] = \E[\cov(Y, \cI|\cB)],
    \end{align*}
    and by the assumption of full row rank of $\E[\cov(\cX, \cI|\cB)]$, this can have at most one solution.
\end{proof}

\subsection{Proof of \texorpdfstring{\cref{thm:gmp}}{}}
\GMP*
\begin{proof}
    Our proof is inspired by \citet[][Sec.\ 6]{lauritzen1990independence}.
Given a time series (sub)graph 
    $\G$ over nodes $V$, we denote by $V_{[s,t]} := \{S^i_v | i \in \{1, \ldots, d\}, s\leq v \leq t, S^i_v \in V\} \subseteq V$ the nodes in $V$ between times $s$ and $t$ and denote by $\G_{[s,t]}$ the subgraph of $\G$, where only vertices in $V_{[s,t]}$ are included.
    Let further $s_0, t_0$ be the largest and smallest time points respectively such that $A\cup C\cup B \subseteq V_{[s_0, t_0]}$. 
    Let $q \in \mathbb{N}_{>p}$ be such that if two nodes in $V_{[s_0, t_0]}$ are $d$-connected (with empty conditioning set)
    in $\Gfull$, then there is a $d$-connecting path in
    $\G_{[s_0-q,t_0]}$.
    Define the set 
    \begin{align*}
            \mathcal{A} := \{n \in &\mathbb{N}\,|\, \text{for all } 
    \G^* \text{ that are graphs over nodes $V^*$ with time indices between } s_0-q \text{ and }  t_0 \text{ s.t.\ } \\  
    & \qquad \text{all edges in }\G^*\text{ point `forward in time', i.e., } \forall k\in \N,\text{ there is no edge }S^i_t \rightarrow S^j_{t-k}\text{ and} \\
    &\qquad\G^*\text{ is a subgr.\ of } \G_{f}, \text{ where }  \G_{f} \text{ is the full-time gr.\ induced by } \G^0 := \G^*_{[s_0-q,s_0-1]} \text{ and}\\
    &\qquad AN_{\G_f}(V^+)_{[s_0,t_0]} = V^+, \text{ where }   
    V^+ := V^{*}_{[s_0, t_0]}
    \text{ and}\\
    & \qquad |V^+| = n \text{ and} \\
    & \text{for all VAR{($p$)} processes whose structure is specified by } \G^0 \text{ and}\\
    & \text{for all } A^*, B, C^* \subseteq \G^* \text{ such that } \\
    & \qquad (A^+ \cup B \cup C^+) = V^+ (\text{where } A^+ := A^*_{[s_0,t_0]}, C^+ := C^*_{[s_0,t_0]}, \G^+ := \G^*_{[s_0,t_0]}),  \text{ and} \\
    & \qquad A^* = A^+ \cup (\PA_{\G^*}(A^+) \cap V^0) \text{ and } C^* = C^+ \cup (\PA_{\G^*}(C^+) \cap V^0)\\
    & \qquad{\text{where } V^0 \text{ are the nodes of }\G^0}\\
    & \text{we have }\\
    &\qquad A^* \perp_{\G^*} C^* | B \; \Rightarrow \;  A^* \indep C^* | B 
    \},
        \end{align*}
        where $\perp_{\G}$ indicates $d$-separation in $\G$.

    We show below, by induction, that $\mathcal{A} = \mathbb{N}$. This suffices to prove the statement of the theorem because of the following line of arguments.
    Let  
    $V^0 := V_{[s_0-q, s_0-1]}$ and 
    $V^+ := AN_{\Gfull}(A \cup C \cup B)_{[s_0, t_0]}$.
    Let $\G^*$ be the graph $\Gfull$ restricted to the nodes in $V^* := V^0 \cup V^+$. 
    Then, $A \perp_{\G^*} C | B$ (as $\G^*$ is a subgraph of $\Gfull$).   
    If $V^+ \neq (A \cup C \cup B)$, we enlarge $A$ and $C$ to the disjoint sets $A^+$ and $C^+$ such that 
    $A^+ \perp_{\G^*} C^+ | B$ and 
    $V^+ = A^+ \cup B \cup C^+$.
    Let us define 
    $A^* := A^+ \cup (\PA_{\G^*}(A^+) \cap V^0)$
    and 
    $C^* := C^+ \cup (\PA_{\G^*}(C^+) \cap V^0)$. Importantly, these two sets are disjoint (otherwise, $A^+$ and $C^+$ would have a joint parent not in $B$, violating $A^+ \perp_{\G^*} C^+ | B$).
    We then have that $A^* \perp_{\G^*} C^* | B$
    (Indeed, if there is an open path from a node in $a \in A^*$ to a node in $c \in C^*$, given $B$, then there is an open path between a node in $A^+$ (either $a$ itself or its child in $A^+$) to a node in $C^+$ (either $c$ itself or its child in $C^+$), violating $A^+ \perp_{\G^*} C^+ | B$). But then 
    $\mathcal{A} = \mathbb{N}$ implies $A^* \indep C^* | B$. And this implies that 
    $A \indep C | B$, as $A$ and $C$ are subsets of $A^*$ and $C^*$, respectively.

    Let us now prove that $\mathcal{A} = \mathbb{N}$ by, (1), proving 
    $1 \in \mathcal{A}$ and $2 \in \mathcal{A}$ and, (2), proving $n \in \mathcal{A}$ implies $n+1 \in \mathcal{A}$.

(1) We now prove that $1 \in \mathcal{A}$ and $2 \in \mathcal{A}$.\\
    The only non-trivial statement occurs when $A^* \neq \emptyset$ and $C^* \neq \emptyset$ and $B=\emptyset$. Because 
    $A^* \perp_{\G^*} C^*$, we have $AN_{\G^*}(A^*) \cap AN_{\G^*}(C^*) = \emptyset$. 
    This implies 
    $AN_{\Gfull}(A^*) \cap AN_{\Gfull}(C^*) = \emptyset$ because of the repetitive structure in a full time graph and the choice of $q$.

    (2) We now prove that $n \in \mathcal{A}$ implies $n+1 \in \mathcal{A}$.\\
    Assume $n\in \mathcal{A}$ and consider $\G^*$, $\G^0$, $\G^+$, $V^+$, $A^*$, $A^+$, $B$, $C^*$, $C^+$
    as described in set $\mathcal{A}$ with $|V^+|=n+1$. 
    Consider a node $\lambda \in V^+$ that is a sink node in $\G^*$.

First, assume that $\lambda \in A^+$. Then $\PA_{\G^*}(\lambda) \subseteq (A^*\setminus \{\lambda\}) \cup B$ (because $d$-separation would be violated if $C^* \cap \PA(\lambda) \neq \emptyset$). Thus, it follows that 
\begin{align} \label{eq:A1}
    \lambda \indep C^* | B \cup (A^* \setminus \{\lambda\}).
\end{align}
(Indeed, 
$\PA_{\Gfull}(\lambda) = \PA_{\G^*}(\lambda)$ and thus 
there exist a coefficient vector $\gamma \in \R^{|\PA_{\G^*}(\lambda)|}$ such that $\lambda = \gamma^\top \PA_{\G^*}(\lambda) + \epsilon^\lambda$; it then follows from the $MA(\infty)$
representation of $S$,
see \citet[][Sec.\ 11.3]{Brockwell1991} or \citet{hamilton1994time},
that $\lambda \indep C^* \cup (B \cup (A^* \setminus \{\lambda\}) \setminus \PA_{\G^*}(\lambda)) \, |\, \PA_{\G^*}(\lambda)$;
the claimed independence then follows with the weak union property.)
If $A^* = \{\lambda\}$, then this already implies 
$A^* \indep C^* | B$. Otherwise, we observe that
$A^*\setminus \{\lambda\} \perp_{{\G^*}^m} C^* |B$, 
where $\G^m$ denotes moralization of graph $\G$ \citep{lauritzen1996graphical}, as $d$-separation is equivalent to separation in the moralized graph.
But then, 
$A^*\setminus \{\lambda\} \perp_{({\G^*}^m)_{V^* \setminus \{\lambda\}}} C^* |B$, 
as this graph contains no more edges. 
And therefore, 
$(A^*\setminus \{\lambda\}) \perp_{(\G^*_{V^* \setminus \{\lambda\}})^m} C^* |B$
as, again, the graph contains no more edges. 
By the induction hypothesis $n \in \mathcal{A}$ and thus 
\begin{equation} \label{eq:A2}
A^* \setminus \{\lambda\} \indep C^* | B.
    \end{equation}
Combining~\eqref{eq:A1} and~\eqref{eq:A2} by the contraction property, it follows that 
$$
A^* \indep C^* | B.
$$

Second, assume that $\lambda \in C^+$. The argument follows in the same way as in the case $\lambda \in A^+$.

Third, assume that $\lambda\in B$ (these are all cases since $A^+ \cup B \cup C^+ = V^+$). Since $B$ separates $A^*$ and $C^*$ in $(\G^*)^m$, then $B\setminus \{\lambda\}$ also separates $A^*$ and $C^*$ in 
$({\G^*}^m)_{V^*\setminus \{\lambda\}}$ (as it has no more edges), 
and therefore $B\setminus \{\lambda\}$ also separates $A^*$ and $C^*$ in 
$(\G^*_{V^*\setminus \{\lambda\}})^m$,
since this graph, again, has no more edges. 
By the induction hypothesis $n\in \mathcal{A}$, so 
\begin{equation} \label{eq:Bind1}
    A^* \indep C^* | (B\setminus\{ \lambda\}).
\end{equation}
We now prove a second independence statement. We now make a case distinction
(a) Assume that 
$$
\PA_{\G^+}(\lambda) \cap A^* \neq \emptyset \quad \text{ or } \quad 
AN_{\G^*}(\PA_{\G^*}(\lambda)_{[s_0-q,s_0{-1}]}) \cap 
AN_{\G^*}(\PA_{\G^*}(A^*)_{[s_0-q,s_0{-1}]}) \neq \emptyset.
$$
Then, it follows that
\begin{equation} \label{eq:nocommonanc}
\PA_{\G^+}(\lambda) \cap  C^* = \emptyset \quad \text{ and }
AN_{\G^*}(\PA_{\G^*}(\lambda)_{[s_0-q,s_0{-1}]}) \cap 
AN_{\G^*}(\PA_{\G^*}(C^*)_{[s_0-q,s_0{-1}]}) = \emptyset.
\end{equation}
\begin{quote}
(Indeed, if the statement on the left-hand side would be false, then there is a $d$-connecting path between $A^*$ and $C^*$, given $B$: this goes from the element in $\PA_{\G^+}(\lambda) \cap C^*$ to $\lambda$ (which is in $B$) and then either 
to the element in $\PA_{\G^+}(\lambda) \cap A^*$
or to the common ancestor of $\PA_{\G^*}(\lambda)_{[s_0-q,s_0-1]}$ and $\PA_{\G^*}(A^*)_{[s_0-q,s_0-1]}$ and then to the corresponding element in $A^*$. If the statement on the right-hand side would be false, then we can use the same path but this time
going via the common ancestor of $\PA_{\G^*}(\lambda)_{[s_0-q,s_0{-1}]}$ and $\PA_{\G^*}(C^*)_{[s_0-q,s_0{-1}]}$.)
\end{quote}
But then it follows that 
\begin{equation} \label{eq:Bind2}
\lambda \indep C^* |A^* \cup (B\setminus \{\lambda\}).
\end{equation} 
\begin{quote}
(Indeed, 
noting that 
$\PA_{\G^+}(\lambda) \subseteq A^* \cup B \setminus \{\lambda\}$, we can replace the left-hand side by the
MA($\infty$) representation of 
$\PA_{\G^*}(\lambda)_{[s_0-q,s_0-1]}$.
For $C^*$, we repeatedly use the structural equations except for variables in $B \setminus \{\lambda\}$ 
or variables in $\PA_{\G^*}(C^*)_{[s_0-q,s_0-1]}$
(other variables will not occur: If there was a variable in $A^*$, for example, there would be a directed path from $A^*$ to $C^*$).
We then use the MA($\infty$) representation of 
$\PA_{\G^*}(C^*)_{[s_0-q,s_0-1]}$.
The statement then follows from the fact that
$\PA_{\G^*}(\lambda)_{[s_0-q,s_0-1]}$
and
$\PA_{\G^*}(C^*)_{[s_0-q,s_0-1]}$
do not have common ancestors, see \cref{eq:nocommonanc}.
\end{quote}
Combining~\eqref{eq:Bind1} and~\eqref{eq:Bind2} using the contraction property, it follows that $C^* \indep (\{\lambda\} \cup A^*) |(B\setminus \{\lambda\})$, and by the weak union property that 
$$
C^* \indep A^* |B. 
$$
(b) 
Now assume that 
$$
\PA_{\G^+}(\lambda) \cap A^* = \emptyset \quad \text{ and } \quad 
AN_{\G^*}(\PA_{\G^*}(\lambda)_{[s_0-q,s_0-1]}) \cap 
AN_{\G^*}(\PA_{\G^*}(A^*)_{[s_0-q,s_0-1]}) = \emptyset.
$$
Similarly as in case (a) it follows that 
\begin{equation} \label{eq:Bind3}
\lambda \indep A^* |C^* \cup (B\setminus \{\lambda\}).
\end{equation} 
Combining~\eqref{eq:Bind1} and~\eqref{eq:Bind3} using the contraction property, it follows that $(\{\lambda\} \cup C^*) \indep A^* |(B\setminus \{\lambda\})$, and by the weak union property that 
$$
C^* \indep A^* |B. 
$$
This concludes the proof.
\end{proof}

\subsection{Proof of \texorpdfstring{\cref{prop:nuisance-iv}}{}}
\nuisanceIV*
\begin{proof}
    By satisfaction of \cref{assump:civ-d-sep,assump:civ-descendants,assump:civ-relevance}, the causal effect $\tilde{\beta} = [\beta, \alpha]$
    of $\tilde{\cX} = \cX\cup\cZ$ on $Y$ is identified by the instrument $\cI$ and the conditioning set $\cB$
    by \cref{prop:civiid} in \cref{app:proofciviid}.
    In particular, also the sub-vector of the IV estimate corresponding to $\cX$ is identified.
\end{proof}

\subsection{Proof of \texorpdfstring{\cref{prop:neither-civ-nor-niv-optimal}}{}}
\neitherOptimal*
\begin{proof}
    We show this by considering two SCMs over 6 variables $S = [H, I, X, Y, Z, B]$ given by $S \coloneqq AS + \epsilon$ where $A$ is such that the resulting graph is acyclic and admits the graphical model in \cref{fig:3-structures} (right) and $\epsilon\sim\mathcal{N}(0, \Gamma)$; we provide two concrete choices for $A$ and $\Gamma$ below. 
    We consider both the $\civ{X}{Y}(I|B)$ and the $\niv{X}{Y}([I, B], Z)$ estimates of $\beta$, the causal effect of $X$ on $Y$, and provide two sets of parameters $(A^I, \Gamma^I)$ and $(A^{II}, \Gamma^{II})$ such that if $X$ is generated according to $(A^I, \Gamma^I)$, the CIV estimator has a lower asymptotic variance than the NIV estimator, and if $S$ is generated according to $(A^{II}, \Gamma^{II})$, the CIV estimator has a higher asymptotic variance than the NIV estimator. 
    \begin{align*}
        A^I \coloneqq 
        \begin{matrix}
            \text{\tiny H} \\ \text{\tiny I} \\ \text{\tiny X} \\ \text{\tiny Y} \\ \text{\tiny Z} \\ \text{\tiny B}
        \end{matrix}
        \begin{pmatrix}
            0 & 0 & 0 & 0 & 0 & 0\\ 
            0 & 0 & 0 & 0 & 0 & 1.185\\ 
            21.095 & 6.885 & 0 & 0 & 0 & -5.969\\ 
            -7.244 & 0 & 16.499 & 0 & -1.892 & 0\\ 
            1.921 & 0 & 0 & 0 & 0 & 2.62\\ 
            0 & 0 & 0 & 0 & 0 & 0
        \end{pmatrix}\,
        \Gamma^I \coloneqq 
        \begin{matrix}
            \text{\tiny H} \\ \text{\tiny I} \\ \text{\tiny X} \\ \text{\tiny Y} \\ \text{\tiny Z} \\ \text{\tiny B}
        \end{matrix}
        \begin{pmatrix}
            0.2 & 0 & 0 & 0 & 0 & 0\\ 
            0 & 1.2 & 0 & 0 & 0 & 0\\ 
            0 & 0 & 2.2 & 0 & 0 & 0\\ 
            0 & 0 & 0 & 1.2 & 0 & 0\\ 
            0 & 0 & 0 & 0 & 2.2 & 0\\ 
            0 & 0 & 0 & 0 & 0 & 0.2
        \end{pmatrix},
    \end{align*}
and
    \begin{align*}
        A^{II} \coloneqq 
        \begin{matrix}
            \text{\tiny H} \\ \text{\tiny I} \\ \text{\tiny X} \\ \text{\tiny Y} \\ \text{\tiny Z} \\ \text{\tiny B}
        \end{matrix}
        \begin{pmatrix}
            0 & 0 & 0 & 0 & 0 & 0\\ 
            0 & 0 & 0 & 0 & 0 & -2.918\\ 
            -22.439 & 3.519 & 0 & 0 & 0 & 4.282\\ 
            19.964 & 0 & 4.737 & 0 & 4.011 & 0\\ 
            0.884 & 0 & 0 & 0 & 0 & -7.97\\ 
            0 & 0 & 0 & 0 & 0 & 0
        \end{pmatrix}\,
        \Gamma^{II} \coloneqq 
        \begin{matrix}
            \text{\tiny H} \\ \text{\tiny I} \\ \text{\tiny X} \\ \text{\tiny Y} \\ \text{\tiny Z} \\ \text{\tiny B}
        \end{matrix}
        \begin{pmatrix}
            3.2 & 0 & 0 & 0 & 0 & 0\\ 
            0 & 1.2 & 0 & 0 & 0 & 0\\ 
            0 & 0 & 3.2 & 0 & 0 & 0\\ 
            0 & 0 & 0 & 2.2 & 0 & 0\\ 
            0 & 0 & 0 & 0 & 1.2 & 0\\ 
            0 & 0 & 0 & 0 & 0 & 2.2
        \end{pmatrix}.
    \end{align*}
    We can now use the formulas from \cref{sec:asymptotic-variance} to get the asymptotic variances, using that $\E[S] = 0$ and $\E[SS^\top] = (1-A)^{-1}\Gamma (1-A)^{-\top}$. 
    When the data generating mechanism is $(A^I, \Gamma^I)$, the asymptotic distributions are specified by $\sqrt{T}(\hat\beta_{\CIV}-\beta)\sim\mathcal{N}(0, 524.4)$ and $\sqrt{T}(\hat\beta_{\NIV}-\beta)\sim\mathcal{N}(0, 522.7)$. Furthermore,
    when the data generating mechanism is $(A^{II}, \Gamma^{II})$,
    the asymptotic distributions are $\sqrt{T}(\hat\beta_{\CIV}-\beta)\sim\mathcal{N}(0, 320.0)$ and $\sqrt{T}(\hat\beta_{\NIV}-\beta)\sim\mathcal{N}(0, 575.4)$, respectively.
\end{proof}

\subsection{Proof of \texorpdfstring{\cref{prop:dream-theorem}}{}}
\dreamTheorem*
\begin{proof}
    We first show part (i). Due to the additive, linear structure in \cref{assump:varp}, we can rewrite $Y$ as a linear combination of the parents of $Y$ in $\Gfull$ plus some additive noise, $Y = a_1p_1 + \ldots + a_m p_m + \epsilon^Y$, where $a_1, \ldots, a_m$ are coefficients and $p_1, \ldots, p_m$ are nodes from $\Gfull$. 
    Similarly, we can recursively decompose the parents into their parents (in $\Gfull$) and noise without replacing variables in $\cX \cup \cZ \cup \ND{\cX \cup \cZ}_{\Gfull} \cup \cB$, until the first time we have a decomposition 
    \begin{align*}
        Y = \beta \cX + \alpha \cZ + \pi \cB + \gamma R + \epsilon,
    \end{align*}
    where $R$ are $d_R$ variables from $\Gfull$ that are not in $\cB$ or on any
    directed path from $\cX \cup \cZ$ to $Y$
    in $\Gfull$, 
    \begin{equation} \label{eq:Rpropert}
            R \cap (\cB \cup \cX \cup \cZ \cup \DE{\cX \cup \cZ}_{\Gfull}) = \emptyset,
        \end{equation}
    ($R$ may include descendants of $\cB$) and $\epsilon$ are all the
    weighted noise variables
    accumulated when doing the decomposition in parents, $\beta$ and $\alpha$ are the total causal effects of $\cX$ and $\cZ$ on $Y$, respectively, for some coefficients $\pi \in \R^{1 \times d_\cB}, \gamma \in \R^{1 \times d_R}$.
    The coefficients in front of $\cX$ and $\cZ$ are indeed $\beta$ and $\alpha$, because the total causal effect is the product along all paths (in $\Gfull$) from $\cX\cup\cZ$ to $Y$ (see \cref{sec:appendix-TCE}), and by the assumption that $\cB \subseteq \ND{\cX\cup\cZ}_{\G_M}$ (which implies $\cB \subseteq \ND{\cX\cup\cZ}_{\Gfull}$), no directed path from $\cX\cup\cZ$ to $Y$ is blocked by $\cB$. 
    
    We claim (1)  that any path in $\Gfull$ 
    from $\cI$ to $R$ is blocked by $\cB$ 
    and (2) that $\E[\cov(\epsilon, \cI|\cB)]=0$.
    It then follows from \cref{thm:gmp}, that $\E[\cov(R, \cI |\cB)] = 0$, and trivially also $\E[\cov(\cB, \cI|\cB)]=0$. 
    Hence, since $Y - \beta \cX - \alpha \cZ = \pi \cB + \gamma R + \epsilon$, it follows that $\E[\cov(Y - \beta \cX - \alpha \cZ, \cI | \cB)] = \E[\cov(\pi \cB + \gamma R + \epsilon, \cI | \cB)] = 0$.
    
    For (1), suppose for a contradiction that there exist $i \in \cI$ and $m \in R$ and a path $p: i - v_1 - \cdots - v_n - m$
    in $\Gfull$ that is unblocked given $\cB$, where an indirected edge indicates a directed edge that can have any orientation. Since $p$ is unblocked given $\cB$,
    the non-colliders of $p$ are disjoint from $\cB$ and for every collider $v_k$ on $p$, there is a node $b^k \in \cB$ such that $b^k\in \DE{v_k}$ in $\Gfull$. 
    (Observe that the end node $m$ is not in $M$: It is not in $\{Y\}$ (since it was found among the ancestors of $Y$), and by construction of $R$, it is not in $\cB\cup\cX\cup\cZ$. 
    Also $m\notin\cI$: By the construction of $R$ through recursive rewriting as parents, there exists a directed path from $m$ to $Y$ that does not intersect $\cX\cup\cZ\cup\cB$; if $m\in\cI$, this path would violate \cref{assump:civ-d-sep-tce}.) 
    
    Let $w_1, \ldots, w_L$ be those vertices among $v_1, \ldots, v_n$ that appear in the marginalized graph $\G_M$ and let $w_0 \coloneqq i$ and $w_{L+1} \coloneqq Y$. We now, a) construct a path including the nodes $w_0, \ldots, w_{L+1}$ (and possibly some additional colliders, see below) from $\cI$ to $Y$ in $\G_M$; we then show that, b), this path is still unblocked, given $\cB$, if we remove 
    all edges as specified by \cref{assump:civ-d-sep-tce}, creating a contradiction to \cref{assump:civ-d-sep-tce}.

    For a), consider those $0 \leq k \leq L$ for which $w_{k}$ and $w_{k+1}$ are not directly connected in $p$. For such $k$, consider
the segment of $p$ (as a path in $\Gfull$) from $w_k - \cdots - w_{k+1}$, where $\cdots$ represent edges from $p$ that are not in $M$. If there are no colliders on this segment, by \cref{def:marginalzed_graph} at least one of the edges $w_k \rightarrow w_{k+1}, w_k \leftarrow w_{k+1}$ or $w_k \leftrightarrow w_{k+1}$ are present in $\G_M$. 
    If there is exactly one collider on $w_k - \cdots - w_{k+1}$ (as part of $p$ in $\Gfull$), the segment must be one of the following four options:
    \begin{align*}
        w_k \rightarrow\cdots \rightarrow &v_i \leftarrow\cdots\leftarrow w_{k+1},\\
        w_k \leftarrow \cdots \leftarrow v_{j_1} \rightarrow \cdots \rightarrow &v_i \leftarrow \cdots \leftarrow w_{k+1},\\
        w_k \rightarrow\cdots\rightarrow &v_i \leftarrow \cdots \leftarrow v_{j_2} \rightarrow \cdots \rightarrow w_{k+1},\\
        w_k \leftarrow \cdots \leftarrow v_{j_1} \rightarrow \cdots \rightarrow &v_i \leftarrow \cdots \leftarrow v_{j_2} \rightarrow \cdots \rightarrow w_{k+1},
    \end{align*}
    where $v_{j_1}, v_{j_2}$ also are nodes on $p$ and $\cdots$ is a short-hand for edges with the same orientation. But since $\DE{v_i}_{\Gfull}\cap\cB \neq \emptyset$, this implies that at least one of the following paths are present in $\G_M$:
    \begin{align*}
        &w_k \rightarrow b_{k,1} \leftarrow w_{k+1} \\
        &w_k \leftrightarrow b_{k,1} \leftarrow w_{k+1}\\
        &w_k \rightarrow b_{k,1} \leftrightarrow w_{k+1} \\
        &w_k \leftrightarrow b_{k,1} \leftrightarrow w_{k+1},
    \end{align*}
    where $b_{k,1} \in \cB$ (there is no node $a$ from $M$ on the path from $v_i$ to $b_{k,1}$, because if $a \in \cX \cup \cZ \cup \{Y\}$, \cref{assump:civ-descendants} would be violated, and if $a \in \cI$, this would constitute a path in $\Gfull$ from $Y$ to $\cI$ that is unblocked given $\cB$, using the same argument as for the original path).
    Similarly, if there are several colliders on $w_k - \cdots - w_{k+1}$, a path $w_k \rightarrow b_{k,1} \leftrightarrow\cdots\leftrightarrow b_{k,L} \leftarrow w_{k+1}$ (or one of the configurations $\rightarrow \cdots \leftrightarrow$, $\leftrightarrow\cdots \leftarrow$ or $\leftrightarrow\cdots\leftrightarrow$ as first and last edge) is present in $\G_M$, where $b_{k,1}, \ldots, b_{k,L} \in \cB$. 
    
    We now construct a path $p_M$ in $\G_M$ that is 
    $d$-connecting $\cI$ and $Y$, given $\cB$: For $k = 0, \ldots, L-1$, paste together the segments (in $\G_M$) from $w_k$ to $w_{k+1}$ including those possible colliders $b_{k,j}$ discussed above. 
    Further, add the edge $w_L \rightarrow y$ or $w_L \leftrightarrow y$, depending on the orientation of the edge $w_{L} - m$ in $p$. 
    If $w_k$ was a collider on $p$, it is also a collider on $p_M$.
    Since $p$ was unblocked in $\Gfull$, given $\cB$, $p_M$ is unblocked in $\G_M$, given $\cB$: consider a collider on $p_M$; either it is one of the $b_{k,j}$ (in this case, it does not block $p_M$) or it is one of the $w_k$ (in this case, $w_k$ is also a collider on $p$ and thus has a descendant in $\cB$, which is still a descendant of $w_k$ in $\G_M$; again, it does not block $p_M$).
    
    We now turn to b) and argue that the path is still unblocked in $\G_M$, given $\cB$, if we remove the edges specified in \cref{assump:civ-d-sep-tce}.
    First, assume for some $w_k \in \cX \cup \cZ$ that $p_M$ contained the segment $w_k \rightarrow w_{k+1}$. Then there would be some $k+2 \leq k' \leq L+1$ such that $w_{k'} \notin \DE{\cX\cup\cZ}_{\Gfull}$ (because otherwise $m$ would be a descendant of $\cX\cup \cZ$, violating~\eqref{eq:Rpropert}). But this would imply that $p_M$ has a collider, which is a descendant of $\cX\cup \cZ$ and an ancestor of $\cB$ (in $\G_M$), which is not possible by \cref{assump:civ-descendants}. Thus, $p_M$ cannot contain an edge $w_k \rightarrow w_{k+1}$ where $w_k \in \cX \cup \cZ$. 
    
    Second, assume that $p_M$ does not contain an edge $w_k \rightarrow w_{k+1}$ with $w_k \in \cX \cup \cZ$ and that for some $w_k\in\cX\cup\cZ$, $p_M$ contains an edge $w_{k-1} \stackrel{e}{\leftarrow} w_k$ that satisfies the criterion for removal under \cref{assump:civ-d-sep-tce}. We can choose $w_k$ such that $e$ is the first edge on $p_M$ satisfying the criterion and, by definition of the criterion, there exists a directed path $w_{k-1} \rightarrow u_1 \rightarrow\cdots\rightarrow u_t \rightarrow Y$ in $\G_M$ where $u_i \notin \cX\cup\cZ$ and $u_i \notin \cB$ (because otherwise $\cB \cap \DE{w_k}_{\G_M} \neq \emptyset$). Because $p_M$ is unblocked given $\cB$ in $\G_M$ and because no edge is removed on $\tilde{p}_M = w_0 - \cdots - w_{k-1} \rightarrow u_1 \rightarrow \cdots \rightarrow Y$, $\tilde{p}_M$ is a violation of \cref{assump:civ-d-sep-tce}, and therefore $p_M$ does not contain any edge $w_{k-1} \leftarrow w_k$ that satisfies the criterion for removal under \cref{assump:civ-d-sep-tce}. 
    
    Combining `First' and `Second', the path $p_M$ does not contain any edges that would be removed under the criterion in \cref{assump:civ-d-sep-tce}. Also, since $p_M$ is unblocked given $\cB$, any collider is an ancestor of some $b\in \cB$, and that collider is still an ancestor of $b\in\cB$ after removing edges satisfying the criterion in \cref{assump:civ-d-sep-tce} (otherwise $b\in\DE{\cX\cup\cZ}_{\G_M}$, violating \cref{assump:civ-descendants}). In conclusion, $p_M$ is also unblocked given $\cB$ in the graph where we remove edges satisfying the criterion in \cref{assump:civ-d-sep-tce} from $\G_M$. 
    This concludes the proof of (1).
    
    A similar argument proves (2), that is, $\E[\cov(\epsilon, \cI|\cB)]=0$:
    Each $\epsilon^i$ was accumulated as a noise variable of an ancestor (in $\Gfull$) of $Y$, $A^i \notin \cX \cup \cZ \cup \cB$; 
    by construction, a directed path from $\cZ\cup\cX$ through $A^i$ to $Y$ exists in $\Gfull$ that does not intersect $\cX \cup \cZ \cup \cB$, except at the first node of this path. 
    Hence, $\cB$ does not contain a descendant of $A^i$ in $\Gfull$ (because that would imply $\cB$ containing a descendant of $\cX\cup\cZ$ in $\G_M$, 
    violating \cref{assump:civ-descendants}). Also, $A^i$ is not an ancestor of any node in $\cI$ in $\Gfull$, 
    because that would imply an unblocked path from $\cI$ via $A_i$ to $Y$ in $\Gfull$ that does not contain any node in $\cB$ and therefore this corresponds to an unblocked path in $\G_M$, too.
    Using the MA($\infty$)-representation of $\cB$ and $\cI$, see \citet[][Sec.\ 11.3]{Brockwell1991} or \citet{hamilton1994time},
    $\epsilon^i$ is independent of $(\cB, \cI)$, and so $\E[\cov(\epsilon^i, \cI|\cB)] = 0$.
    This concludes the proof of the first part.
    
    Part (ii) follows because if the $(d_{\cX} + d_{\cZ})\times d_{\cI}$ matrix $\E[\cov(\tilde{\cX}, \cI|\cB)]$ has rank $d_{\cX} + d_{\cZ}$, then if a solution to the moment equation $\E[\cov(Y, \cI|\cB)] = \beta \E[\cov(\tilde{\cX}, \cI|\cB)]$ exists, it is unique. 

    For part (iii), let $\bar{\bX} \coloneqq [\begin{matrix}\bX{^\top} & \bZ{^\top}\end{matrix}]^\top$. By \cref{eq:civ-closed-solution}, we have to show that
\begin{align*}
    \hat\E[r_\bY r_\bI^\top] \, \, W\, \,  \hat\E[r_\bI r_{\bar{\bX}}^\top]\bigg(\hat\E[r_{\bar{\bX}} r_\bI^\top]\, \, W\, \, \hat\E[r_\bI r_{\bar{\bX}}^\top]\bigg)^{-1}\xrightarrow[]{P}\gamma
\end{align*}
with $\gamma \coloneqq \vecin{\beta^\top, \alpha^\top}^\top$. From \cref{eq:assump_moments} we have that empirical moments converge in probability to the population moment, and thus, using Slutsky's Theorem, we get that 
\begin{align*}
    \hat\E[r_\bY r_\bI^\top] \, \, W\, \,  \hat\E[r_\bI r_{\bar{\bX}}^\top]\bigg(\hat\E[r_{\bar{\bX}} r_\bI^\top]\, \, W\, \, \hat\E[r_\bI r_{\bar{\bX}}^\top]\bigg)^{-1}\xrightarrow[]{P}\E[r_{Y_t} r_{\cI_t}^\top] \, \, W\, \,  \E[r_{\cI_t} r_{\bar{\cX_t}}^\top]\bigg(\E[r_{\bar{\cX_t}} r_{\cI_t}^\top]\, \, W\, \, \E[r_{\cI_t} r_{\bar{\cX_t}}^\top]\bigg)^{-1},
\end{align*}  
where $r_{\bar{\cX}} = \bar{\cX} - \E[\bar{\cX}|\cB]$ and similarly for $r_Y$ and $r_\cI$.
We can rewrite $\E[r_{Y_t} r_{\cI_t}^\top]$ by adding and subtracting $\gamma r_{\bar{\cX_t}}$:
\begin{align*}
    \E[r_{Y_t} r_{\cI_t}^\top]&=\E[(r_{Y_t}-\gamma r_{\bar{\cX_t}})  r_{\cI_t}^\top]+\gamma\E[r_{\bar{\cX_t}} r_{\cI_t}^\top]\\&=0+\gamma\E[r_{\bar{\cX_t}} r_{\cI_t}^\top].
\end{align*}
The first term is zero due to the conditional uncorrelation established in (i) and we can thus conclude that 
\begin{align*}
    \E[r_{Y_t} r_{\cI_t}^\top] \, \, W\, \,  \E[r_{\cI_t} r_{\bar{\cX_t}}^\top]&\bigg(\E[r_{\bar{\cX_t}} r_{\cI_t}^\top]\, \, W\, \, \E[r_{\cI_t} r_{\bar{\cX_t}}^\top]\bigg)^{-1}\\
    &=\gamma\E[r_{\bar{\cX_t}} r_{\cI_t}^\top] \, \, W\, \,  \E[r_{\cI_t} r_{\bar{\cX_t}}^\top]\bigg(\E[r_{\bar{\cX_t}} r_{\cI_t}^\top]\, \, W\, \, \E[r_{\cI_t} r_{\bar{\cX_t}}^\top]\bigg)^{-1}\\&=\gamma.
\end{align*}
\end{proof}

\subsection{Proof of \texorpdfstring{\cref{failure:naive_iv}}{}}
\failureNaive*
\begin{proof}
Since $\epsilon_t^Y$ and $H_{t-1}$ are both independent of $I_{t-2}$ (for $H_{t-1}$, this follows from \cref{assump:iv,thm:gmp}), it follows that 
\begin{align*}
    \E[Y_{t}I_{t-2}] &= \alpha_{Y,Y}\alpha_{I,I}\E[Y_{t-1}I_{t-3}] + \beta \E[X_{t-1} I_{t-2}] \\
    \implies \E[Y_{t}I_{t-2}] &= (1-\alpha_{Y,Y}\alpha_{I,I})^{-1}\beta \E[X_{t-1} I_{t-2}],
\end{align*}
where in the last step we use that by covariance stationarity we have $\E[Y_{t-1}I_{t-3}]=\E[Y_{t}I_{t-2}]$. 
The $\iv{X_{t-1}}{Y_t}(I_{t-2})$ moment equation is $\E[(Y_t - b X_{t-1})I_{t-2}] = 0$, which has the solution (because $d_I=d_X=d_Y=1$)
\begin{align*}
    b=\frac{\E[I_{t-2}Y_{t}]}{\E[I_{t-2}X_{t-1}]}.
\end{align*}
By plugging in the expression for $\E[Y_tI_{t-2}]$ above, we get $b = (1-\alpha_{Y,Y}\alpha_{I,I})^{-1}\beta$.
\end{proof}

\subsection{Proof of \texorpdfstring{\cref{prop:ts-civ}}{}}
\tsCiv*
\begin{proof}
By \cref{prop:dream-theorem}, it suffices for part (i) to show that \cref{assump:civ-d-sep-tce,assump:civ-descendants} are satisfied for $\cX_t \coloneqq \{X_{t-1}\}, \cI_t \coloneqq \{I_{t-2}\}$, $\cB_t$, and $Y_t$ in the marginalized graph $\G_{M_t}$ with $M_t \coloneqq \cX_t \cup \cI_t \cup \cB_t \cup \{Y_t\}$ (with $\cB_t$ being either of the two sets from the theorem), see \cref{fig:thm2_marginalized} \textit{left} and \textit{middle}. 
For either choice of $\cB_t$, $\cB_t$ is not a descendant of $X_{t-1}$ and $Y_t$, so \cref{assump:civ-descendants} is satisfied (see \cref{fig:thm2_marginalized} \textit{left} and \textit{middle}).
\begin{figure}
    \centering
    \begin{subfigure}{0.27\linewidth}
        \begin{tikzpicture}
        \node (I) at (2,4) {$I_{t-2}$};
        \node (X) at (2,2) {$X_{t-1}$};
        \node (Y) at (2,0) {$Y_{t}$};
        \node (I-) at (0,4) {$I_{t-3}$};
        \draw[->] (I) edge (X);
        \draw[->] (X) edge (Y);
        \draw[->] (I-) edge (I);
        \draw[<->] (X) edge[bend left=30] (Y);
        \draw[<->] (I-) edge[bend left=30] (X);
        \draw[->] (I-) edge (X);
        \draw[->] (I-) edge (Y);
        \draw[<->] (I-) edge[bend right=30] (Y);
        \end{tikzpicture}
    \end{subfigure}
    \begin{subfigure}{0.27\linewidth}
        \begin{tikzpicture}
        \node (I) at (2,4) {$I_{t-2}$};
        \node (X) at (2,2) {$X_{t-1}$};
        \node (Y) at (2,0) {$Y_{t}$};
        \node (I-) at (0,4) {$I_{t-3}$};
        \node (X-) at (0,2) {$X_{t-2}$};
        \node (Y-) at (0,0) {$Y_{t-1}$};
        \draw[->] (I) edge (X);
        \draw[->] (X) edge (Y);
        \draw[->] (I-) edge (I);
        \draw[->] (I-) edge (X-);
        \draw[->] (X-) edge (Y-);
        \draw[->] (Y-) edge (Y);
        \draw[->] (X-) edge (X);
        \draw[<->] (I-) edge[bend right=40] (Y-);
        \draw[<->] (I-) edge[bend right=30] (X-);
        \draw[<->] (Y-) edge[bend right=30] (Y);
        \draw[<->] (X-) edge[bend right=30] (Y-);
        \draw[<->] (X) edge[bend left=30] (Y);
        \draw[<->] (X-) edge[bend left=30] (X);
        \draw[<->] (X-) edge (Y);
        \draw[<->] (Y-) edge (X);
        \end{tikzpicture}
    \end{subfigure}
    \begin{subfigure}{0.4\linewidth}
        \begin{tikzpicture}
        \node (I) at (2,4) {$I_{t-2}$};
        \node (X) at (2,2) {$X_{t-1}$};
        \node (Y) at (2,0) {$Y_{t}$};
        \node (I-) at (0,4) {$I_{t-3}$};
        \node (I--) at (-3,4) {$I_{t-m-1}$};
        \node (dots) at (-1.5,4) {$\cdots$};
        \node (Y-) at (0,0) {$Y_{t-1}$};
        \draw[->] (I) edge (X);
        \draw[->] (X) edge (Y);
        \draw[->] (I-) edge (I);
        \draw[->] (I-) edge (Y-);
        \draw[->] (Y-) edge (Y);
        \draw[->] (I-) edge (X);
        \draw[<->] (Y-) edge[bend right=30] (Y);
        \draw[<->] (X) edge[bend left=30] (Y);
        \draw[<->] (Y-) edge (X);
        \draw[->] (I--) edge (dots);
        \draw[->] (dots) edge (I-);
        \draw[->] (I--) edge (Y-);
        \draw[->] (I--) edge (X);
        \draw[<->] (I--) edge[bend right=15] (X);
        \draw[<->] (I--) edge[bend right=15] (Y-);
        \end{tikzpicture}
    \end{subfigure}     \caption{(\textit{left}) Marginalization of the full time graph to nodes $I_{t-2}, I_{t-3}, X_{t-1}$ and $Y_t$.   (\textit{middle}) Marginalization of the full time graph to nodes $I_{t-2}, X_{t-1}$ and $Y_t$ and their lagged values. (\textit{right}) Marginalization to $m$ instrument nodes $I_{t-2}, \ldots, I_{t-m-1}$, and $X_{t-1}, Y_t,$ and $Y_{t-1}$.}
    \label{fig:thm2_marginalized}
\end{figure}
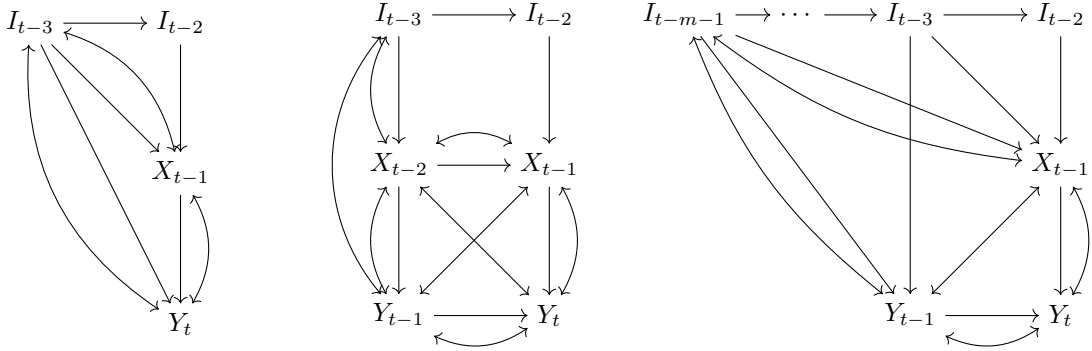
To show that \cref{assump:civ-d-sep-tce} holds, we argue that every path from $I_{t-2}$ to $Y_t$ is blocked by $I_{t-3}$ in $\G_{M{_t}(X_{t-1}\not\rightarrow Y_t)}$, the graph obtained from $\G_{M_t}$ by removing the directed edge from $X_{t-1}$ to $Y_t$. 
For either graph, we have that any path from $I_{t-2}$ to $Y_t$ either contains the non-collider $I_{t-3}$ or the collider $X_{t-1}$. 
Since $I_{t-3}$ is in the conditioning set $\cB_t$ (for either definition of $\cB_t$) and $(\{X_{t-1}\}\cup \DE{X_{t-1}})\cap \cB_t=\emptyset$, any path from $I_{t-2}$ to $Y_t$ in $\G_{M{_t}(X_{t-1}\not\rightarrow Y_t)}$ is blocked by $\cB_t$. 

Parts (ii) and (iii) follow directly from \cref{prop:dream-theorem}.
\end{proof}

\subsection{Proof of \texorpdfstring{\cref{prop:ts-niv}}{}}
\tsNiv*
\begin{proof}
{Let $\vecin{\beta,\alpha}$ be the total causal effect of $\vecin{X_{t-1}, \cZ_t}$ on $Y_t$.} By \cref{prop:dream-theorem}, it suffices for part (i) to show that \cref{assump:civ-d-sep-tce,assump:civ-descendants} are satisfied for $\cX_t \coloneqq \{X_{t-1}\}, \cZ_t, \cI_t$, and $Y_t$ in the marginalized graph $\G_{M_t}$ with $M_t \coloneqq \cX_t \cup \cI_t \cup \cZ_t \cup \{Y_t\}$ (see \cref{fig:thm2_marginalized} \textit{right}). 
Since $\cB=\emptyset$, \cref{assump:civ-descendants} is trivially satisfied. 
It remains to argue that \cref{assump:civ-d-sep} 
\cref{assump:civ-d-sep-tce}
is satisfied, that is, that $\cI_t$ is $d$-separated from $Y_t$ in the marginalized graph
$\G_{M_t(X_{t-1},Y_{t-1} \not \rightarrow Y_t)}$ obtained from $\G_{M_t}$ by removing the edges $Y_{t-1}\rightarrow Y_t$ and $X_{t-1}\rightarrow Y_t$. Let $s\in\{t-m-1,...,t-2\}$. Every path from $I_s$ to $Y_t$ must go through either the collider $\rightarrow X_{t-1}\leftarrow$ or the collider $\rightarrow Y_{t-1}\leftarrow$ and since the conditioning set is empty, those paths are blocked. 

Parts (ii) and (iii) follow directly from \cref{prop:dream-theorem}.
\end{proof}

\subsection{Proof of \texorpdfstring{\cref{prop:identifiability}}{}}
\label{sec:proof-jordan-forms}
\subsubsection*{A brief review of Jordan canonical forms}
If $M$ is an arbitrary square matrix of size $d \times d$, there exists a unique (up to row or column permutations) square invertible matrix $Q$ of the same dimension such that $M = QJQ^{-1}$ where $J$ is a $d \times d$ block diagonal matrix 
\begin{equation} \label{eq:jordann}
    J = J_{m_1}(\lambda_1) \oplus \ldots \oplus J_{m_k}(\lambda_k):= \text{diag}(J_{m_1}(\lambda_1), \ldots, J_{m_k}(\lambda_k))
\end{equation}
with each \emph{Jordan block} $J_{m_i}(\lambda_i)$ being an $m_i \times m_i$ matrix having one value $\lambda_i$ on the diagonal and ones on the superdiagonal (and zeros elsewhere): that is, for all $m \in \mathbb{N}_{>0}$,
\begin{equation*}
    J_m(\lambda) \coloneqq \mat{
    \lambda & 1 & & \\ 
    & \lambda & \ddots  & \\ 
    & & \ddots & 1 \\
    & & & \lambda}.
\end{equation*}
We sometimes write $J_m$ instead of $J_m(\lambda)$ and
call \cref{eq:jordann} the Jordan canonical form.
Jordan forms and the involved matrices satisfy the following properties \citep{horn1985matrix}.
\begin{itemize}
	\item Let $N_m$ (we simply write $N$ if the dimension is obvious) be the canonical nilpotent matrix of degree $m$, that is the $m\times m$-matrix with ones in the superdiagonal and zeroes elsewhere. Then $J_m(\lambda) = \lambda 1 + N$ and by the binomial formula $J^n_m = \sum_{i=0}^n \mat{n \\ i}\lambda^{n-i} N^{i}$. 
	\item Every diagonal value of a Jordan block is an eigenvalue of $M$ and for every eigenvalue $\lambda$ of $M$, there is at least one Jordan block with diagonal $\lambda$. There may however be more than one Jordan block for the same eigenvalue. 
	\item The geometric multiplicity of an eigenvalue $\lambda$ is the number of corresponding Jordan blocks
	\item The algebraic multiplicity of an eigenvalue $\lambda$ is the sum of the sizes $m_i$ of the corresponding Jordan blocks. 
	\item If $M$ is diagonalizable, all Jordan blocks are of size one, 
	which is equivalent to
	the algebraic and geometric multiplicities being equal.
\end{itemize}
\subsubsection*{Some Lemmata for the proof of \texorpdfstring{\cref{prop:identifiability}}{}}
We say that a vector $v$ in a $d$-dimensional vector space is \emph{cyclic of the $d\times d$ matrix $J$} if $v, Jv, \ldots, J^{d-1}v$ constitute a basis for the vector space. 
\begin{lemma}\label{lemma:cyclic_jordan}
    Let $J = J_{m_1}(\lambda_1) \oplus \ldots \oplus  J_{m_k}(\lambda_k)$ be a block Jordan form over $\mathbb{C}$ for a square matrix $J$.
    If two or more blocks have the same eigenvalue, no vector $v \in \mathbb{R}^{\sum_{i=1}^k m_i}$ is cyclic of $J$.
\end{lemma}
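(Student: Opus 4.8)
The plan is to argue by contradiction: I will show that if two Jordan blocks of $J$ share an eigenvalue, then the repeated eigenvalue supplies two linearly independent \emph{left} eigenvectors, and a suitable nonzero combination of these annihilates any purported cyclic subspace. Throughout I work over $\mathbb{C}$ (the Jordan form may be complex), and since the failure of cyclicity over $\mathbb{C}$ a fortiori rules out cyclic vectors in $\mathbb{R}^{d}$ with $d := \sum_{i=1}^k m_i$, it suffices to prove that no $v \in \mathbb{C}^d$ satisfies that $v, Jv, \ldots, J^{d-1}v$ span $\mathbb{C}^d$.

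First I would exhibit the left eigenvectors explicitly. Writing each block as $J_{m_i}(\lambda_i) = \lambda_i I + N_{m_i}$ with $N_{m_i}$ the canonical nilpotent matrix, the standard basis vector indexed by the last coordinate of the $i$-th block, call it $w^{(i)}$, satisfies $w^{(i)\top} N_{m_i} = 0$, and since $J$ is block diagonal this gives $w^{(i)\top} J = \lambda_i w^{(i)\top}$; that is, $w^{(i)}$ is a left eigenvector for $\lambda_i$. If two blocks, say the $i_1$-th and the $i_2$-th, carry the same eigenvalue $\lambda_0$, then $w_1 := w^{(i_1)}$ and $w_2 := w^{(i_2)}$ are distinct standard basis vectors, hence linearly independent left eigenvectors for $\lambda_0$.

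Next, suppose for contradiction that $v$ is cyclic, so that $\{J^n v\}_{n=0}^{d-1}$ spans $\mathbb{C}^d$. The single homogeneous equation $a\,(w_1^\top v) + b\,(w_2^\top v) = 0$ in the unknowns $(a,b) \in \mathbb{C}^2$ admits a nontrivial solution; set $w := a w_1 + b w_2$, which is nonzero because $w_1, w_2$ are linearly independent. Since $w^\top J = \lambda_0 w^\top$, iterating yields $w^\top J^n v = \lambda_0^{\,n}\,(w^\top v) = 0$ for every $n \ge 0$, in particular for $n = 0, \ldots, d-1$. Thus the nonzero linear functional $u \mapsto w^\top u$ vanishes on the spanning family $\{J^n v\}_{n=0}^{d-1}$, hence on all of $\mathbb{C}^d$, which forces $w = 0$ — a contradiction that completes the argument.

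I expect the main obstacle to be the bookkeeping in the first step: correctly identifying, for each Jordan block, a left eigenvector and verifying that two blocks with a common eigenvalue produce two linearly independent such vectors. Using the explicit last-coordinate basis vectors of each block makes this transparent and lets me avoid invoking minimal-polynomial or rational-canonical-form machinery, which would otherwise be the standard but heavier route (a cyclic vector exists precisely when the minimal and characteristic polynomials coincide, equivalently when every eigenvalue has geometric multiplicity one). A minor point to keep clean is the treatment of the ground field, which I handle by working over $\mathbb{C}$ from the outset so that the real statement follows immediately.
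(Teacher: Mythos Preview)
Your argument is correct. It is, however, a genuinely different route from the paper's proof. The paper reduces to the case $J = J_{m_1}(\lambda)\oplus J_{m_2}(\lambda)$ with $m_1\ge m_2$ and observes that the polynomial $p(x)=(x-\lambda)^{m_1}$ annihilates $J$; since $\deg p = m_1 < m_1+m_2 = d$, the identity $p(J)=0$ expresses $J^{m_1}$ as a linear combination of lower powers, so $v,Jv,\ldots,J^{d-1}v$ cannot be linearly independent for any $v$. In effect the paper argues via the minimal polynomial having degree strictly below $d$. You instead exhibit the obstruction dually: two blocks with the same eigenvalue yield two independent left eigenvectors, and a suitable combination is a nonzero functional that vanishes on the entire Krylov sequence. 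Your approach is more explicit and avoids any appeal to annihilating polynomials; the paper's approach is slightly more conceptual and makes transparent the link to the standard criterion (cyclic vector exists iff minimal and characteristic polynomials coincide), which you mention but deliberately sidestep. Both are short and complete.
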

\begin{proof}
It suffices to consider the case $J = J_{m_1}(\lambda) \oplus J_{m_2}(\lambda)$ where without loss of generality $m_1 \geq m_2$. For $J_{m_1}(\lambda) = \lambda 1 + N_{m_1}$ and $J_{m_2}(\lambda) = \lambda 1 + N_{m_2}$ the degree $m_1$ minimal polynomial $p(x) = (x-\lambda)^{m_1}$ annihilates $J$ such that $p(J) = 0$. Consequently $J^{m_1}$ can be written as a linear combination of $J^0, \ldots, J^{m_1 - 1}$. In particular $J^0v, \ldots, J^{m_1 + m_2 - 1}v$ cannot be linearly independent. 
\end{proof}

\begin{lemma}\label{lemma:cyclic_vector_jordan}
    Let 
    $J = J_{m_1}(\lambda_1) \oplus \ldots \oplus  J_{m_k}(\lambda_k)$ be a block Jordan form over $\mathbb{C}$ for a square matrix $J$,
    with each block corresponding to a distinct eigenvalue $\lambda_i$. Then $v \in \mathbb{C}^{\sum_{i=1}^k m_i}$ is a cyclic vector for $J$ if and only if for each $d = 1,\ldots,k$ the entry $v_{\sum_{i=1}^d m_i}$ is non-zero.
\end{lemma}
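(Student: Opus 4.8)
The plan is to reduce the cyclicity question to a statement about minimal polynomials, which behave nicely under the block-diagonal structure of $J$. Recall that a vector $v \in \mathbb{C}^n$ (with $n = \sum_{i=1}^k m_i$) is cyclic for $J$ precisely when $v, Jv, \ldots, J^{n-1}v$ are linearly independent. The first step I would take is the standard observation that the dimension of the cyclic subspace $\operatorname{span}\{v, Jv, J^2 v, \ldots\}$ equals the degree of the \emph{minimal polynomial $\mu_v$ of $v$}, i.e.\ the monic polynomial of least degree with $\mu_v(J)v = 0$: if $\deg \mu_v = r$, then $v, \ldots, J^{r-1}v$ are independent while $J^r v$ lies in their span. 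Hence $v$ is cyclic if and only if $\deg \mu_v = n$.

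The second step decouples $\mu_v$ across the Jordan blocks. Writing $v = (v^{(1)}, \ldots, v^{(k)})$ with $v^{(d)} \in \mathbb{C}^{m_d}$, and using that $J$ (hence every $p(J)$) is block diagonal, a polynomial $p$ annihilates $v$ if and only if $p(J_{m_d}(\lambda_d)) v^{(d)} = 0$ for every $d$, i.e.\ $\mu_{v^{(d)}} \mid p$ for each $d$, where $\mu_{v^{(d)}}$ denotes the minimal polynomial of $v^{(d)}$ relative to the single block $J_{m_d}(\lambda_d)$. This is where the hypothesis of distinct eigenvalues enters: since the $\lambda_d$ are distinct, the $\mu_{v^{(d)}}$ are powers of pairwise coprime linear factors, so ``divisible by each'' is equivalent to ``divisible by the product''. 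Therefore $\mu_v = \prod_{d=1}^k \mu_{v^{(d)}}$ and $\deg \mu_v = \sum_{d=1}^k \deg \mu_{v^{(d)}}$.

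The third step is the per-block computation. On block $d$ we have $J_{m_d}(\lambda_d) - \lambda_d I = N_{m_d}$, the nilpotent shift acting by $N_{m_d}(w_1, \ldots, w_{m_d}) = (w_2, \ldots, w_{m_d}, 0)$, so that $N_{m_d}^{\,j} v^{(d)} = 0$ exactly when $v^{(d)}_{j+1} = \cdots = v^{(d)}_{m_d} = 0$. Consequently $\mu_{v^{(d)}}(x) = (x-\lambda_d)^{e_d}$ with $e_d = \max\{i : v^{(d)}_i \neq 0\}$, and in particular $\deg \mu_{v^{(d)}} = e_d \le m_d$, with equality if and only if the bottom entry $v^{(d)}_{m_d} = v_{\sum_{i=1}^d m_i}$ is non-zero. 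Assembling, $\deg \mu_v = \sum_d e_d \le n$, with equality if and only if $e_d = m_d$ for every $d$; combined with the first step this is exactly the claimed criterion.

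I expect the only subtle point to be the factorization $\mu_v = \prod_d \mu_{v^{(d)}}$ in the second step, since this is precisely where distinctness of the eigenvalues is essential: without it—for instance in the situation of \cref{lemma:cyclic_jordan}—the block minimal polynomials share a factor and the product formula fails. The remaining ingredients, namely the identification of the cyclic-subspace dimension with $\deg \mu_v$ and the nilpotent-shift computation, are routine and purely mechanical.
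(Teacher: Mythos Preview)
Your proof is correct and takes a genuinely different route from the paper's. The paper argues the two directions separately: for necessity it observes that a zero in position $\sum_{i\le d} m_i$ forces a zero row in the matrix $[J^0 v,\ldots,J^{n-1}v]$; for sufficiency it proceeds by induction on the number of blocks $k$, using in the inductive step the annihilating polynomial $p(x)=\prod_{i=1}^k (x-\lambda_i)^{m_i}$ to isolate the new block and show explicitly that every vector lies in the span of $\{J^j v\}_{j=0}^{n-1}$. Your minimal-polynomial argument handles both directions at once and is more conceptual: the factorisation $\mu_v=\prod_d \mu_{v^{(d)}}$ via coprimality makes transparent exactly where the distinct-eigenvalue hypothesis enters, and the degree count $\sum_d e_d \le \sum_d m_d$ immediately gives the criterion. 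What the paper's approach buys is self-containment---it never invokes the equivalence ``$v$ cyclic $\Leftrightarrow \deg\mu_v = n$'', and the inductive step is constructive in that it exhibits the required linear combinations---whereas your approach is shorter and more structural but relies on that standard equivalence as a black box. Both are valid; yours is arguably the cleaner linear-algebra argument.
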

\begin{proof}
We first show by contraposition that if $v$ is cyclic for $J$, the corresponding entries will be non-zero. 
If it does not hold that for each $d = 1,\ldots,k$ the entry $v_{\sum_{i=1}^d m_i}$ is non-zero, we may, without loss of generality, assume that the last entry of $v$ is zero, such that $v = [u, 0]^\top$ for suitable $u\in \mathbb{C}^{-1 + \sum_{i=1}^k m_i}$. Denote $\lambda = \lambda_k$ the eigenvalue corresponding to the last Jordan block $J_{m_k}(\lambda_k)$; observe that the bottom row of $J^n$ is $[0, \ldots, 0, \lambda^n]$ for any power $n$, and so the last entry of $J^n v$ is $0$ for every $n$ and consequently the matrix $[J^0 v, J^1 v, \ldots, J^{(\sum_{i=1}^k m_i)-1} v]$ has a $0$-row. Consequently, $v$ is not cyclic of $J$. This shows that if $v$ is cyclic, the entries $v_{\sum_{i=1}^d m_i}$ are non-zero.

Now we show the other implication by induction over $k$. Assume first that $k=1$, i.e., $J= J_{m}(\lambda)$ consists of a single Jordan block. 
A vector $v = [v_1, \ldots, v_m]^\top$ is cyclic of $J$ if $v_m \neq 0$. Indeed, consider coefficients $a_0, \ldots, a_{m-1}$ such that $0 = \sum_{n=0}^{m-1}a_n J^{n}v$. Recall that $J^{n} = \sum_{i=0}^n \mat{n \\ i}\lambda^{n-i} N^{i}$, which implies
\begin{equation*}
    0 = \sum_{n=0}^{m-1}a_n J^{n}v 
    = \sum_{n=0}^{m-1}a_n \left(\sum_{i=0}^n \mat{n \\ i}\lambda^{n-i} N^{i}\right)v 
    = \sum_{i=0}^{m-1}\left(\sum_{n=i}^{m-1} a_n\mat{n\\i}\lambda^{n-i}\right)N^{i}v,
\end{equation*}
where in the final equality, we swap the order of summation, using that the pairs $(n,i)$ where $n \in \{0, \ldots, m-1\}$ and $i \in \{0, \ldots, n\}$, are the same as the pairs $(n,i)$ where $i \in \{0, \ldots, m-1\}$ and $n \in \{i, \ldots, m-1\}$.
Since $v_m \neq 0$ the collection $N^0 v, \ldots, N^{m-1}v$ are linearly independent: they form an upper-triangular matrix with $v_m$ on the diagonal. This implies that, in particular, the coefficient on $N^{m-1}v$ must be $0$. But this coefficient equals $\sum_{n=m-1}^{m-1} a_n\mat{n\\m-1}\lambda^{n-(m-1)} = a_{m-1}$, and so $a_{m-1}=0$. Substituting this into the coefficient on $N^{m-2}v$, one obtains $a_{m-2} = 0$ and so forth. 
Therefore, $a_n = 0$ for all $n$ and thus $J^0v, \ldots, J^{m-1}v$ are linearly independent, so $v$ is cyclic of $J$ if $v_m \neq 0$. 

Next assume that the induction hypothesis holds for any matrix with $k$ Jordan blocks $J = J_{m_1}(\lambda_1) \oplus \ldots \oplus J_{m_k}(\lambda_k)$ with distinct eigenvalues for each block and for %
all $v = [v_1, \ldots, v_{\sum_{i=1}^k m_i}]^\top$ where for every $d = 1,\ldots,k$: $v_{\sum_{i=1}^d m_i} \neq 0$. Now consider the additional Jordan block $D = J_{m_{k+1}}(\lambda_{k+1})$ where $\lambda_{k+1}\neq \lambda_1, \ldots \lambda_k$ and the vector $u$ whose last entry $u_{m_{k+1}}$ is non-zero, and let $\tilde J = J \oplus D$, $\tilde v = [v^\top, u^\top]^\top$.

Define the polynomial $p$ of degree $\sum_{i=1}^k m_i$ by $p(x) = (x-\lambda_1)^{m_1}(x-\lambda_2)^{m_2}\cdots(x-\lambda_k)^{m_k}$. Observe that $p(\lambda_{k+1}) \neq 0$ and so $p(D)$ is an upper triangular matrix with $p(\lambda_{k+1})$ on the diagonal. Hence the last entry of the vector $p(D)u$ is $p(\lambda_{k+1})u_{m_{k+1}}$ which is non-zero, and so $p(D)u$ is cyclic of $D$ (by the initial step of the induction proof). Further observe that $p$ annihilates each of the previous blocks because $J_{m_i}(\lambda_i) = \lambda_i 1 + N$ so $(J_{m_i}(\lambda_i) - \lambda_i 1)^{m_i} = N^{m_i} = 0$. Consequently,
\begin{align*}
p(\tilde{J}) &= p(J_{m_1}(\lambda_1)) \oplus \ldots \oplus p(J_{m_k}(\lambda_k)) \oplus p(D) \\
&= 0 \oplus \ldots \oplus 0 \oplus p(D),
\end{align*}
so that
\begin{align}\label{eq:annihilating_polynomial}
    p(\tilde{J}) \tilde{v} = [0^\top, \ldots, 0^\top, (p(D)u)^\top]^\top.
\end{align}
Now to show $\tilde{v}$ is cyclic of $\tilde{J}$, we take any vector $x \in \mathbb{C}^{\sum_{i=1}^k m_i}$ and $y \in \mathbb{C}^{m_{k+1}}$. Our aim is to show that $[x^\top, y^\top]^\top$ is in the span of $\tilde{J}^0 \tilde{v}, \ldots, \tilde{J}^{(\sum_{i=1}^{k+1} m_i) - 1}\tilde{v}$. Since $v$ is cyclic of $J$, $x$ can be expressed as a linear combination of $J^0v, \ldots, J^{\sum_{i=1}^k m_i-1}v$. Taking the same linear combination of $\tilde{J}^0\tilde{v}, \ldots, \tilde{J}^{\sum_{i=1}^k m_i-1} \tilde{v}$ yields the vector $[x^\top,z^\top]^\top$ for some $z \in \mathbb{C}^{m_{k+1}}$. Since $p(D)u$ is cyclic of $D$, we can write $y - z$ as a linear combination of $D^0p(D)u, D^1p(D)u, \ldots, D^{m_{k+1}-1}p(D)u$. It follows from \cref{eq:annihilating_polynomial} that by taking the same linear combination of $\tilde{J}^0p(\tilde{J})\tilde{v},\ldots, \tilde{J}^{m_{k+1}-1} p(\tilde{J})\tilde{v}$ 
one obtains $[0^\top, (y-z)^\top]^\top$. Since $p$ is a polynomial of degree $\sum_{i=1}^k m_i$, it follows that both $[x^\top, z^\top]^\top$ and $[0^\top, (y-z)^\top]^\top$ lie in the span of $\tilde{J}^0 \tilde{v}, \ldots, \tilde{J}^{\sum_{i=1}^{k+1} m_i -1}\tilde{v}$, and so does $[x^\top, y^\top]^\top$. 
Since $x$ and $y$ were arbitrary, the entire space is spanned, completing the induction step.
\end{proof}

\subsubsection*{Proof of \texorpdfstring{\cref{prop:identifiability}}{}}
\jordanIdentifiability*
\begin{proof}
First observe that 
\begin{equation*}
    \mat{X_t \\ Y_t} = A_I I_{t-1} + A_{XY}\mat{X_{t-1}\\Y_{t-1}} + \underbrace{\mat{\nu_X\\ \nu_Y} H_{t-1} + \epsilon^{X,Y}_t}_{\text{uncorrelated to I}}
\end{equation*}
and consequently: 
\begin{equation*}
\E\left[\mat{X_t\\ Y_t}\, I_t\right] = \E\left[\left(A_I I_{t-1} + A_{XY}\mat{X_{t-1}\\ Y_{t-1}}\right)\, \alpha_{I,I} I_{t-1} \right].
\end{equation*}
From this, we obtain
\begin{equation*}
    \E\left[\mat{X_t\\ Y_t}\, I_t\right] = \underbrace{\E[I_t^2]}_{=:v_I}\alpha_{I,I} \underbrace{(1-\alpha_{I,I}A_{XY})^{-1}}_{=:B^{-1}}A_I.
\end{equation*}
This expression is justified as $B$ is invertible. (Indeed, if $\alpha_{I,I} = 0$, this is trivial. If $\alpha_{I,I}\neq 0$, since $e_1^\top A_1 = \alpha_{I,I} e_1^\top$, where $A_1$ is coefficient matrix assumed in \cref{assump:iv} and $e_1 = \vecin{1, 0, \ldots, 0}^\top$ is the first unit vector, $\alpha_{I,I}$ is an eigenvalue of $A_1^T$ and thus of $A_1$ and in particular, by \cref{assump:varp},
it has absolute value strictly smaller than $1$. $B$ is degenerate if and only if $A_{XY} - \frac{1}{\alpha_{I,I}}$ is, but this would imply that $\frac{1}{\alpha_{I,I}}$ would be an eigenvalue of $A_{XY}$, but since the eigenvalues of $A_{XY}$ are also eigenvalues of $A_1$ (if $A_{XY}v = \lambda v$, then $A_1 \vecin{0, v^\top}^\top = \lambda\vecin{0, v^\top}^\top$) and belong to the interior of the unit circle, $\frac{1}{\alpha_{I,I}}$ cannot be an eigenvalue of $A_{XY}$.)

By performing the same expansion for $\E[\vecin{X_t^\top, Y_t^\top}^\top I_{t-j}]$ for $j\geq 1$ and plugging in the above, we obtain: 
\begin{align*}
\E\left(\mat{X_t\\ Y_t}\, I_{t-1}\right) &= A_Iv_I + A_{XY}\E\left(\mat{X_{t-1}\\ Y_{t-1}}\, I_{t-1}\right)= v_I B^{-1} A_I \notag\\
\E\left(\mat{X_t\\ Y_t}\, I_{t-2}\right) &= v_I \left[A_{XY} B^{-1} + \alpha_{I,I} 1\right] A_I \notag\\
\E\left(\mat{X_t\\ Y_t}\, I_{t-3}\right) &= v_I\left[A_{XY}^{2}B^{-1} + \alpha_{I,I} A_{XY}  + \alpha_{I,I}^2 1\right] A_I \notag
\end{align*}
and in general:
\begin{align}\label{eq:columns}
\E\left(\mat{X_t\\ Y_t}\, I_{t-1-j}\right) = v_I \left[ A_{XY}^j B^{-1} + \sum_{k=0}^{j-1} \alpha_{I,I}^{j-k} A_{XY}^k\right] A_I.
\end{align}
The columns (denote the $j$'th column by $\textrm{col}_j$) of $\Sigma:=\E[\vecin{X_{t-1},Y_{t-1}} \cI_t^\top]$
are exactly those given by \cref{eq:columns}. If we deduct $\alpha_{I,I} \text{col}_{j-1}$ from $\text{col}_{j}$ we obtain: 
\begin{align*}
	&v_I\left[\left(\sum_{k=0}^{j-1}\alpha_{I,I}^{j-k}A_{XY}^k + A_{XY}^jB^{-1}\right)-	\alpha_{I,I}\left(\sum_{k=0}^{j-2}\alpha_{I,I}^{j-1-k}A_{XY}^k + A_{XY}^{j-1}B^{-1}\right)\right]A_I \\
	&=v_I\left[\alpha_{I,I} A_{XY}^{j-1}+A_{XY}^{j}B^{-1} - \alpha_{I,I} A_{XY}^{j-1}B^{-1} \right]A_I\\
	&= v_I(1-\alpha_{I,I}^2)A_{XY}^{j} B^{-1}A_I.
\end{align*}
Since deducting columns from each other does not change the determinant, we can create a simpler matrix, $\Sigma_{\text{equiv}}$, with the same determinant: for $j \in\{2, \ldots, k\}$ we deduct $\alpha_{I,I}\text{col}_{j-1}$ from $\text{col}_j$ (starting with the largest $j$, that is first deducting $\alpha_{I,I}\text{col}_{k-1}$ from $\text{col}_k$, etc.), and obtain
\begin{equation*}
    \Sigma_{\text{equiv}}=v_I\vec{A_{XY}^0 B^{-1}A_I, \quad (1-\alpha_{I,I}^2)A_{XY}^1 B^{-1}A_I, \quad \ldots, \quad  (1-\alpha_{I,I}^2)A_{XY}^{d_X} B^{-1}A_I}.
\end{equation*}
By the Laplace expansion, removing $(1-\alpha_{I,I}^2)$ terms appearing in all but the first column scales the determinant by a factor $\frac{1}{(1-\alpha_{I,I}^2)^{d_X}}$, but it will not change its invertibility (from the requirement on the eigenvalues in \cref{assump:varp}, it follows that $1-\alpha_{I,I}^2 > 0$).
The same applies to $v_I = \E[I_t^2]$.
Hence, $\Sigma$ is invertible if and only if
\begin{equation*}
    \Sigma_{\text{equiv},2} \coloneqq\vec{A_{XY}^0 B^{-1}A_I, \quad A_{XY}^1 B^{-1}A_I, \quad \ldots, \quad A_{XY}^{d_X} B^{-1}A_I}
\end{equation*}
is invertible. Now observe that $B^{-1}$ commutes with $A_{XY}^j$. This follows because $BA_{XY} = (1-\alpha_{I,I} A_{XY})A_{XY} = A_{XY}(1-\alpha_{I,I} A_{XY}) = A_{XY}B$. This implies $B^{-1}A_{XY} = A_{XY}B^{-1}$, because for any matrix $M$ where $MB = BM$, it follows that
$$M = M B B^{-1} = B M B^{-1} \implies B^{-1}M = B^{-1}BMB^{-1} = MB^{-1}.$$
This implies that
\begin{equation*}
    \Sigma_{\text{equiv},2} = B^{-1} \underbrace{\vec{A_{XY}^0 A_I, \quad \ldots, \quad A_{XY}^{d_X} A_I}}_{=:\Sigma_{\text{equiv},3}}.
\end{equation*}
Since $B^{-1}$ is invertible, it has non-zero determinant, and again invertibility of $\Sigma$ is equivalent to invertibility of $\Sigma_{\text{equiv},3}$. 

Let $A_{XY} = QJQ^{-1}$ be the Jordan block factorization. Observe that 
\begin{align*}
	\left\{A_{XY}^0 A_I, \ldots, A_{XY}^{d_X} A_I\right\} = Q\left\{J^0 Q^{-1}A_I, \ldots, J^{d_X} Q^{-1}A_I\right\}.
\end{align*}
And finally, since $Q$ is invertible, invertibility of $\Sigma$ is equivalent to $Q^{-1}A_I$ being cyclic of $J$.
According to \cref{lemma:cyclic_jordan} if two or more Jordan blocks have the same eigenvalue, no vector can be cyclic, so in particular not $Q^{-1}A_I$. 
If on the contrary no eigenvalue is shared across Jordan blocks (equivalently, the geometric multiplicity of every eigenvalue is $1$), it follow from \cref{lemma:cyclic_vector_jordan} that $Q^{-1}A_I$ is a cyclic vector of $J$ 
if and only if the vector $Q^{-1}A_I$ is non-zero in the entries indexed by $\sum_{i=1}^d m_i$ for all $d = 1, \ldots, k$. Writing $A_I = Qa$ in the basis of the columns of $Q$ for some coefficient vector $a \in \mathbb{C}^{d_X+1}$, this means $Q^{-1}A_I$ is a cyclic vector for $J$ exactly when the coefficients $a_{\sum_{i=1}^d m_i}$ are non-zero for all $d = 1, \ldots, k$. This concludes the proof.
\end{proof}

\subsection{Proof of \texorpdfstring{\cref{thm:almost_sure_identify}}{}}
\almostSure*
\begin{proof}
    We check the conditions of \cref{prop:identifiability}. Since the entries are drawn from a density with respect to Lebesgues measure, the eigenvalues are almost surely distinct. Thus, taking into account the sparsity pattern,
     $A_{XY}$ can almost surely be diagonalized and the corresponding Jordan form has blocks of size one, all with distinct eigenvalues. 
    
    Also with probability one, $w=Q^{-1}\mat{\alpha_{X, I}\\0}$ does not have any zeroes:
    $Q$ is determined from $A_{XY}$ (so $Q$ depends only on $\alpha_{X,X}, \alpha_{Y,Y}$ and $\beta$), and so the probability that $[\alpha_{X,I}^\top, 0]^\top$ is orthogonal to any of the rows of $Q^{-1}$ is $0$.
\end{proof}
\subsection{Proof of \texorpdfstring{\cref{prop:identify_more_than_one_instrument}}{}}
\moreThanOneInstrument*
\begin{proof}
    Although the instruments $I^{(i)}, j \neq i$ are observed, we may treat them as latent, being part of the latent process $\tilde H_t \coloneqq (H_t, I_t^{(i, i \neq j)})$. By i), $I^{(j)}$ is independent of $\tilde H$. By ii), \cref{prop:ts-niv}, and \cref{prop:identifiability}, $\beta$ is identifiable in the reduced process $(I^{(j)}, X, Y)$, and the solution is therefore also unique in the full system $(I, X, Y)$.
\end{proof}

\subsection{Proof of \texorpdfstring{\cref{thm:CIV_general}}{}}
\tsCivgen*
\begin{proof}
To obtain $\E[\cov((Y_t - \beta X_{t-1})I_{t-2}^\top |\cB_t)] = 0$,
we show that
\begin{align*}
    Y_t-\beta X_{t-1}\indep I_{t-2}\,|\,\cB_t.
\end{align*}
Using that $Y_t-\beta X_{t-1}=\alpha_{Y,Y}Y_{t-1}+g(\varepsilon_t^Y,H_{t-1})$ it suffices to show that $Y_{t-1}\indep I_{t-2}|\cB_t$ and $(\varepsilon_t^Y,H_{t-1})\indep I_{t-2}|\cB_t$ since $g$ is measurable. For the first conditional independence we use that $Y_{t-1},\cB_t\subseteq\ND{I_{t-2}}$ and $\PA{(I_{t-2})}\subseteq\cB_t$ to conclude $Y_{t-1}\perp_d I_{t-2}|\cB_t$ in $\Gfull$ (Indeed, any path from $I_{t-2}$ to $Y_{t-1}$ leaves $I_{t-2}$ either through a parent of $I_{t-2}$ or must contain a collider that is a descendant of $I_{t-2}$.) By the global Markov property, contained in \cref{assump:CIV_general}, this implies $Y_{t-1}\indep I_{t-2}|\cB_t$. For the second conditional independence, we 
show $\varepsilon_t^Y\indep I_{t-2}|(\cB_t,H_{t-1})$ and $H_{t-1}\indep I_{t-2}|\cB_t$ and use the contraction property of conditional independence to obtain $(\varepsilon_t^Y,H_{t-1})\indep I_{t-2}|\cB_t$. Now, $H_{t-1}\indep I_{t-2}|\cB_t$ holds by the global Markov property since $H_{t-1}\in\ND{I_{t-2}}$ and $\PA{(I_{t-2})}\subseteq\cB_t$. To show $\varepsilon_t^Y\indep I_{t-2}|(\cB_t,H_{t-1})$, 
we use that by \cref{assump:CIV_general} $\varepsilon_t^Y$ is independent of any finite subset of $\ND{Y_t}$ in $\Gfull$. We have that $\cB_t \cup \{H_{t-1}\} \cup \{I_{t-2}\} \subseteq \ND{Y_t}$ 
and thus by weak union we get $\varepsilon_t^Y\indep I_{t-2}|(\cB_t,H_{t-1})$ as desired. This proves part (i).

Part (ii) follows because the moment equation $\E[\cov((Y_t - \beta X_{t-1})I_{t-2}^\top|\cB_t)] = 0$ 
is the same as in \cref{prop:ts-civ}, and so the rank requirement for identifiability is also the same.

To show part (iii), we have to show that 
\begin{align*}
    \hat\E[r_{\bY_t} r_{\bI_{t-2}}^\top] \, \, W\, \,  \hat\E[r_{\bI_{t-2}} r_{{\bX_{t-1}}}^\top]\bigg(\hat\E[r_{{\bX_{t-1}}} r_{\bI_{t-2}}^\top]\, \, W\, \, \hat\E[r_{\bI_{t-2}} r_{{\bX_{t-1}}}^\top]\bigg)^{-1}\xrightarrow[]{P}\beta.
\end{align*}
This is analogous to the argument in the proof of \cref{prop:dream-theorem}, except for that the convergence of empirical moments is now guaranteed by \cref{assump:empirical_moments} (instead of \cref{assump:varp}). Hence, using Slutsky's Theorem and rewriting $Y_t$ as in \cref{prop:dream-theorem}, gives the desired convergence.
\end{proof}

\subsection{Proof of \texorpdfstring{\cref{thm:NIV_general}}{}}
\tsNivgen*
\begin{proof}
By \cref{eq:structural-Y}, we have that $Y_t-\beta X_{t-1}-\alpha_{Y,Y} Y_{t-1}=g(\epsilon_t^Y, H_{t-1})$. Furthermore, by \cref{assump:inst_indep_of_noise} $(\varepsilon_t^Y,H_{t-1})\indep \cI_t$. Combining this (and using measureability of $g$) we obtain 
\begin{align*}
    Y_t-\beta X_{t-1}-\alpha_{Y,Y} Y_{t-1} \indep \cI_t.
\end{align*} 
Thus $\E[(Y_t-\beta X_{t-1}-\alpha_{Y,Y}Y_{t-1})\cI_t^\top]=0$ for $\alpha=\alpha_{Y,Y}$, and part (i) follows. 

Part (ii) follows because the moment equation $\E[(Y_t-\beta X_{t-1}-\alpha_{Y,Y}Y_{t-1})\cI_t^\top]=0$ is the same as in \cref{prop:ts-niv}, and so the rank requirement for identifiability is also the same.

To show part (iii), let
$\bar{\bX}_{t-1}:=[\begin{matrix}\bX_{t-1}{^\top},&\bY_{t-1}{^\top}\end{matrix}]^\top$. We have to show that
\begin{align*}
    \hat\E[\bY_t \bm{\cI}_t^\top] \, \, W\, \,  \hat\E[\bm{\cI}_t {\bar{\bX}_{t-1}}^\top]\bigg(\hat\E[{\bar{\bX}_{t-1}} \bm{\cI}_t^\top]\, \, W\, \, \hat\E[\bm{\cI}_t {\bar{\bX}_{t-1}}^\top]\bigg)^{-1}\convP\gamma
\end{align*}
with $\gamma:=[\begin{matrix}\beta ,& \alpha_{Y,Y}\end{matrix}]$. \cref{assump:empirical_moments} ensures convergence of the empirical moments to population moments, and using Slutsky's Theorem in combination with the expression for $Y_t$, the statement follows as in the proof of \cref{prop:dream-theorem}. 
\end{proof}

\subsection{Proof of \texorpdfstring{\cref{prop:optimal_prediction}}{}}
\optimalPrediction*
\begin{proof}
Recall that $\beta$ and $\alpha_{Y,Y}, \alpha_{Y,H}$ denote the causal effects from $X_t, Y_t$ and $H_t$, respectively, to $Y_{t+1}$. We have
\begin{align*}
&\min_{a, b, c}
\E_{do(X_{t} := x)} 
\left(Y_{t+1} - \sum_{j=0}^\ell a_j Y_{t-j} - bX_{t} - \sum_{k=1}^m c_k X_{t-k}\right)^2\\
=\,& 
\min_{a, b, c}
\E_{do(X_{t} := x)} 
\left(\{\beta X_t + \alpha_{Y,Y}Y_t + \alpha_{Y,H}H_t + \epsilon^{Y}_{t+1}\} - \sum_{j=0}^\ell a_j Y_{t-j} - bX_{t} - \sum_{k=1}^m c_k X_{t-k}\right)^2\\
=\,& 
\min_{a,b,c}
\E_{do(X_{t} := x)} 
(\beta X_{t} - b X_{t})^2\\
& \qquad
+ 
\E_{do(X_{t} := x)} (\beta X_{t} - b X_{t})
\left(\alpha_{Y,Y} Y_{t} + \alpha_{Y,H} H_{t} + \epsilon^{Y}_{t+1} - \sum_{j=0}^\ell a_j Y_{t-j} - \sum_{k=1}^m c_k X_{t-k}\right)\\
& \qquad
+
\E_{do(X_{t} := x)} \left(\alpha_{Y,Y} Y_{t} + \alpha_{Y,H} H_{t} 
 + \epsilon^{Y}_{t+1} - \sum_{j=0}^\ell a_j Y_{t-j} - \sum_{k=1}^m c_k X_{t-k}\right)^2\\
=\,& 
\min_{a,b,c}
\E_{do(X_{t} := x)} 
(\beta X_{t} - b X_{t})^2\\
& \qquad
+ 
\E_{do(X_{t} := x)} (\beta X_{t} - b X_{t})
\left(\alpha_{Y,Y} Y_{t} + \alpha_{Y,H} H_{t} + \epsilon^{Y}_{t+1} - \sum_{j=0}^\ell a_j Y_{t-j} - \sum_{k=1}^m c_k X_{t-k}\right)\\
& \qquad
+
\E \left(\alpha_{Y,Y} Y_{t} + \alpha_{Y,H} H_{t} 
 + \epsilon^{Y}_{t+1} - \sum_{j=0}^\ell a_j Y_{t-j} - \sum_{k=1}^m c_k X_{t-k}\right)^2\\
=\,& 
\min_{a,b,c}
\E_{do(X_{t} := x)} 
(\beta X_{t} - b X_{t})^2
+ 
(\beta x - b x)\E
\left(\alpha_{Y,Y} Y_{t} + \alpha_{Y,H} H_{t} + \epsilon^{Y}_{t+1} - \sum_{j=0}^\ell a_j Y_{t-j} - \sum_{k=1}^m c_k X_{t-k}\right)\\
& \qquad
+
\E \left(\alpha_{Y,Y} Y_{t} + \alpha_{Y,H} H_{t} 
 + \epsilon^{Y}_{t+1} - \sum_{j=0}^\ell a_j Y_{t-j} - \sum_{k=1}^m c_k X_{t-k}\right)^2\\
=\,& 
\min_{a,c} \E\left(Y_{t+1} - \beta X_t - \sum_{j=0}^\ell a_j Y_{t-j} - \sum_{k=1}^m c_k X_{t-k}\right)^2.
\end{align*}
Here, the third and fourth equality signs hold because 
the joint distribution of the variables 
$H_{t}$, $\epsilon^{Y}_{t+1}$, $Y_{t}, \ldots, Y_{t-\ell}$, and $X_{t-1}, \ldots, X_{t-m}$
is the same under the observational and the 
 interventional distribution -- as the variables are all non-descendants of $X_t$.
Further, the minimum is obtained for
$b = \beta$ and $a$ and $c$ being the coefficients after
(linearly) projecting 
$Y_{t+1} - \beta X_t$
on the space spanned by 
 $Y_t, \ldots, Y_{t-\ell}$
 and
$X_{t-1}, \ldots, X_{t-m}$.
\end{proof}

{
\subsection{Proof of \texorpdfstring{\cref{lemma:ts-civ-inhomogeneous}}{}} \label{sec:newproooof1}
\tsCivInhomogeneous*
\begin{proof}
    The full-time graph of $S$ is the same as that in \cref{prop:ts-civ}, and therefore it follows by the same proof as of \cref{prop:ts-civ} that $\E[\cov(Y_t - \beta_t X_{t-1}, I_{t-2}|\cB_t)] = 0$. From this, one can derive:
    \begin{align*}
        0 &= \E[\cov(Y_t - \beta_t X_{t-1}, I_{t-2}|\cB_t)] \\
        &= \E[\cov(Y_t, I_{t-2}|\cB_t)] -  \E[\cov(\beta_t X_{t-1}, I_{t-2}|\cB_t)] \\
        &= \E[\cov(Y_t, I_{t-2}|\cB_t)] -  \E[\beta_t]\E[\cov(X_{t-1}, I_{t-2}|\cB_t)] \\
        &= \E[\cov(Y_t - \E[\beta_t] X_{t-1}, I_{t-2}|\cB_t)].
    \end{align*}
    Here the third equality follows because
    \begin{align*}
        \E[\cov(\beta_t X_{t-1}, I_{t-2}|\cB_t)] & = \E\bigg[\E[\beta_t X_{t-1}I_{t-2}^\top |\cB_t] - \E[\beta_t X_{t-1}|\cB_t]\E[I_{t-2}^\top |\cB_t]\bigg]\\
        &=\E\bigg[\E[\beta_t|\cB_t]\E[ X_{t-1}I_{t-2}^\top |\cB_t] - \E[\beta_t |\cB_t]\E[X_{t-1}|\cB_t]\E[I_{t-2}^\top |\cB_t]\bigg] \\
        &=\E\bigg[\E[\beta_t]\E[ X_{t-1}I_{t-2}^\top |\cB_t] - \E[\beta_t]\E[X_{t-1}|\cB_t]\E[I_{t-2}^\top |\cB_t]\bigg] \\
        &=\E[\beta_t]\E[\cov(X_{t-1}, I_{t-2}|\cB_t)],
    \end{align*}
    where we use the independence $\beta_t\indep(X_{t-1}, I_{t-2},\cB_t)$ to conclude $\beta_t \indep (X_{t-1}, I_{t-2}) |\cB_t$ and $\beta_t \indep X_{t-1} |\cB_t$ (using the `Weak Union' property, \citet{lauritzen1996graphical}) and $\beta_t \indep \cB_t$. 
    
    This shows statement \textit{(i)}, and statement \textit{(ii)} follows by the same arguments as the proof of \cref{prop:ts-civ}.
\end{proof}

\subsection{Proof of \texorpdfstring{\cref{lemma:ts-niv-inhomogeneous}}{}} \label{sec:newproooof2}
\tsNivInhomogeneous*
\begin{proof}
    The full-time graph of $S$ is the same as that in \cref{prop:ts-niv}, and therefore it follows by the same proof as of \cref{prop:ts-niv} that there exists $\alpha$ such that $\cov(Y_t - \beta_t X_{t-1} - \alpha \cZ_t, \cI_t) = 0$. From this, one can derive:
    \begin{align*}
        0 &= \cov(Y_t - \beta_t X_{t-1} - \alpha \cZ_t, I_{t-2}) \\
        &= \cov(Y_t  - \alpha \cZ_t, I_{t-2}) - \cov(\beta_t X_{t-1}, I_{t-2}) \\
        &= \cov(Y_t  - \alpha \cZ_t, I_{t-2}) - \E[\beta_t] \cov(X_{t-1}, I_{t-2}) \\
        &= \cov(Y_t - \E[\beta_t] X_{t-1} - \alpha \cZ_t, I_{t-2}).
    \end{align*}
    Here, the third equality follows because
    \begin{align*}
        \cov(\beta_t X_{t-1}, I_{t-2}) &= \E[\beta_t X_{t-1} I_{t-2}^\top] - \E[\beta_t X_{t-1}]\E[I_{t-2}^\top] \\
        &= \E[\beta_t]\E[X_{t-1} I_{t-2}^\top] - \E[\beta_t]\E[X_{t-1}]\E[I_{t-2}^\top] \\
        &= \E[\beta_t]\cov(X_{t-1}, I_{t-2}).
    \end{align*}
    
    This shows statement \textit{(i)}, and statement \textit{(ii)} follows by the same arguments as the proof of \cref{prop:ts-niv}.
\end{proof}
}

\end{document}